\documentclass[11pt]{article}
\usepackage{amsfonts}
\usepackage{jfe}          

\usepackage{bm}

\usepackage[title]{appendix}
\usepackage{multirow}
\usepackage{graphicx}
\usepackage{array}
\usepackage{color}
\usepackage{mathtools,mathrsfs}
\usepackage{tabularx,tabulary}
\usepackage{verbatim}
\usepackage{dsfont}
\usepackage{xcolor,colortbl}
\usepackage[sort]{natbib}
\usepackage[T1]{fontenc}
\usepackage{eurosym} 
\usepackage{lmodern}
\usepackage[normalem]{ulem}
\usepackage{bbm}
\usepackage{hyperref} 
\DeclareGraphicsExtensions{.jpg,.eps,.pdf} 
\usepackage{enumerate}
\usepackage{setspace}
\usepackage{amsmath,amssymb,amsfonts,amsthm} 
\textwidth 16.5cm
\textheight 22cm
\topmargin -1.5cm
\evensidemargin -0.2cm
\oddsidemargin -0.2cm

\newtheorem{theorem}{Theorem}

\newtheorem{definition}[theorem]{Definition}

\newtheorem{lemma}[theorem]{Lemma}

\newtheorem{proposition}[theorem]{Proposition}






\theoremstyle{remark}

\definecolor{bleu}{cmyk}{1,0.4,0,0}

\newcommand{\E}{\mathbb{E}}

\definecolor{Gray}{gray}{0.95}
\definecolor{White}{gray}{1.00}

\usepackage{xcolor}
\hypersetup{
citebordercolor=white,
filebordercolor=red,
linkbordercolor=blue
}
\hypersetup{
colorlinks,
citecolor=blue,
linkcolor=red,
urlcolor=blue}

\newcommand{\1}{\mathds{1}}
\newcolumntype{a}{>{\columncolor{Gray}}c}
\newcommand{\mtrx}[1]{\bm{#1}}
\DeclarePairedDelimiter{\abs}{\lvert}{\rvert}
\DeclarePairedDelimiter{\norm}{\lVert}{\rVert}



\begin{document}


\title{Dirichlet policies for reinforced factor portfolios}

\author{Eric Andr\'e\thanks{EMLYON Business School, 23 avenue Guy de Collongue, 69130 Ecully, FRANCE. E-mail: eandre@em-lyon.com} \and Guillaume Coqueret\thanks{EMLYON Business School, 23 avenue Guy de Collongue, 69130 Ecully, FRANCE. E-mail: coqueret@em-lyon.com}}
\maketitle
\vspace{-6mm}
\begin{center}
\large
\end{center}
\begin{abstract}
This article aims to combine factor investing and reinforcement learning (RL). The agent learns through sequential random allocations which rely on firms' characteristics. Using Dirichlet distributions as the driving policy, we derive closed forms for the policy gradients and analytical properties of the performance measure. This enables the implementation of REINFORCE methods, which we perform on a large dataset of US equities. Across a large range of parametric choices, our result indicates that RL-based portfolios are very close to the equally-weighted (1/$N$) allocation. This implies that the agent learns to be \textit{agnostic} with regard to factors, which can partly be explained by cross-sectional regressions showing a strong time variation in the relationship between returns and firm characteristics.

\end{abstract}
\textbf{Keywords}: Reinforcement learning; Factor investing; Equally-weighted portfolio; Asset pricing.  
\textbf{JEL classifications}: C38; G11; G12


\section{Introduction}

The traditional econometric approaches to asset pricing have recently seen a surge in competition from machine learning tools. A flow of recent studies\footnote{See for instance \cite{chen2019deep}, \cite{feng2019deep}, \cite{gu2020autoencoder} and \cite{gu2020empirical}.} have shown the benefits that can be reaped when switching from the conventional linear models to more complex structures such as tree methods or neural networks. Supervised learning algorithms help the econometrician link financial performance (asset returns) to key indicators such as firm characteristics (\cite{gu2020empirical}) or latent factors (\cite{kelly2019characteristics}, \cite{lettau2020estimating, lettau2020factors}). Based on large datasets, these black boxes reveal intricate correlations between variables that are not captured by standard linear models, thereby often improving cross-sectional fit. Depending on the quality of the sample and the algorithm's architecture, these correlations may nonetheless be spurious. They are likely to hold out-of-sample only if they reflect genuine causality relationships, which are much harder to uncover.\footnote{We refer to \cite{pearl2009causality} for an exhaustive treatment on causal models and to \cite{arjovsky2019invariant} and \cite{pfister2019invariant} for recent perspectives on causality in machine learning models. } 

Beyond supervised learning, researchers have resorted to another powerful family of techniques to understand and predict returns: reinforcement learning (RL). Contributions range from early tests of \cite{neuneier1996optimal} and \cite{moody1998performance} to the more recent work of \cite{deng2016deep}, \cite{li2019deep} and \cite{wang2020continuous}.\footnote{These references are by no means an exhaustive account of the literature on this subject. On the arXiv repository only, more than 20 papers including the terms "reinforcement learning" in their title have been posted in the \textit{quantitative finance} (qfin) section in 2019 only. We also direct to the survey \cite{sato2019model} for more references on RL applied to portfolio optimization.} Two common threads between these studies is that they often originate from the field of computer science and that they work with price data only (at high frequencies most of the time). To the best of our knowledge, there are no contributions that seek to harvest the information contained in firm specific attributes and combine it with reinforcement routines to produce factor-based portfolios. One goal of the present paper is to fill this void.

The main challenge when implementing RL algorithms for the purpose of trading is the modelling of the environment. The infinite dimensions of the state space (firm attributes) and action space (investment policies) make many approaches relying on Markov decision processes (MPD) inadequate. This is because a focal tool in MDP analysis is the value function, which measures the expected gain or reward for any given action or state. In the framework of factor investing, these states and actions cannot be properly discretized without either making overly simplistic assumptions, or rendering the computations intractable.

In order to bypass these technical hurdles, one solution is to resort to the so-called policy gradient approach. In this case, the decisions are made according to a parametric function which probabilistically determines which actions (i.e., investments) to perform. The agent then learns by sequentially updating the policy parameters after receiving flows of rewards (e.g., returns). Most of the time, the policy is modelled by neural networks (NNs), which is a convenient choice, given their flexibility. It is for instance the option chosen by \cite{deng2016deep} and \cite{zhang2019deep}. One drawback of general purpose NNs is that their output cannot be directly translated into portfolio weights, because it violates the budget constraint. The core idea of the present paper is to resort to a special class of distributions that circumvent this issue by directly yielding the investment allocations. 

Indeed, Dirichlet distributions have the opportune property of being defined on simplexes, which makes them appropriate to model long-only portfolio compositions. In fact, Dirichlet distributions have already been used in related studies. \cite{cover1996universal} find that two such distributions yield portfolio allocations with interesting theoretical properties. More recently, \cite{le2019portfolio} rely on Dirichlet distributions to derive robust estimates of the efficient frontier, and \cite{korsos2013dirichlet} uses them to estimate the composition of hedge fund portfolio holdings. In a similar vein, \cite{sosnovskiy2015financial} shows that Dirichlet laws can be used to approximate the distribution of stock weights in aggregate market indices. 

One of the simple but novel contributions of the paper is to link the RL policy to firm-specific attributes. To this purpose, the inspiration comes from earlier work on characteristics-based investing.\footnote{See, e.g., \cite{haugen1996commonality}, \cite{daniel1997evidence}, \cite{brandt2009parametric}, \cite{hjalmarsson2012characteristic} and \cite{ammann2016characteristics}). Our approach is closer in spirit to the most recent of these references.} The idea is to map a linear combination of the characteristics into portfolio weights. While the traditional models aim to optimize expected utility functions, our approach seeks to maximize expected gains. The simplest definition of gain is a portfolio return but it is possible to adjust it to risk via the sequential Sharpe ratio computations presented in \cite{moody1998performance}. 

Our contribution is threefold. First, we propose a tractable formulation of the reinforcement learning problem when designing portfolio allocations based on firm specific attributes. To the best of our knowledge our approach is the first to articulate the combination between factor investing and RL in such a simple fashion. Second, we employ our methodology on a large dataset of US equities. Our results are qualitatively homogeneous, despite the numerous degrees of freedom in the implementation, and they indicate that the agent should be better of by \textit{ignoring} the informational content provided by firm-specific attributes. Finally, we provide two attempts to explain our results. One direction is linked to the pricing ability of characteristics, which, we find, is quantitatively weak. The second angle stems from an asset pricing model in which the noise of loadings plays a crucial role. We compare the reinforcement learning process to a simple factor-based quadratic optimization. The two are hard to reconcile, except for one salient stylized fact: both methods recognize a strong common factor within the cross-section of stock returns. Consequently, portfolios allocate almost uniformly across assets, except after major market crashes.

The paper is structured as follows. In Section \ref{sec:1}, we lay out the theoretical foundations of RL-based factor investing. Section \ref{sec:2} is dedicated to a detailed presentation of the dataset and the implementation protocol. Our empirical results are outlined in Section \ref{sec:3}. Section \ref{sec:4} provides explanatory perspectives based on the pricing ability of firm attributes and on a simple asset pricing model. Finally, \ref{sec:5} concludes.

\section{Reinforcement learning meets factor investing}
\label{sec:1}
This section is dedicated to the presentation of all concepts and theoretical apparatus developed and required in the paper. 

\subsection{The framework}
We study a dynamic discrete time investment problem with finite horizon $T$. The investable universe consists of $N$ assets indexed by $n=1,\dots, N$. There are $K$ characteristics associated to each asset. We refer to section \ref{sec:data} for a list of those retained in the empirical section of this study. To allow for a bias or non-zero intercept in our model, we add a constant characteristic equal to 1. Therefore, at time $t \in \{0,1,\dots,T\}$, asset $n$ is described by a $(K+1)$-dimensional vector $(\mtrx{x}_{t,n})^\intercal = [x_{t,n}^{(0)}, \dotsc, x_{t,n}^{(k)}, \dotsc, x_{t,n}^{(K)}]$ where $x_{t,n}^{(0)}=1$ is an indicator that is kept fixed through the cross-section of assets.

Among these characteristics are $p_{t,n}$, the time-$t$ price of asset $n$, and $d_{t,n}$, the dividend per share issued between time $t-1$ and $t$. The total return of asset $n$ between $t-1$ and $t$ is therefore $r_{t,n} = (p_{t,n}+d_{t,n})/p_{t-1,n}-1$. In our setting, we can work with price returns (omitting dividends) or total returns interchangeably, as they are simply two different drivers of rewards for the investor.

We use the bold notations $\mtrx{r}_{t}$ for the vector of the returns of all assets at time $t$ and $\mtrx{X}_t$ for the $N\times (K+1)$ matrix of characteristics at time $t$ whose $n$-th row is $(\mtrx{x}_{t,n})^\intercal$. We denote by $\mathcal{M}$ the set of these $N\times (K+1)$ matrices whose first column is $\1$, the vector of 1.

The agent posits a factor model for the returns of the assets
\begin{equation}\label{eq:FactorModel}
\mtrx{r}_{t+1} = f(\mtrx{X}_{t}) + \mtrx{\epsilon}_{t+1}
\end{equation}
where $f$ is a function from $\mathcal{M}$ to $\mathbb{R}^{N}$ and $\mtrx{\epsilon}_{t}$ is an i.i.d. White Noise with mean vector equal to 0 and a diagonal  correlation matrix $\mtrx{\Sigma}_{\epsilon}$. The diagonal elements of $\mtrx{\Sigma}_{\epsilon}$ are $\sigma_{n}^{2}$, the idiosyncratic variances of the assets. Let $\bm{P}_\epsilon$ denote the law of the r.v. $\mtrx{\epsilon}_{t}$, defined on $\mathbb{R}^{N}$.

A standard assumption in the finance literature is that the function $f$ is a linear map that can be represented by a $(K+1)$ vector $\mtrx{\beta}$, that is $f(\mtrx{X}_{t}) = \mtrx{X}_{t} \mtrx{\beta}$. Note however that we do not need this assumption in our study.

\subsection{Markov Decision Process}
We assume that the investment problem can be formulated as a finite horizon Markov Decision Process (MDP). At each time $t$, the agent observes the state $S_t$ of the system (the characteristics of the investable universe and of her portfolio) and then takes an action $A_t$ (a choice of a composition for her portfolio). Finally, the agent obtains a time-($t+1$) reward, which is linked to the return of her portfolio between $t$ and $t+1$ and the system transition to the next state. We now describe formally this MDP.\footnote{See, e.g., \citet{bauerle2011markov}.}

\paragraph{Actions.}
The \emph{action} $A_{t}$ taken by the agent at time $t$ is the choice of a vector $\mtrx{w}_{t} \in \mathbb{R}^{N}$, which is the composition of her portfolio. We consider the case where there is no short selling. The restriction to positive weights is realistic since most asset managers have long-only constraints. This is typically the case of institutional investors (see \citealp{koijen2019demand}). Therefore, $\mtrx{w}_{t}$ must be in the $N-1$ simplex $\Delta$ (we omit the dimension superscript to lighten notations), which is then the \emph{action space}:
\begin{equation}
\Delta = \left\{ \left(w_{1},\dotsc,w_{N}\right)\in\mathbb{R}^{N} \colon \sum_{n=1}^{N}w_{n} = 1 \textrm{ and } w_{n} \geq 0 \textrm{ for all } n \right\}.  \label{eq:simplex}
\end{equation}
Seen as a subset of $\mathbb{R}^{N-1}$, it is endowed with the inherited Borel $\sigma$-algebra that we denote $\mathscr{B}(\Delta)$ .

\paragraph{Rewards.}
The agent's objective is to maximise her utility of, or some performance measure of, the terminal value of her portfolio $V_T = V_0 + \sum_{t=0}^{T-1} V_{t} \rho_{t+1}$, where $\rho_{t+1} \coloneqq \mtrx{w}_{t}^\intercal \mtrx{r}_{t+1}$ is the return of her portfolio between $t$ and $t+1$. We will consider two cases: a \emph{risk insensitive} agent who seeks to maximize her profit and a \emph{risk sensitive} agent whose goal is to maximize the differential Sharpe Ratio proposed by \citet{moody1998performance}.

In the first case, the agent's reward at time $t$ for the action taken at time $t-1$ is simply
\begin{equation}
R_{t} = \rho_{t}
\end{equation}
In the second case, it is
\begin{equation} \label{eq:SR}
R_t \coloneqq \mathrm{SR}_t=\frac{\hat\mu_t}{K_\kappa\sqrt{\hat\sigma^2_t-\hat\mu_t^2}}
\end{equation}
with
\begin{equation*}
\begin{array}{lll}
\hat\mu_t = \kappa \rho_{t} + (1-\kappa) \hat\mu_{t-1} & \text{exponentially weighted (EW) moving average of returns}; \\
\hat\sigma^2_t = \kappa \rho_{t}^2 + (1-\kappa) \hat\sigma^2_{t-1} & \text{EW moving average of squared returns}; \\
K_\kappa =\sqrt{\frac{1-\kappa/2}{1-\kappa}} & \text{scaling factor}.
\end{array}
\end{equation*}

We draw attention to the use of lowercase $r_t$ for individual asset returns, uppercase $R_t$ for rewards, and $\rho_{t}$ for portfolio returns. These are closely linked, but not equal.

\paragraph{States.}
As we want the agent to choose an action using the asset characteristics, $\mtrx{X}_{t}$ must be included in the \emph{state}. The chosen reward defines which data must be added to the state: for the risk-insensitive agent $\rho_{t}$ is enough, for the differential Sharpe Ratio, we should also include $\hat\mu_t$ and $\hat\sigma^2_t$. In the latter case, it is still $\rho_{t}$ that drives deterministically the additional data, therefore we will henceforth consider without loss of generality the case where a state corresponds to the couple
\begin{equation*}
S_{t} = (\rho_t, \mtrx{X}_{t})
\end{equation*}

The \emph{state space} is then $\mathcal{S} = \mathbb{R} \times \mathcal{M}$. The set $\mathcal{M}$, is a subset of the space of the $N\times (K+1)$ matrices which can be identified with $\mathbb{R}^{N\times (K+1)}$. It is therefore endowed with the inherited Borel $\sigma$-algebra that we denote $\mathscr{B}(\mathcal{M})$. The state space itself is endowed with the product $\sigma$-algebra $\mathscr{B}(\mathcal{S}) = \mathscr{B}(\mathbb{R}) \otimes \mathscr{B}(\mathcal{M})$.

\paragraph{Episodes.}
The agent having observed the state of the system takes an action, then the system transitions to the next state and the agent receives a reward. A given realization of this interaction between the agent and her environment is an \emph{episode}:
\begin{equation*}
S_{0},A_{0},R_{1},S_{1},A_{1},R_{2},S_{2},A_{2},\dotsc,S_{T-1},A_{T-1},R_{T},S_{T}
\end{equation*}
At any date $t$, the \emph{cumulative discounted return} can be computed. It is the sum of the future rewards in this episode, possibly discounted at a discount rate $0 < \gamma \le 1$:
\begin{equation*}
G_{t} = \sum_{l=1}^{T-t} \gamma^{l-1} R_{t+l} = R_{t+1} + \gamma G_{t+1}.
\end{equation*}

\paragraph{Transition probability.}
How the system transitions to the next state $S_{t+1}$ given some previous state $S_{t}$ and action $A_{t}$ is given by the \emph{state transition probability}
\begin{equation*}
\mathrm{Prob}\left( S_{t+1} \in B \mid S_{t}, A_{t} \right), \quad B \in \mathscr{B}(\mathcal{S}).
\end{equation*}
We assume that the matrix of asset characteristics is a Markov process whose evolution is driven by the transition probabilities
\begin{equation*}
\mathbb{P}_{t}^{u} \left( M \mid \mtrx{X} \right) = \mathrm{Prob} \left( \mtrx{X}_{u} \in M \mid \mtrx{X}_t=\mtrx{X} \right), \qquad u>t,
\end{equation*}
which are independent of the value of the portfolio and of the choice of the action. Therefore, if $B$ is a Cartesian product of Borel sets, $B = C \times M$, where $C \in \mathscr{B}(\mathbb{R})$ and $M \in \mathscr{B}(\mathcal{M})$, we obtain the factorization
\begin{equation*}
\mathrm{Prob}\left( S_{t+1} \in B \mid S_{t}, A_{t} \right) = \mathrm{Prob}\left( \rho_{t+1} \in C \mid S_{t}, A_{t} \right) \mathbb{P}_{t}^{t+1} \left( M \mid \mtrx{X}_t \right).
\end{equation*}
In our specific setting with a factor model, we have a transition function $\mathbb{T}$ that gives the next value of $\rho_{t+1} $ given the state and action at $t$. This is $\rho_{t+1} = \mathbb{T}(\mtrx{X}_{t}, \mtrx{w}_{t}, \mtrx{\epsilon}_{t+1}) = \mtrx{w}_{t}^\intercal (f(\mtrx{X}_{t}) + \mtrx{\epsilon}_{t+1})$. When $S_{t}$ and $A_{t}$ are known, the value of $\rho_{t+1}$ is driven by $\mtrx{\epsilon}_{t+1}$ and conversely, if $r \in \mathbb{R}$, then $\mathbb{T}^{-1}(r \mid \mtrx{X}_t, \mtrx{w}_{t})$ is the hyperplane orthogonal to the vector $\mtrx{w}_{t}$ translated by the vector $r \1 - f(\mtrx{X}_{t})$. Finally, we can write
\begin{equation}\label{eq:TransitionProba}
\mathrm{Prob}\left( S_{t+1} \in C \times M \mid S_{t}, A_{t} \right) = \bm{P}_\epsilon ( \mtrx{\epsilon}_{t+1} \in \mathbb{T}^{-1}(C \mid \mtrx{X}_t, \mtrx{w}_{t})) \mathbb{P}_{t}^{t+1} \left( M \mid \mtrx{X}_t \right)
\end{equation}

\subsection{Policy}
To allow for the exploration of all actions versus the exploitation of the optimal action, we will use a stochastic policy that gives the probability of choosing an action $A_{t}$ given the state $S_{t}$. Specifically, we will study policies $\pi(\cdot \mid S_{t}, {\mtrx{\theta}})$ defined on $\mathscr{B}(\Delta)$ with parameter $\mtrx{\theta}=[\theta^{(1)},\dots, \theta^{(K)}]^\intercal$. At each time step, we will draw from this distribution to select an action. More precisely, we are looking in this study for a policy that only takes into account the asset characteristics, hence we restrict ourselves to policies that takes the form
\begin{equation*}
A_{t} = \mtrx{w}_{t} \sim \pi \left( \cdot \mid \mtrx{X}_{t}, \mtrx{\theta} \right) .
\end{equation*}

We will use the shorthand notations $\pi_{\mtrx{\theta}}$ for $\pi \left( \cdot \mid \mtrx{\theta} \right)$ and $\mathbb{E}_{\mtrx{\theta}}[\cdot]$ or $\mathbb{E}_{\pi}[\cdot \mid \mtrx{\theta}]$ for the expectation under the policy $\pi_{\mtrx{\theta}}$.

\paragraph{Value function.}
The \emph{value function} at $t$ of the state $S_{t}$ under the policy $\pi_{\mtrx{\theta}}$, is the expected value of the cumulative discounted return from $t$ onward, when this policy is chosen to select the actions at each future time steps:
\[ V^{\mtrx{\theta}}(t,S_{t}) = \mathbb{E}_{\mtrx{\theta}} \left[ G_{t} \mid S_{t} \right] = \sum_{l=1}^{T-t} \gamma^{l-1} \mathbb{E}_{\mtrx{\theta}}\left[ R_{t+l} \mid S_{t} \right] .\]

To find the optimal policy, the standard tool is dynamic programming, for which the value function must satisfy the recursive Bellman equation (see Chapter 4 in \citealp{sutton2018reinforcement}). However, for the differential Sharpe Ratio, it is known that the introduction of the variance in the reward renders the problem time-inconsistent. In this paper, we will use RL algorithms to explore the optimal policies. Nonetheless, in the case of the risk insensitive agent, the problem can also be solved with dynamic programming as the next result shows.

\begin{proposition}\label{prop:PolicyValue}
For the risk insensitive agent, the time $t$ expected values of the future rewards are given by
\begin{equation*}
\mathbb{E}_{\mtrx{\theta}} \left[ R_{t+l}\mid S_{t}=(\rho_t, \mtrx{X}_{t}) \right] =
\begin{cases}
\mathbb{E}_{\mtrx{\theta}} \left[ \mtrx{w}_{t}\mid \mtrx{X}_{t} \right]^\intercal f\left(\mtrx{X}_{t}\right) &l=1,\\
\int_{\mathcal{M}} \mathbb{E}_{\mtrx{\theta}} \left[ \mtrx{w} \mid \xi \right]^\intercal f\left(\xi\right)\mathbb{P}_{t}^{t+l-1}(d\xi\mid\mtrx{X}_{t}) &l \ge 2.
\end{cases}
\end{equation*}

The policy value satisfies the recursive Bellman equation 
\begin{align*}
V^{\mtrx{\theta}}(t,\mtrx{X}_{t}) &= \mathbb{E}_{\mtrx{\theta}} \left[ R_{t+1} \mid S_{t} \right] + \int_{\mathcal{M}} V^{\mtrx{\theta}}(t+1,\xi) \mathbb{P}_{t}^{t+1}(d \xi \mid \mtrx{X}_{t}) ,\\
V^{\mtrx{\theta}}(T-1,\mtrx{X}_{T-1}) &=\mathbb{E}_{\mtrx{\theta}} \left[ R_{T} \mid S_{T-1} \right].
\end{align*}
\end{proposition}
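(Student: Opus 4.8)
The plan is to establish the two claims in turn, relying on three structural facts from the setup: the factor-model transition $\rho_{t+1}=\mtrx{w}_{t}^\intercal(f(\mtrx{X}_{t})+\mtrx{\epsilon}_{t+1})$, the fact that $\mtrx{\epsilon}_{t+1}$ is a zero-mean white noise independent of the past, and the assumption that the characteristics $(\mtrx{X}_{t})$ evolve as an autonomous Markov process whose kernel $\mathbb{P}_{t}^{u}$ does not depend on the actions or on the portfolio value. The engine throughout is the tower property of conditional expectation.

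First I would treat the one-step reward ($l=1$). Writing $R_{t+1}=\rho_{t+1}=\mtrx{w}_{t}^\intercal f(\mtrx{X}_{t})+\mtrx{w}_{t}^\intercal\mtrx{\epsilon}_{t+1}$ and conditioning on $S_{t}=(\rho_{t},\mtrx{X}_{t})$, the key observation is that under the policy $\mtrx{w}_{t}\sim\pi(\cdot\mid\mtrx{X}_{t},\mtrx{\theta})$ the weights are a function of $\mtrx{X}_{t}$ and of the policy's own exogenous randomness, hence conditionally independent of $\mtrx{\epsilon}_{t+1}$. Therefore the cross term factorizes as $\mathbb{E}_{\mtrx{\theta}}[\mtrx{w}_{t}\mid\mtrx{X}_{t}]^\intercal\,\mathbb{E}[\mtrx{\epsilon}_{t+1}]=0$, and what survives is $\mathbb{E}_{\mtrx{\theta}}[\mtrx{w}_{t}\mid\mtrx{X}_{t}]^\intercal f(\mtrx{X}_{t})$, which is the $l=1$ line.

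For $l\ge 2$ I would condition first on the characteristics $\mtrx{X}_{t+l-1}$, that is, on the information available at date $t+l-1$. The inner conditional expectation is exactly the one-step computation shifted to $t+l-1$, yielding $\mathbb{E}_{\mtrx{\theta}}[\mtrx{w}\mid\mtrx{X}_{t+l-1}]^\intercal f(\mtrx{X}_{t+l-1})$; crucially, this summand depends on the future state only through the characteristics. Taking the outer expectation over $\mtrx{X}_{t+l-1}$ given $S_{t}$, the Markov assumption identifies its conditional law with the kernel $\mathbb{P}_{t}^{t+l-1}(d\xi\mid\mtrx{X}_{t})$, producing the stated integral. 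The same observation shows that the expected future rewards, and hence the value function, depend on the state only through $\mtrx{X}_{t}$, which legitimizes the notation $V^{\mtrx{\theta}}(t,\mtrx{X}_{t})$.

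Finally, for the Bellman recursion I would start from $G_{t}=R_{t+1}+\gamma G_{t+1}$ and take $\mathbb{E}_{\mtrx{\theta}}[\cdot\mid S_{t}]$. The first term is $\mathbb{E}_{\mtrx{\theta}}[R_{t+1}\mid S_{t}]$; for the second, the tower property with conditioning on $S_{t+1}$ turns $\mathbb{E}_{\mtrx{\theta}}[G_{t+1}\mid S_{t+1}]$ into $V^{\mtrx{\theta}}(t+1,\mtrx{X}_{t+1})$. Because this continuation value is a function of $\mtrx{X}_{t+1}$ alone, the factorized transition probability \eqref{eq:TransitionProba} lets the $\rho$-component integrate to one, leaving only the characteristics kernel $\mathbb{P}_{t}^{t+1}(d\xi\mid\mtrx{X}_{t})$; the terminal case $t=T-1$ is immediate since the sum collapses to its single $l=1$ term. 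The main obstacle is not algebraic but measure-theoretic: one must argue carefully that $\mtrx{w}_{t}$ is conditionally independent of $\mtrx{\epsilon}_{t+1}$ and that the characteristics process is autonomous, and then establish the $\rho$-independence of $V^{\mtrx{\theta}}$ before invoking it to collapse the state transition onto the characteristics kernel. I would also note that the derivation naturally carries the discount factor $\gamma$, so the displayed recursion corresponds to the undiscounted convention $\gamma=1$.
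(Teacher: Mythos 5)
Your proposal is correct, and its first half follows the paper's route: the paper also proves the reward formulas via the zero-mean property of $\mtrx{\epsilon}_{t+1}$ and the tower property combined with the autonomous characteristics kernel, the only cosmetic difference being that the paper first establishes a lemma conditional on the action $\mtrx{w}_{t}$ (Lemma~\ref{lem:ExpRet}) and then integrates over $\pi_{\mtrx{\theta}}(d\mtrx{w}_{t}\mid\mtrx{X}_{t})$, whereas you fold those two steps into one by invoking the conditional independence of $\mtrx{w}_{t}$ and $\mtrx{\epsilon}_{t+1}$; these are equivalent. Where you genuinely diverge is the Bellman recursion. The paper never touches the identity $G_{t}=R_{t+1}+\gamma G_{t+1}$: it works directly with the closed-form reward expressions, writes $V^{\mtrx{\theta}}(t,S_{t})$ as $\mathbb{E}_{\mtrx{\theta}}[R_{t+1}\mid S_{t}]$ plus the tail sum, and collapses each tail term via the Chapman--Kolmogorov composition $\mathbb{P}_{t}^{t+l}(d\xi\mid\mtrx{X}_{t})=\int_{\mathcal{M}}\mathbb{P}_{t+1}^{t+l}(d\xi\mid\xi')\,\mathbb{P}_{t}^{t+1}(d\xi'\mid\mtrx{X}_{t})$, recognizing the inner integrals as $\mathbb{E}_{\mtrx{\theta}}[R_{t+1+l}\mid S_{t+1}=\xi']$ and summing to recover $V^{\mtrx{\theta}}(t+1,\xi')$. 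Your route is the standard dynamic-programming argument: condition on $S_{t+1}$, identify the continuation value, and use the factorized transition \eqref{eq:TransitionProba} plus the $\rho$-independence of $V^{\mtrx{\theta}}$ to reduce to the characteristics kernel. Each has its merits: the paper's computation is self-contained given the explicit formulas and never needs the Markov property of the controlled process at the level of cumulative returns $G_{t+1}$, while yours is shorter and more conceptual but places the burden exactly where you identified it --- justifying $\mathbb{E}_{\mtrx{\theta}}[G_{t+1}\mid S_{t+1},S_{t}]=\mathbb{E}_{\mtrx{\theta}}[G_{t+1}\mid S_{t+1}]$ under the policy, and establishing the $\rho$-independence of the value function before collapsing the transition. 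Your closing remark about the discount is also accurate: the stated recursion has no $\gamma$, and the paper's proof indeed silently drops the $\gamma^{l-1}$ factors, so both arguments are implicitly in the $\gamma=1$ convention.
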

\begin{proof}
See Appendix \ref{sec:Proof-PolicyValue}.
\end{proof}

\paragraph{Performance measure.}
The \emph{performance measure} of the policy is its value from some initial state $S_{0}$: $J(\mtrx{\theta})=V^{\mtrx{\theta}}(0,S_{0})$. Our aim is to find a parameter of the policy that maximizes this performance measure
\begin{equation}\label{eq:OptPolicy}
\mtrx{\theta}^{\ast} \in \mathop{\mathrm{arg\,max}}_{\mtrx{\theta}}J(\mtrx{\theta}).
\end{equation}
Before taking on this task, we specify the parametrized form of the policy that we use in this study.

\subsection{Dirichlet policies}
\label{sec:diripol}
One of the main contribution of this paper is the use of Dirichlet distributions to define the policy of the agent. We find it particularly well suited for describing portfolio weights when short selling is proscribed. We first briefly recall its definition and some of its properties that are used thereafter.
 
\paragraph{Definition.}
The Dirichlet distribution is defined on the $N-1$ simplex $\Delta$ and its density is zero outside $\Delta$. It is parametrized by a vector $\mtrx{a} = [ a_{1}, a_{2}, \cdots, a_{N} ]^\intercal$ of \emph{concentration parameters} where $a_{n}>0$ for all $n=1,\dots,N$. We will use the notation $\sigma$ for the \emph{scale parameter}
\[ \sigma=\sum_{n=1}^{N}a_{n}=\1^\intercal\mtrx{a}.\]

The probability density function (pdf) is given by
\begin{equation}\label{eq:dirichlet}
f(w_{1},\dotsc,w_{N}\mid\mtrx{a})=\frac{1}{B(\mtrx{a})}\prod_{n=1}^{N}w_{n}^{a_{n}-1} ,
\end{equation}
where the normalizing constant is the Multivariate Beta function, which can be written with the Gamma function as follows
\[ B(\mtrx{a})=\frac{\prod_{n=1}^{N}\Gamma\left(a_{n}\right)}{\Gamma\left(\sigma\right)}. \]

\paragraph{Some properties.}
Let $\mtrx{w} = [ w_{1}, \cdots, w_{N} ]^\intercal$ be a vector with Dirichlet distribution, which we denote by $\mtrx{w} \sim \mathrm{Dir} \left(\mathbf{a}\right)$. The marginal distributions are Beta distributions: for all $n$,
\[ w_{n} \sim \mathrm{Beta} \left(a_{n},\sigma-a_{n}\right) ,\]
from which we get\footnote{Additional properties required for some proofs in this paper are collected in Appendix~\ref{sec:PropDirichlet}.}
\begin{align}\label{eq:meanDirichlet}
\mathbb{E}\left[w_{n}\right] &= \frac{a_{n}}{\sigma} ,&\mathrm{Var}(w_n) &= \frac{1}{\left(\sigma+1\right)}\left\{ \frac{a_{n}}{\sigma}\left(1-\frac{a_{n}}{\sigma}\right)\right\} .
\end{align}
When $\mtrx{w}$ is the composition of a portfolio, these properties make clear the link between the concentration parameters $\mtrx{a}$ and the average relative shares of each asset in the portfolio and the inverse relationship between the marginal variances and the scale parameter $\sigma$.

\paragraph{Link with the asset characteristics.}
In this paper, we study the policy for which the probability of choosing action $\mtrx{w}_{t}$ at time $t$ has the Dirichlet distribution with concentration parameters $\mathbf{a}_{t}=[a_{t,1} \ a_{t,2} \ \cdots \ a_{t,N}]^\intercal$ where $a_{t,n}>0$ for all $n$. We posit that the concentration parameters are functions of the asset characteristics. Two possible forms are studied:
\begin{equation}\label{eq:policies}
\mathbf{a}_{t}=
\begin{cases}
\mtrx{X}_{t}\mtrx{\theta}_{t} & \left(\mathbf{F1}\right)\\
e^{\mtrx{X}_{t}\mtrx{\theta}_{t}} & \left(\mathbf{F2}\right).
\end{cases}
\end{equation}

The first form is a simple linear combination which is highly tractable, but may violate the condition that $a_{t,n}>0$ for some values of $\theta_{t}^{(k)}$. Indeed, during the learning process, an update in $\mtrx{\theta}$ might yield values that are out of the feasible set of $\mtrx{a}_t$. In this case, it is possible to resort to a trick that is widely used in online learning (see, e.g., Section 2.3.1 in \citealp{hoi2018online}). The idea is simply to find the acceptable solution that is closest to the suggestion from the algorithm. If we call $\mtrx{\theta}^*$ the result of an update rule from a given algorithm, then the closest feasible vector is 
 \begin{equation}
 \mtrx{\theta}= \underset{\mtrx{z} \in \Theta(\mtrx{X}_t)}{\mathop{\mathrm{arg\,max}}} ||\mtrx{\theta}^*-\mtrx{z}||^2,
 \label{eq:proj}
 \end{equation}
where $||\cdot||$ is the Euclidean norm and $\Theta(\mtrx{X}_t)$ is the feasible set, that is, the set of vectors $\mtrx{\theta}$ such that the $a_{t,n}=\theta_{t}^{(0)} + \sum_{k=1}^K \theta_{t}^{(k)}x_{t,n}^{(k)}$ are all nonnegative.

The second form of the policy is slightly more complex but remains always valid.

The combination of the Dirichlet distribution with time-varying weights $\mtrx{w}_t$ and parameters $\mtrx{a}_t$ defined above yields a policy $\pi(\cdot \mid \mtrx{X}_t, \mtrx{\theta}_t)$ that depends on exogenous characteristics $\mtrx{X}_t$ as well as $K+1$ parameters, stacked in the vector $\mtrx{\theta}_t$. By equation~\eqref{eq:meanDirichlet}, under this policy
\begin{align}\label{eq:wgts}
\mathbb{E}_{\pi}\left[w_{t,n}\mid \mtrx{X}_{t},\mtrx{\theta}_{t}\right] &= \frac{a_{t,n}}{\sigma_{t}} &\text{where}
&&a_{t,n} &=
\begin{cases}
(\mtrx{x}_{t,n})^\intercal\mtrx{\theta}_{t} & (\mathbf{F1})\\
e^{(\mtrx{x}_{t,n})^\intercal\mtrx{\theta}_{t}} & (\mathbf{F2})
\end{cases}.
\end{align}

There is a very strong link between this formulation and other methods that link financial performance to firm-specific characteristics like \cite{brandt2009parametric} and \cite{ammann2016characteristics}. One common feature is that for any $k\neq 0$, the parameter $\theta^{(k)}$ synthesizes the impact of feature $k$ on the whole cross-section of returns. If $\theta^{(k)}$ is positive (\textit{resp.},, negative), then, on average, the corresponding feature is expected to have a positive (\textit{resp.},, negative) effect on returns. The parameter $\theta^{(0)}$ is intended to reflect some idiosyncrasy that is not rendered by the characteristics but that is shared by all assets.

In our implementation, the link between asset characteristics and the portfolio weights can be made more explicit. Indeed, in Appendix \ref{sec:link}, we show that, for policy $(\mathbf{F1})$,
\begin{equation}
\mathbb{E}_{\pi}\left[w_{t,n}\mid \mtrx{X}_{t},\mtrx{\theta}_{t}\right] = \frac{1}{N} + \tilde{w}_{t,n} ,
\end{equation}
where $\tilde{w}_{t,n} = \frac{1}{N\theta_{t}^{(0)}}\sum_{k=1}^K \theta_{t}^{(k)}x_{t,n}^{(k)}$ are such that $\sum_{n=1}^N \tilde{w}_{t,n} = 0$. Therefore, policy $(\mathbf{F1})$ can be seen as targeting an investment in the Equally Weighted portfolio and a long-short portfolio whose weights are driven by the assets' characteristics, just as in \cite{brandt2009parametric} and \cite{ammann2016characteristics}. For policy $(\mathbf{F2})$, we show that a similar relationship holds approximately for small values of the sum $\sum_{k=1}^K \abs{\theta_{t}^{(k)}}$.

Finally, we see from this expression of the average weights that, if the RL algorithm is unable to find a long-short portfolio that improves the return of the Equally Weighted portfolio, then it will fall back on the latter. This is briefly discussed in Appendix \ref{sec:link} and explored in details in section~\ref{sec:PAC}.

\subsection{The policy gradient method}
\label{sec:pgm}
The optimization problem \eqref{eq:OptPolicy} cannot be solved by dynamic programming when the reward is the differential Sharpe Ratio. We thus search for an approximate solution using the method named Policy Gradient \citep[Chapter~13]{sutton2018reinforcement}. This method can deal with the infinite state space $\mathcal{S}$ and seeks to learn a parametrized policy by updating the parameter via gradient ascent in $J$:
\begin{equation}
 \mtrx{\theta}_{t+1} = \mtrx{\theta}_{t} + \alpha \widehat{\nabla J(\mtrx{\theta}_t)} ,
\label{eq:pgrad}
\end{equation}
where $\widehat{\nabla J(\mtrx{\theta}_t)}$ is a stochastic estimate of the gradient of the performance measure (with respect to $\bm{\theta}_t$) and $\alpha\in(0,1)$ is a learning rate.

The core result when implementing policy gradient learning is the so-called Policy Gradient Theorem:
\begin{equation}\label{eq:PGT}
\nabla J(\mtrx{\theta}_t) = \mathbb{E}_{\pi} \left[ G_t \nabla\ln\pi\left(\mtrx{w}_{t}\mid S_{t},\mtrx{\theta}_{t}\right)\mid \mtrx{X}_t,\mtrx{\theta}_{t} \right],
\end{equation}
which is incredibly convenient because the two terms in the expectation are disentangled. We refer to Section 13.3 in \cite{sutton2018reinforcement} for a proof of this result. It is thus imperative to derive analytical expressions for $\nabla \ln \pi_{\mtrx{\theta}}$ as per the following Proposition.

\begin{proposition}\label{prop:DirichletGrad}
For a Dirichlet policy, the gradients are given by
\begin{equation}\label{eq:DirichletGrad}
\nabla\ln\pi\left(\mtrx{w}_{t}\mid \mtrx{X}_t,\mtrx{\theta}_{t}\right) = \sum_{n=1}^{N}\left(\digamma\left(\sigma_{t}\right)-\digamma\left(a_{t,n}\right)+\ln w_{t,n}\right)\nabla a_{t,n},
\end{equation}
where
\begin{equation*}
\nabla a_{t,n} = 
\begin{cases}
\mtrx{x}_{t,n} & \left(\mathbf{F1}\right)\\
e^{(\mtrx{x}_{t,n})^\intercal\mtrx{\theta}_{t}} \mtrx{x}_{t,n} & \left(\mathbf{F2}\right)
\end{cases}.
\end{equation*}
\end{proposition}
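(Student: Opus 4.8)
The plan is to obtain the gradient by direct differentiation of the log-density, exploiting the chain rule through the concentration parameters. First I would write $\ln\pi$ explicitly. Substituting the Gamma-function form of the multivariate Beta function into the density \eqref{eq:dirichlet} gives
\begin{equation*}
\ln\pi\left(\mtrx{w}_{t}\mid \mtrx{X}_t,\mtrx{\theta}_{t}\right) = \ln\Gamma(\sigma_{t}) - \sum_{n=1}^{N} \ln\Gamma(a_{t,n}) + \sum_{n=1}^{N} (a_{t,n}-1)\ln w_{t,n}.
\end{equation*}
Here the realized action $\mtrx{w}_{t}$ is the point at which the log-density is evaluated; in the Policy Gradient Theorem \eqref{eq:PGT} it is held fixed when differentiating with respect to $\mtrx{\theta}_{t}$, so every $\ln w_{t,n}$ is treated as a constant and the simplex constraint plays no role in the calculation.

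Next I would differentiate each of the three terms with respect to $\mtrx{\theta}_{t}$. The only $\mtrx{\theta}_{t}$-dependence enters through the concentration parameters $a_{t,n}$ and, through them, the scale $\sigma_{t} = \sum_{m=1}^{N} a_{t,m}$. Using the definition of the digamma function $\digamma = (\ln\Gamma)'$ together with the chain rule, the first term contributes $\digamma(\sigma_{t})\,\nabla\sigma_{t} = \digamma(\sigma_{t})\sum_{m=1}^{N} \nabla a_{t,m}$, the second contributes $-\sum_{n=1}^{N} \digamma(a_{t,n})\nabla a_{t,n}$, and the third contributes $\sum_{n=1}^{N} (\ln w_{t,n})\nabla a_{t,n}$. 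Relabeling the summation index in the first term and collecting the three contributions asset by asset yields exactly \eqref{eq:DirichletGrad}.

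Finally I would specialize $\nabla a_{t,n}$ to each functional form using \eqref{eq:policies}. For $(\mathbf{F1})$, $a_{t,n} = (\mtrx{x}_{t,n})^\intercal\mtrx{\theta}_{t}$ is linear in $\mtrx{\theta}_{t}$, so $\nabla a_{t,n} = \mtrx{x}_{t,n}$; for $(\mathbf{F2})$, $a_{t,n} = e^{(\mtrx{x}_{t,n})^\intercal\mtrx{\theta}_{t}}$, so the chain rule gives $\nabla a_{t,n} = e^{(\mtrx{x}_{t,n})^\intercal\mtrx{\theta}_{t}}\mtrx{x}_{t,n}$. The computation is essentially routine; the only point that requires care is tracking the dependence of the single scalar $\sigma_{t}$ on all $N$ concentration parameters, which is precisely what produces the common $\digamma(\sigma_{t})$ term appearing in every summand. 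I do not anticipate any substantive analytic obstacle, since the digamma values are finite whenever $a_{t,n}>0$, a condition guaranteed by construction under $(\mathbf{F2})$ and enforced through the projection \eqref{eq:proj} under $(\mathbf{F1})$.
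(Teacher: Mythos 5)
Your proposal is correct and follows essentially the same route as the paper's proof: write $\ln\pi$ via the Gamma-function form of $B(\mtrx{a})$, differentiate term by term with respect to $\mtrx{\theta}_t$ using $\digamma=(\ln\Gamma)'$ and the chain rule through $\sigma_t=\sum_m a_{t,m}$, then collect terms and specialize $\nabla a_{t,n}$ to $(\mathbf{F1})$ and $(\mathbf{F2})$. Your added remarks (treating $\mtrx{w}_t$ as fixed, finiteness of the digamma values for $a_{t,n}>0$) are sound clarifications but do not change the argument.
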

\begin{proof}
See Appendix \ref{sec:Proof-DirichletGrad}.
\end{proof}

The policy gradient is then the weighted sum over all assets of the concentration parameters' gradients. From equation~\eqref{eq:wgts}, we see that each gradient $\nabla a_{t,n}$ is the direction in parameter space along which the relative importance in the portfolio of asset $n$ increases. The weights are understood using Proposition~\ref{prop:wlnw}. Indeed we have
\begin{equation}\label{eq:logMean}
\digamma\left(\sigma_{t}\right)-\digamma\left(a_{t,n}\right)+\ln w_{t,n}=\ln w_{t,n}-\mathbb{E}_{\pi}\left[\ln w_{t,n}\mid \mtrx{X}_t,\mtrx{\theta}_{t}\right],
\end{equation}
and therefore $\nabla a_{t,n}$ has a positive (negative) weight if the realized log weight of asset $n$ is above (below) its expected value.

Finally, Proposition~\ref{prop:DirichletGrad} sheds light on the learning process implied by the policy gradient method described by equations~\eqref{eq:pgrad} and~\eqref{eq:PGT}. When the cumulative discounted return $G_t$ is positive (\textit{resp.}, negative), assets which had, at time $t$, their realized log weights above their expected values will see their expected weights at time $t+1$ increase (\textit{resp.}, decrease). In this way, the stochastic policy allows to explore the action space through random deviations from the mean, that are reinforced if they generate a profit.

\paragraph{The case of the risk insensitive agent}
For the risk insensitive agent, the gradient of the performance measure takes a simple form.

\begin{proposition}\label{prop:PolicyGrad}
For the risk insensitive agent, under a Dirichlet policy,
\begin{equation}
\nabla J\left(\mtrx{\theta}_{t}\right) =\sum_{n=1}^{N}\left(\mathbb{E}\left[r_{t+1,n}\mid\mtrx{X}_{t}\right]-\mathbb{E}_{\pi}\left[R_{t+1}\mid\mtrx{X}_{t},\mtrx{\theta}_{t}\right]\right)\frac{\nabla a_{t,n}}{\sigma_{t}}, \label{eq:sensi66}
\end{equation}
where we recall that $r_{t+1,n}$ is the return of asset $n$ between time $t$ and $t+1$ (while $R_{t+1}$ is the reward).
\end{proposition}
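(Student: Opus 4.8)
The plan is to recognise that, for the risk-insensitive agent, $\nabla J(\mtrx{\theta}_t)$ collapses to the gradient of the one-step expected reward, and then to evaluate that expectation explicitly through the moments of the Dirichlet law. I would start from the Policy Gradient Theorem~\eqref{eq:PGT}, write $G_t = R_{t+1} + \gamma G_{t+1}$, and split the expectation accordingly. Because the transition law $\mathbb{P}_t^{t+1}$ of the characteristics is independent of the action, the future return $G_{t+1}$ is, conditionally on $\mtrx{X}_t$, independent of $\mtrx{w}_t$; combined with the zero-mean score identity $\mathbb{E}_\pi[\nabla\ln\pi(\mtrx{w}_t\mid\mtrx{X}_t,\mtrx{\theta}_t)\mid\mtrx{X}_t]=0$, obtained by differentiating $\int_\Delta \pi\,d\mtrx{w}=1$ under the integral sign, the $G_{t+1}$ contribution vanishes. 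Using $R_{t+1}=\rho_{t+1}=\mtrx{w}_t^\intercal\mtrx{r}_{t+1}$, this leaves
\begin{equation*}
\nabla J(\mtrx{\theta}_t) = \mathbb{E}_\pi\!\left[\rho_{t+1}\,\nabla\ln\pi(\mtrx{w}_t\mid\mtrx{X}_t,\mtrx{\theta}_t)\mid\mtrx{X}_t\right].
\end{equation*}

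Next I would substitute the log-policy gradient of Proposition~\ref{prop:DirichletGrad} in its centred form~\eqref{eq:logMean} and pull the deterministic vectors $\nabla a_{t,n}$ (functions of $\mtrx{X}_t$ and $\mtrx{\theta}_t$ only) out of the expectation, obtaining
\begin{equation*}
\nabla J(\mtrx{\theta}_t) = \sum_{n=1}^N \nabla a_{t,n}\, \mathbb{E}_\pi\!\left[\rho_{t+1}\big(\ln w_{t,n}-\mathbb{E}_\pi[\ln w_{t,n}]\big)\mid\mtrx{X}_t\right].
\end{equation*}
I would then expand $\rho_{t+1}=\sum_m w_{t,m} r_{t+1,m}$ and use the factor model~\eqref{eq:FactorModel} with the independence of the noise $\mtrx{\epsilon}_{t+1}$ from the action and $\mathbb{E}[\mtrx{\epsilon}_{t+1}]=0$: every term carrying an $\epsilon_{t+1,m}$ factorises and drops, so $r_{t+1,m}$ may be replaced by $f_m(\mtrx{X}_t)=\mathbb{E}[r_{t+1,n}\mid\mtrx{X}_t]$. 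The bracketed quantity then reduces to a covariance of the policy alone,
\begin{equation*}
\mathbb{E}_\pi\!\left[\rho_{t+1}\big(\ln w_{t,n}-\mathbb{E}_\pi[\ln w_{t,n}]\big)\mid\mtrx{X}_t\right] = \sum_{m=1}^N f_m(\mtrx{X}_t)\,\cov_\pi(w_{t,m},\ln w_{t,n}).
\end{equation*}

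The technical crux, which I expect to be the main obstacle, is the evaluation of the Dirichlet covariance $\cov_\pi(w_{t,m},\ln w_{t,n})$, which should follow from the product-moment identities of the Dirichlet law (Proposition~\ref{prop:wlnw} and Appendix~\ref{sec:PropDirichlet}). The cleanest route is to differentiate the normalising constant: writing $\mathbb{E}[w_m]=B(\mtrx{a}+\mtrx{e}_m)/B(\mtrx{a})=a_m/\sigma$ with $\mtrx{e}_m$ the $m$-th canonical basis vector, and using $\ln w_n\prod_k w_k^{a_k-1}=\partial_{a_n}\prod_k w_k^{a_k-1}$, one gets $\mathbb{E}[w_m\ln w_n]=B(\mtrx{a})^{-1}\partial_{a_n}[(a_m/\sigma)B(\mtrx{a})]$; then $\partial_{a_n}\ln B(\mtrx{a})=\digamma(a_n)-\digamma(\sigma)$ gives, after subtracting $\mathbb{E}[w_m]\mathbb{E}[\ln w_n]$,
\begin{equation*}
\cov_\pi(w_{t,m},\ln w_{t,n}) = \frac{1}{\sigma_t}\left(\mathds{1}_{\{m=n\}}-\frac{a_{t,m}}{\sigma_t}\right).
\end{equation*}

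Substituting this back, the $\mathds{1}_{\{m=n\}}$ part produces $f_n(\mtrx{X}_t)/\sigma_t=\mathbb{E}[r_{t+1,n}\mid\mtrx{X}_t]/\sigma_t$, while the $a_{t,m}/\sigma_t$ part sums to $\sigma_t^{-1}\sum_m f_m(\mtrx{X}_t)\,a_{t,m}/\sigma_t=\sigma_t^{-1}\,\mathbb{E}_\pi[\rho_{t+1}\mid\mtrx{X}_t]=\sigma_t^{-1}\,\mathbb{E}_\pi[R_{t+1}\mid\mtrx{X}_t,\mtrx{\theta}_t]$, where I used $\mathbb{E}_\pi[w_{t,m}\mid\mtrx{X}_t]=a_{t,m}/\sigma_t$ from~\eqref{eq:wgts} and $R_{t+1}=\rho_{t+1}$. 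Collecting the two contributions yields exactly~\eqref{eq:sensi66}, completing the argument.
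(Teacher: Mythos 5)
Your proposal is correct, and its overall skeleton coincides with the paper's: Policy Gradient Theorem, elimination of the future-reward terms through the action-independence of the state transition combined with the zero-mean score identity, reduction to the one-step expectation $\mathbb{E}_{\pi}\left[\mtrx{w}_{t}^\intercal f(\mtrx{X}_{t})\,\nabla\ln\pi\mid\mtrx{X}_t,\mtrx{\theta}_t\right]$, and finally Dirichlet moment computations. (Your use of the recursion $G_t=R_{t+1}+\gamma G_{t+1}$ is a compact repackaging of the paper's term-by-term treatment of the $l\ge 2$ rewards via Lemma~\ref{lem:ExpRet}; the independence argument is the same.) Where you genuinely depart from the paper is in the technical crux. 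The paper invokes Proposition~\ref{prop:wlnw}, whose proof in Appendix~\ref{sec:PropDirichlet} computes $\mathbb{E}[W_n\ln W_n]$ and $\mathbb{E}[W_n\ln W_m]$ through explicit Beta-integral manipulations (changes of variables, ratios of Gamma functions), and then recombines these raw moments in the final algebra. You instead obtain the single centred quantity $\cov_\pi(w_{t,m},\ln w_{t,n})=\sigma_t^{-1}\left(\1_{\{m=n\}}-a_{t,m}/\sigma_t\right)$ by differentiating the identity $\mathbb{E}[w_m]=B(\mtrx{a}+\mtrx{e}_m)/B(\mtrx{a})=a_m/\sigma$ with respect to $a_n$, using $\partial_{a_n}\ln B(\mtrx{a})=\digamma(a_n)-\digamma(\sigma)$. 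I verified this against the paper's moment formulas: for $m\neq n$ both give $-a_m/\sigma^2$, and for $m=n$ both give $\sigma^{-1}-a_n/\sigma^2$, so your covariance is exactly right, and substituting it reproduces \eqref{eq:sensi66}. Your route buys two things: the differentiation-under-the-integral trick replaces several pages of appendix integrals with a few lines, and the covariance packaging makes the final step transparent — the gradient is manifestly a weighted sum of covariances between weights and log-weights, with the $\1_{\{m=n\}}$ part producing the asset's own expected return and the $a_{t,m}/\sigma_t$ part producing the portfolio's expected reward. The paper's approach, in exchange, establishes the individual moments $\mathbb{E}[W_n\ln W_m]$ as reusable standalone results (they are also cited to interpret \eqref{eq:logMean}). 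Only one minor caveat: like the paper, you should note that the differentiations under the integral sign (both in the score identity and in the moment computation) require the usual domination conditions, which hold here since $a_{t,n}>0$.
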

\begin{proof}
See Appendix \ref{sec:Proof-PolicyGrad}.
\end{proof}
We see that the learning process is ``\textit{myopic}'', as the one step ahead return is the only one taken into consideration for the update of the parameters. The learning process will, on average, increase (\textit{resp.}, decrease) the weights of those assets whose expected returns are higher (\textit{resp.}, lower) than the portfolio's expected return. This behavior could be expected from the risk-insensitive agent.

\subsection{The pricing ability of characteristics}
\label{sec:PAC}
 
 This section details additional theoretical properties of the Dirichlet policy. To ease notations and without much loss of generality, we assume that there is only one non-constant characteristic $\bm{c}_t$, which, at any point in time is distributed (across assets) symmetrically around zero. This is a common assumption in the asset pricing literature when agents are allowed to pre-process the data, see e.g., \cite{kelly2019characteristics}, \cite{gu2020empirical} and \cite{freyberger2020dissecting}.
 
The purpose of this subsection is to understand, in a simple case, the drivers of the variations of $\bm{\theta}_t$, which, via Equation \eqref{eq:pgrad}, is updated via the gradient times the learning rate. The strong assumption we make is that the agent implements the \textit{average} policy from Equation \eqref{eq:wgts} with parametric form (\textbf{F1}), so the vector of weights is $\bm{w}=\bm{a}/\sigma=\mtrx{X}\mtrx{\theta} /(\1^\intercal\mtrx{X}\mtrx{\theta} )$, where we omit the time index for notational clarity (for the remainder of the section). Essentially, this means that our results will hold \textit{on average}.

In all generality, the learning process will be based on several stages of subsampling, akin to bootstrapping (see Section \ref{sec:reinforce} below). Thus, we consider that the sum in Proposition \ref{prop:PolicyGrad} runs over a subset of the indices which we write $\mathcal{M} \subset \{1, \dots,N \}$ and which has cardinal $M$ such that $0\ll M\le N$. The reward, which we take to be the simple average return knowing $\bm{X}=[\1 \ \bm{c}]$, is
$$R\left(\theta^{(0)},\theta^{(1)}\right)=\E\left[ \frac{\sum_{m \in \mathcal{M}}r_m(\theta^{(0)}+\theta^{(1)}c_m)}{\sum_{m\in \mathcal{M}}(\theta^{(0)}+\theta^{(1)}c_m)}\right], \quad \text{card}(\mathcal{M})=M\le N,$$
so that the sensitivities from Equation \eqref{eq:sensi66} reduce to 
\begin{align}
\frac{\partial J}{\partial \theta^{(0)}} &=\sigma^{-1}\sum_{m \in \mathcal{M}}\left(\mathbb{E}\left[r_m\right]-R(\theta^{(0)},\theta^{(1)}) \right)=\sigma^{-1}\1^\intercal(\E[\bm{r}_\mathcal{M}]-R(\theta^{(0)},\theta^{(1)})\1), \label{eq:gradd1}\\
\frac{\partial J}{\partial \theta^{(1)}} &=\sigma^{-1}\sum_{m\in \mathcal{M}}\left(\mathbb{E}\left[r_m\right]-R(\theta^{(0)},\theta^{(1)}) \right)c_m=\sigma^{-1}\bm{c}^\intercal(\E[\bm{r}_{\mathcal{M}}]-R(\theta^{(0)},\theta^{(1)})\1), \label{eq:gradd2}
\end{align}
where $\bm{r}_\mathcal{M}$ is the return vector of assets belonging to the subset $\mathcal{M}$. Equation \eqref{eq:gradd1} implies that the parameter of the constant will decrease whenever the equally-weighted return is below that of the return of the policy. Equation \eqref{eq:gradd2} states that the parameter of the characteristic will vary with the relationship of the latter with the relative performance of the assets versus the average policy return. If the characteristic is linked with outperformance (\textit{resp.}, underperformance) with respect to the average policy return, the parameter will rise (\textit{resp.}, shrink). 

More directly, if the conditions of the Leibniz integral rule hold (which imposes mild integrability requirements on the returns), then the sensitivities of the reward are
  $$\frac{\partial }{\partial \theta^{(0)}}R(\theta^{(0)},\theta^{(1)})=\theta^{(1)} \Omega, \quad \frac{\partial }{\partial \theta^{(1)}}R(\theta^{(0)},\theta^{(1)})=-\theta^{(0)} \Omega,$$
 with 
 $$\Omega = \E\left[\frac{\sum_{m\in \mathcal{M}}r_m \times \sum_{m\in \mathcal{M}}c_m-M\sum_{m\in \mathcal{M}}c_mr_m}{\left( \sum_{m\in \mathcal{M}}(\theta^{(0)}+\theta^{(1)}c_m)\right)^2} \right]=M\E\left[\frac{\sum_{m\in \mathcal{M}}r_m \overbrace{\left(M^{-1} \sum_{l\in \mathcal{M}}c_l-c_m \right)}^{\text{long-short portfolio}}}{\left( \sum_{m\in \mathcal{M}}(\theta^{(0)}+\theta^{(1)}c_m)\right)^2} \right],$$
which allows for a similar interpretation. The derivatives of the reward linked to the policy will essentially be driven by the average return of a portfolio dictated by the relative values of the characteristics with respect to their mean.

We end our analysis by considering the case when all assets are included, i.e., $M=N$. The reward then simplifies to
 $$R(\theta^{(0)},\theta^{(1)})=\E\left[ \frac{(\theta^{(0)}\1 + \theta^{(1)}\bm{c})^\intercal\bm{r}}{(\theta^{(0)}\1+\theta^{(1)}\bm{c} )^\intercal\1}\right]=\E\left[ \frac{(\theta^{(0)}\1 + \theta^{(1)}\bm{c})^\intercal\bm{r}}{\theta^{(0)}N}\right]=(N^{-1}\1 + \theta^{(1)}\bm{c}/(\theta^{(0)}N))^\intercal\E\left[ \bm{r}\right],$$
 where in the second equality we have used the symmetry of $\bm{c}$ (the elements of which sum to zero). 
Again assuming that the conditions of the Leibniz integral rule are satisfied, this implies
 $$\frac{\partial }{\partial \theta^{(0)}}R(\theta^{(0)},\theta^{(1)})=-\theta^{(1)}\frac{\bm{c}^\intercal\E\left[\bm{r}\right]}{(\theta^{(0)})^2N} \quad \text{and} \quad \frac{\partial }{\partial \theta^{(1)}}R(\theta^{(0)},\theta^{(1)})=\frac{\bm{c}^\intercal\E\left[\bm{r}\right]}{\theta^{(0)}N},$$
 which means that the sensitivities of the reward are proportional to the characteristic-weighted average return, whereby we consider the values of $\bm{c}$ to be unconstrained long-short portfolio weights. In short, the sensitivities are driven by the way the characteristic is \textit{priced}.

Moreover, from Proposition \ref{prop:PolicyGrad}, we have that the derivative of the performance measure with respect to the policy parameters are
\begin{align}
\frac{\partial J}{\partial \theta^{(0)}} &=\sigma^{-1}\sum_{n=1}^{N}\left(\mathbb{E}\left[r_n\right]-(N^{-1}\1 + \theta^{(1)}\bm{c}/(\theta^{(0)}N))^\intercal\E\left[ \bm{r}\right] \right)=-\sigma^{-1}\theta^{(1)}(\theta^{(0)})^{-1}\bm{c}^\intercal\E[\bm{r}], \\
\frac{\partial J}{\partial \theta^{(1)}} &=\sigma^{-1}\sum_{n=1}^{N}\left(\mathbb{E}\left[r_n\right]-(N^{-1}\1 + \theta^{(1)}\bm{c}/(\theta^{(0)}N))^\intercal\E\left[ \bm{r}\right] \right)c_n=\sigma^{-1}\bm{c}^\intercal\E[\bm{r}],
\end{align}
where the very last equality comes from the symmetric distribution of $\bm{c}$ around zero. The overwhelming importance of the term $\bm{c}^\intercal\E[\bm{r}]$ incites us to coin a term for it.

\begin{definition}
\label{def:pac}
The pricing ability of a characteristic $\bm{c}$ is $\text{PAC}=\bm{c}^\intercal\E[\bm{r}]$.
\end{definition}
 
Note that if the characteristic is random, the $\text{PAC}=\E[\bm{c}^\intercal\bm{r}]$ is the covariance between the returns and the characteristics whenever the returns have zero mean. For a characteristic $k$ to matter in the portfolio construction process (assuming they are all normalized), it is required that the corresponding policy parameter $\theta^{(k)}$ be large in magnitude compared to other values of $\theta^{(j)}$. Since the process will be iterative through time, this requires that the gradient adjustments are such that $\partial J/\partial \theta^{(k)}$: \vspace{-2mm}
\begin{equation}
\left.\begin{array}{ll} 
1. &\text{ is larger in absolute value than } \partial J/\partial \theta^{(j)}, \ j\neq k; \\
2. &\text{ has a constant sign across consecutive dates}.
\end{array} \right\}
\label{eq:cond9}
\end{equation}
Even though these conditions originate from a stylized framework, our empirical results will confirm their practical relevance.


\section{Data and protocols}
\label{sec:2}

In this section, we describe the data on which we carry out our empirical analysis. Additionally, we discuss many implementation issues related to the RL framework. \vspace{1mm}

\footnotesize{\textbf{Data Availability Statement}. The data that support the findings of this study are available from Bloomberg LP. Restrictions apply to the availability of these data, which were used under license for this study. Data are however available from the authors upon reasonable request and with permission of Bloomberg LP.} \vspace{-2mm}

\normalsize


\subsection{Data}
\label{sec:data}
The dataset comprises firms listed in the US between January 2000 and June 2020 downloaded from Bloomberg. The number of firms through time is depicted in the left panel of Figure \ref{fig:nbfirms}. Observations are sampled at a monthly frequency. Average returns for each calendar year are shown in the left panel of Figure \ref{fig:nbfirms}.
Each stock is characterized by twelve attributes that correspond to documented predictors (accounting-based, risk-based and momentum-based). These variables are summarized in Table \ref{tab:pred}. We restrict our analysis to these twelve indicators to be able to easily comment on the associated values of $\bm{\theta}$. These features naturally serve as the non constant components of the $\bm{X}_t$ matrices in our model. 

\begin{figure}[!h]
\begin{center}
\includegraphics[width=15.5cm]{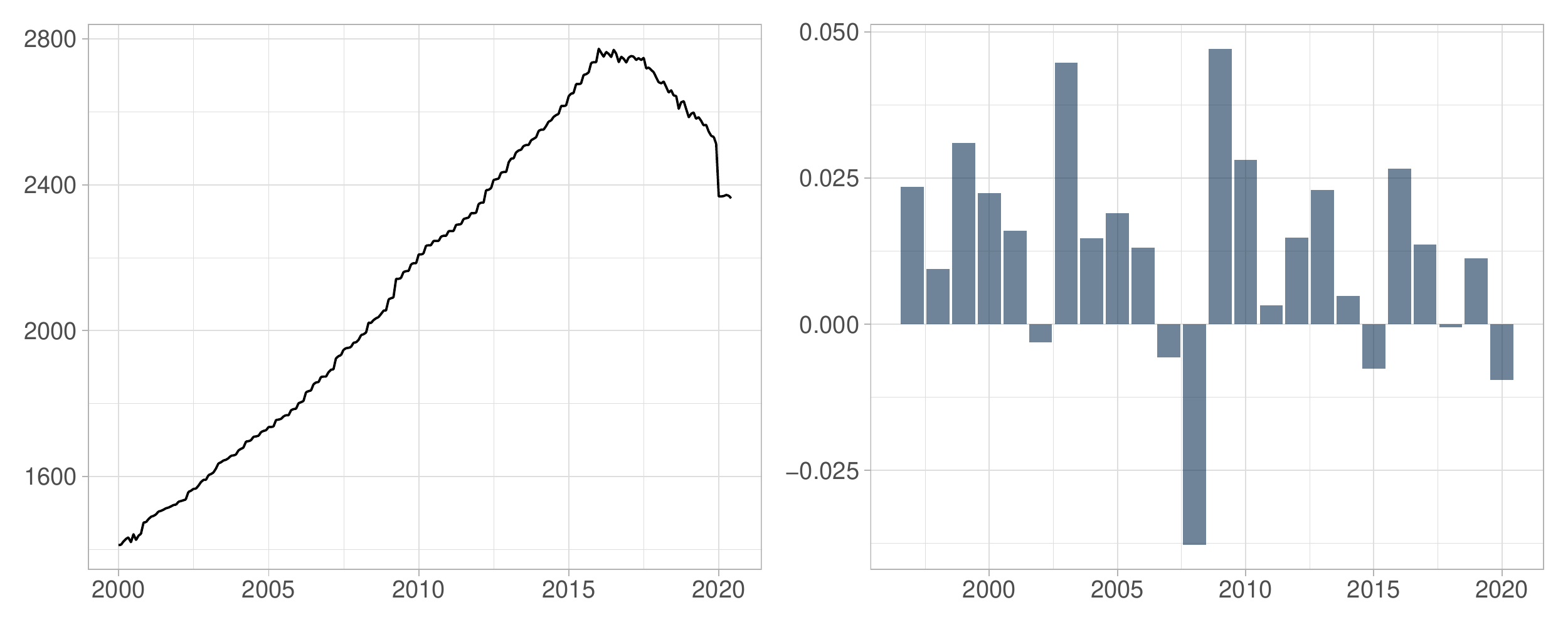}\vspace{-3mm}
\caption{\textbf{Number of firms and average equally-weighted returns}.} \vspace{-3mm}
\label{fig:nbfirms}
\end{center}
\end{figure}

The features (predictors) are cross-sectionally processed so that for a fixed $t$ and given predictor $j$, $\bm{x}_t^{(j)}$ is uniformly distributed (over the [-0.5,0.5] interval) across firms. Scaling predictors is standard practice both in the machine learning literature and in some recent asset pricing models (e.g., \cite{koijen2019demand}, \cite{kelly2019characteristics} and \cite{freyberger2020dissecting}). In characteristics-based investing (e.g., in \cite{brandt2009parametric}), the indicators are for instance also demeaned and standardized.

\begin{table}[!h]
\begin{center}
\scriptsize
\begin{tabular}{cll}
\textbf{Short name} & \textbf{Long name} & \textbf{Academic references} (alphab. order) \\ \hline
cst & Constant & -\\
cap &  Market Capitalization  & \cite{banz1981relationship, fama1992cross} \\
pb & Price-to-Book ratio & \cite{fama1992cross,asness2013value} \\
de & Debt-to-Equity ratio & \cite{bhandari1988debt,barbee1996sales} \\
vol & Realized volatility in the past 30 days & \cite{baker2011benchmarks} \\
prof & Profitability & \cite{fama2015five} \\
inv & Asset growth & \cite{cooper2008asset,fama2015five} \\
eps & Earnings per share & \cite{ball1968empirical, ball2019ball} \\
liq & Trading volume & \cite{chordia2000trading} \\ 
rsi & Relative strength index & \cite{han2013new} \\
pe & Price-earnings ratio & \cite{basu1983relationship,easton2004pe} \\
dy & Dividend yield & \cite{litzenberger1982effects}, \cite{naranjo1998stock} \\
mom & 12-1M momentum & \cite{jegadeesh1993returns, asness2013value} \\
 \hline
\end{tabular}
\caption{\textbf{List of predictors and associated academic references}. \small The Bloomberg fields are, in order, CUR\_MKT\_CAP, PX\_TO\_BOOK\_RATIO, TOT\_DEBT\_TO\_TOT\_EQY, VOLATILITY\_30D, PROF\_MARGIN, ASSET\_GROWTH, IS\_EPS, PX\_VOLUME, RSI\_30D, PE\_RATIO. The dividend yield is evaluated as EQY\_DPS divided by the lagged value of the closing price field PX\_LAST. Momentum is also computed via the closing price (lagged 12 month value divided by lagged one month value, minus one).\label{tab:pred}} \vspace{-3mm}
\end{center}
\end{table}

\subsection{The REINFORCE algorithm and implementation issues}
\label{sec:reinforce}
The learning process at the core of our method is the so-called REINFORCE algorithm, which is the most straightforward route towards the policy gradient approach (see Chapter 13 of \cite{sutton2018reinforcement}). We briefly recall the steps in Table \ref{tab:REINFORCE}. 

\begin{table}[!h]
\begin{center}
\begin{tabular}{clll}
Step &\multicolumn{3}{l}{\textbf{REINFORCE}} \\ \hline
0 &\multicolumn{3}{l}{Given a policy $\pi_{\bm{\theta}}$, a discount rate $\gamma\in(0,1)$ and  a learning rate $\eta\in(0,1)$;} \\
1 &\textbf{For} &\multicolumn{2}{l}{ i = 1, 2, $\dots$, number of episodes, do:} \\
2 && \multicolumn{2}{l}{Generate sequence $S_0$, $A_0$, $R_1$, $\dots$, $S_{T-1}$, $A_{T-1}$, $R_T$, with actions driven by $\pi_{\bm{\theta}}$} \\
3&& \textbf{For}&  t = 0, 1, $\dots,T-1$, do: \\
4&&&$G \leftarrow \sum_{k=t+1}^T\gamma^{k-t-1}R_k$ $\hspace{7mm}$ (compute the gain)  \\
5&&& $\bm{\theta} \leftarrow \bm{\theta} + \eta \gamma^t  G \nabla \log(\pi_{\bm{\theta}})$. $\quad$ (update the parameters)    \\ \hline
\end{tabular}
\caption{Steps of the baseline REINFORCE algorithm. \label{tab:REINFORCE}} \vspace{-6mm}
\end{center}
\end{table}

In spite of its apparent simplicity, the REINFORCE algorithm leaves a lot of room for implementation design. Below, we discuss open options (highlighted in bold font) for the steps defined in Table \ref{tab:REINFORCE}:
\begin{itemize}
\item \textbf{Step 0}: the policy is defined by Equation (\ref{eq:dirichlet}) along with one of the specifications (\textbf{F1}) or (\textbf{F2}). The levels of the two rates $\gamma$ and $\eta$ will be discussed below. 
\item \textbf{Step 1}: the central question is: what is an episode? More precisely, do we need to impose a chronological ordering of events? In traditional RL, this is imperative because actions can have an impact on the environment (the states). This is rarely the case in finance, except when taking very large orders, which large institutions usually avoid to limit the odds of market shifts. The generation of the SARSA sequences in Step 2 can thus be either \textbf{chronological} indeed, or independently drawn from samples of features and returns, akin to \textbf{bootstrapping}. In the latter case, the discounting rate $\gamma$ would lose its meaning and should be set at one.
\item \textbf{Step 2 \& 4}: the definition of the reward $R_t$ is not unambiguous. A natural choice is to take raw \textbf{returns}. The most prominent extension is when returns need to be adjusted by some risk measure, like in the \textbf{Sharpe ratio} (SR). However, the computation of the rewards in this case is not straightforward. Luckily, \cite{moody1998performance} provide a solution to this obstacle. Their idea is to sequentially update the reward using the exponential moving average SR given in Equation~\eqref{eq:SR}.

With this convention, the first steps of the SARSA sequence rapidly yield a risk adjusted reward. 
\item \textbf{Step 5}: this is not an option, but in the case of the linear policy $(\textbf{F1})$, the updated $\bm{\theta}$ has to be adjusted so that the $a_n$ lie in the intervals discussed in Section \ref{sec:a} in the Appendix. Since we will work with bundles of $N=100$ assets, we choose $a_-=0.02$ and $a_+=1.6$. Thus, the feasible set in the projection (\ref{eq:proj}) is  \vspace{-2mm}
$$\Theta(\bm{X}_t)=\{\bm{\theta} , \ a_-\le  \bm{\theta X}_t \le a_+ \}.$$
\end{itemize}

\subsection{Protocols}
\label{eq:prot}
The above discussion gives rise to two dichotomies: \textbf{chronological} versus \textbf{bootstrapped} sequences and \textbf{return} versus \textbf{Sharpe ratio} rewards. Below, we explain how to incorporate these design choices in a series of thorough backtests that rely on market data (and not on simulated samples).

Chronological sequences require temporal depth. Every January (time $t$), the preceding 12 months of data are gathered and the sequences will consist of portfolios held during each month in the sample. The number of episodes is $E$ and their length is 12. Each action at month $s$ ($A_s$) consists in randomly choosing (with replacement) $N$ assets and sampling their portfolio weights according to the current policy $\pi_{\bm{\theta}}$. The weights depend both on $\bm{\theta}$ and on the characteristics of the assets at month $s$. Rewards can either be returns, or Sharpe ratios, computed iteratively as defined in Equation (\ref{eq:SR}).

Bootstrapped sequences do not require much depth. They can be performed every month. The number of episodes is $E$ and their length is one: the learning is performed over values that originate from the past month only. This could be relaxed, but it creates more reactive portfolios, as opposed to learning on chronological sequences. Again, actions $A_s$ consist in randomly choosing (with replacement) $N$ assets and sampling their portfolio weights according to the current policy $\pi_{\bm{\theta}}$. Rewards can only be returns. 

Both types of learning processes are summarized in Table \ref{tab:backtest}. Because of the numerous degrees of freedom ($\gamma$ and $\eta$ rates, initialization values, random seeds, number of episodes, etc. - see Section \ref{sec:deg} below), we restrict our study to two alternatives only. The first one links bootstrapped sequences with simple returns, while the second combines chronological sequences with Sharpe ratio rewards.

\begin{table}[!h]
\small
\begin{minipage}[b]{0.5\hsize}
\begin{tabular}{clllll}
Step &\multicolumn{4}{l}{\textbf{Chronological method}} \\ \hline
0 &\multicolumn{4}{l}{\textbf{For} every January, do:} \\
1 &  &\multicolumn{4}{l}{Extract data from previous year } \\
2 &  &\multicolumn{4}{l}{Randomly pick $N$ assets } \\
3& &\multicolumn{4}{l}{Initialize $\bm{\theta}$ } \\
4& &\multicolumn{4}{l}{\textbf{For} i $=1,\dots$ episodes, do} \\
5 &&& \multicolumn{3}{l}{Sample $N$ stocks randomly} \\
6 &&& \multicolumn{3}{l}{Generate streams $A_t$ and $R_t$ via $\pi_{\bm{\theta}}$} \\
7 &&& \multicolumn{3}{l}{Update $\bm{\theta}$ via \eqref{eq:pgrad}} \\
8&& \textbf{For}&  next 12 months, do: \\
9&&&allocate via average policy, Eq. (\ref{eq:wgts})\\
10&&& store realized returns    \\ \hline
\end{tabular}
\end{minipage}
\begin{minipage}[b]{0.5\hsize}
\begin{tabular}{clllll}
Step &\multicolumn{4}{l}{\textbf{Bootstrap method}} \\ \hline
0 &\multicolumn{4}{l}{\textbf{For} every date $t=2,\dots,T-1$, do:} \\
1 &  &\multicolumn{4}{l}{Extract data from previous month } \\
2 &  &\multicolumn{4}{l}{Randomly pick $N$ assets} \\
3& &\multicolumn{4}{l}{Initialize $\bm{\theta}$ } \\
4& &\multicolumn{4}{l}{\textbf{For} i $=1,\dots$ episodes, do} \\
5 &&& \multicolumn{3}{l}{Sample $N$ stocks randomly} \\
6 &&& \multicolumn{3}{l}{Generate action and reward } \\
7 &&& \multicolumn{3}{l}{Update $\bm{\theta}$ via \eqref{eq:pgrad}}  \\
8&& \textbf{For}&  date $t+1$, do: \\
9&&&allocate via average policy, Eq. (\ref{eq:wgts})\\
10&&& store realized returns    \\ \hline
\end{tabular}
\end{minipage}
\caption{Macro view of backtest stages. The differences in the two REINFORCE implementations are outlined in Section \ref{eq:prot}. \label{tab:backtest}}
\end{table}


\section{Results}
\label{sec:3}

\subsection{Degrees of freedom}
\label{sec:deg}

Before we move towards a presentation of our results, we expose the richness of the flexibility of the modelling approach. Below, we list the different choices we need to make to run one batch of learning over our whole dataset:
\begin{itemize}
\setlength\itemsep{-0.3em}
\item Choices that we will always compare: \vspace{-3mm}
\begin{enumerate}
\setlength\itemsep{-0.3em}
\item Whether to learn form chronological sequences or bootstrap (see Table \ref{tab:backtest}).
\item Whether to resort to a linear (\textbf{F1}) or an exponential (\textbf{F2}) policy. 
\end{enumerate}
\item Choices that we will discuss: \vspace{-3mm}
\begin{enumerate}
\setlength\itemsep{-0.3em}
\item The number of episodes.
\item The initialization values of $\bm{\theta}_0$.
\item The seed for the quasi-random number generator.
\item $\eta$, the learning rate in the update of the policy parameters. To simplify scale issues, the gradient in the update is divided by the maximum absolute value of  gradient values. This makes the learning rate easier to interpret. 
\end{enumerate}
\item Choices that are fixed throughout the entire study: \vspace{-3mm}
\begin{enumerate}
\setlength\itemsep{-0.3em}
\item $\gamma$, the reward discounting factor. For bootstrap learning, this parameter is irrelevant. For chronological sequences, since they only last 12 months, there is no major nor obvious gain in using a discount. Thus we set $\gamma=1$.
\item $N$, the number of stocks that are integrated in the portfolio (used to compute the reward). As discussed in Section \ref{sec:a}, it is impossible to consider very large portfolios because of the asymptotic behavior of the functions required in the Dirichlet forms. The most obvious choice is $N=100$. Larger portfolios impose stringent constraints on the Dirichlet parameters, making the approach impractical. By construction, smaller portfolios lack diversification and may reflect cross-sectional information insufficiently.
\item The bounds on the Dirichlet parameters (see Section \ref{sec:a} in the Appendix). They are fixed to $a_-=0.2$ and $a_+=1.6$. These bounds are optimal empirically: going beyond leads to numerical errors. 
\item \textit{Rewards}. Bootstrapped sequences can only work with simple return rewards. Chronological sequences are more flexible. To reduce the amount of results, we work with the differential Sharpe ratio for temporal learning.
\end{enumerate}
\end{itemize}

\subsection{Baseline output: factor coefficients and Dirichlet parameters}

First and foremost, the Dirichlet policy depends on its parameter vector $\bm{\theta}$. It is thus natural to start by showing the evolution of the $\theta_t^{(k)}$ for the four specifications we work with. They are shown in Figure \ref{fig:theta}. While only a few cases are outlined, they are qualitatively representative of all the other parameter configurations studied below. 

\begin{figure}[!h]
\begin{center}
\includegraphics[width=15cm]{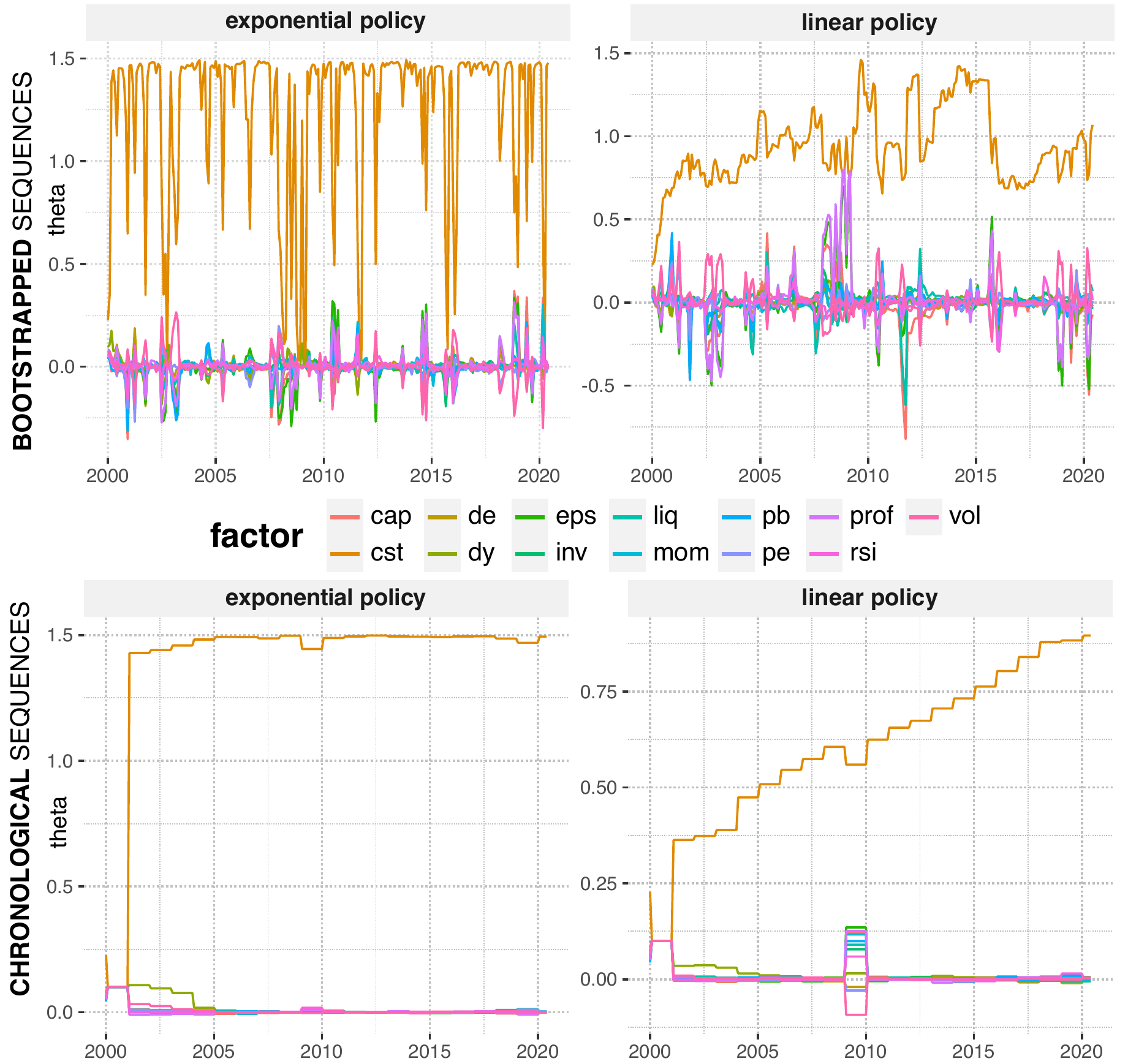}\vspace{-3mm}
\caption{\textbf{Values of $\theta^{(k)}_t$}. \small We plot the value of parameters through time for our two learning schemes (bootstrapped (upper panel) and chronological (middle panel)) and two policy schemes (linear (right) versus exponential (left) - see Equation (\ref{eq:policies})). The parameters are the following: the learning rate $\eta = 0.1$, the number of episodes $E=500$, the bounds for the $a_n$ are $[0.2,1.6]$, the initial value for all $\theta^{k}$ is 1. Finally, the random seed in 42.}
\label{fig:theta}
\end{center}
\end{figure}

There is a clear discrepancy between the two learning schemes: chronological sequences (lower panels) lead to the hegemony of the constant variable while the bootstrapped sequences (upper plots) give more room to the firm characteristics. The latter are also much more volatile through time. Across both learning methods, the exponential policy (to the left) saturates the constant much more often, compared to the linear policy (to the right).

This has consequences on the optimal weights derived from the policy parameters via Equation (\ref{eq:policies}). The chosen portfolio weights are simply chosen as the mean of the policy distributions: $w_{t,n}=a_{t,n}/\sigma_t$ (see equation \eqref{eq:wgts}). In Figure \ref{fig:wgts}, we plot the histogram of these weights. The distributions are grouped by year and then stacked on the graph. Because the number of assets changes through time (see Figure \ref{fig:nbfirms}), we add two bounds on the plots. The full vertical black line marks the minimum uniform allocation (1/$N$), which is reached in 2016. The dotted line shows the maximum 1/$N$ weights, which are implemented in 2001.

\begin{figure}[!h]
\begin{center}
\includegraphics[width=15cm]{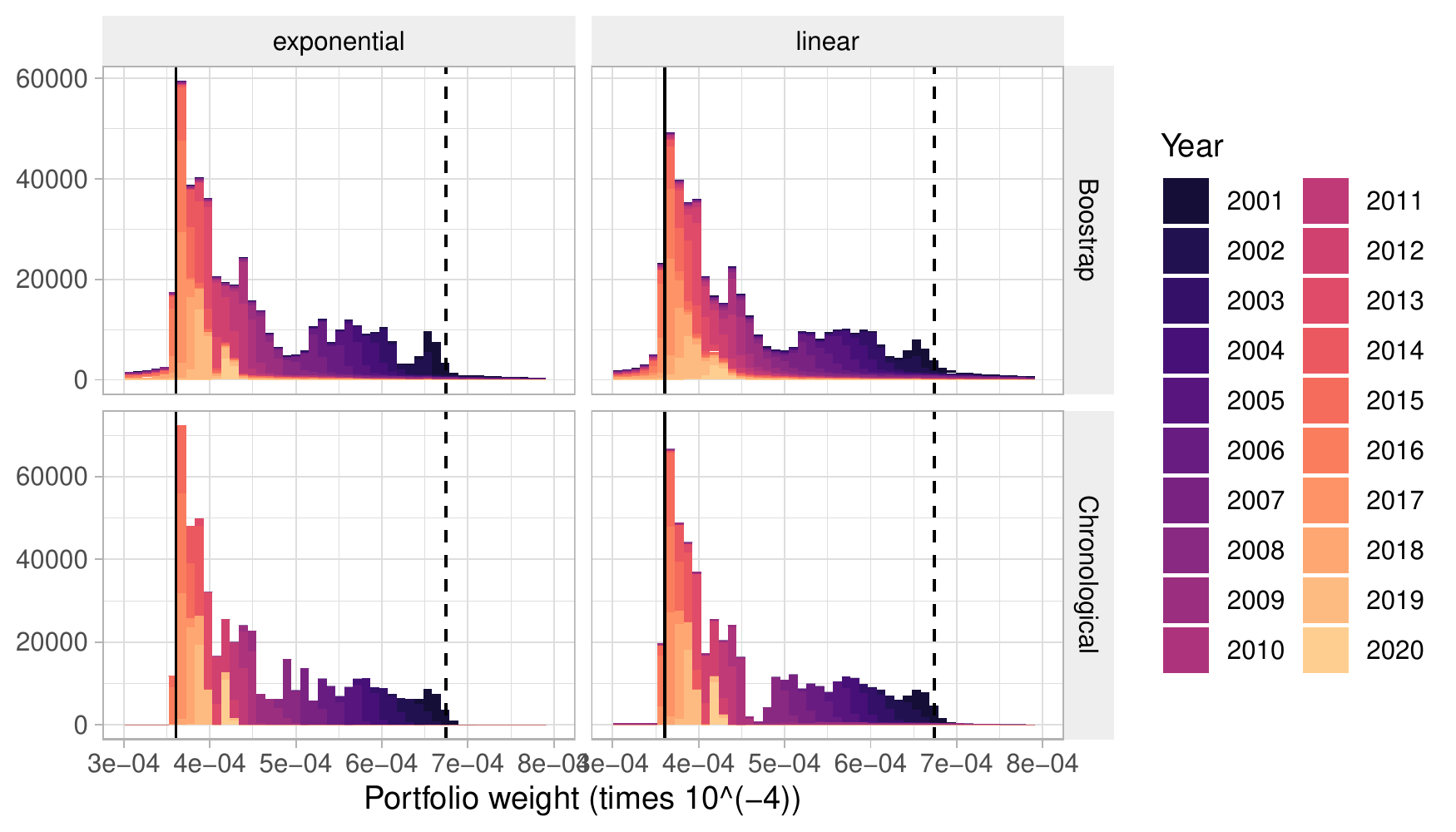}\vspace{-3mm}
\caption{\textbf{Distribution of weights}. \small We plot the histogram of portfolio weights for our two learning schemes (bootstrapped (upper panel) and chronological (middle panel)) and two policy schemes (linear (right) versus exponential (left) - see Equation (\ref{eq:policies})). The histograms are stacked and each color stands for a given year. The full vertical line marks the minimum uniform allocation (1/$N$) over all dates, while the dotted line shows the maximum 1/$N$ value. The parameters are the following: the learning rate $\eta = 0.1$, the number of episodes $E=500$, the bounds for the $a_n$ are $[0.2,1.6]$, the initial value for all $\theta^{k}$ is 1. Finally, the random seed in 42.} 
\label{fig:wgts}
\end{center}
\end{figure}

First of all, because of the increase in the number of stocks, there is a temporal shift in the distribution of weights. Average weights are smaller in the later years and portfolios are more diversified. Moreover, weights are not very dispersed and appear concentrated around their means, which implies that allocations are relatively close to the EW benchmark and do not make strong bets towards some assets. This is especially true for the lower panel (chronological sequences), where there are almost no outliers beyond the vertical lines. This is consistent with the prominence of the constant in the lower panels of Figure \ref{fig:theta}. 

Again, we underline that these results depend only marginally on the parametric choices described in the caption of the figures. The concentration of portfolios does not depend much implementation choices, as long as they are realistic (e.g., sufficiently many episodes, or moderate learning rate).\footnote{Additional results are available upon request. }

\subsection{Portfolio performance}

The ultimate yardstick for sophisticated portfolio construction methods is out-of-sample performance. It is usually presented in several steps: starting with a pure return indicator, and complementing it by other risk-adjusted metrics, like the Sharpe ratio.
In Figure \ref{fig:rets}, we display average realized returns (left panels) and Sharpe ratios (right plots) of the mean policy for couples of values for random seeds, learning rate, and parameter initialization.

\begin{figure}[!h]
\begin{center}
\includegraphics[width=16.5cm]{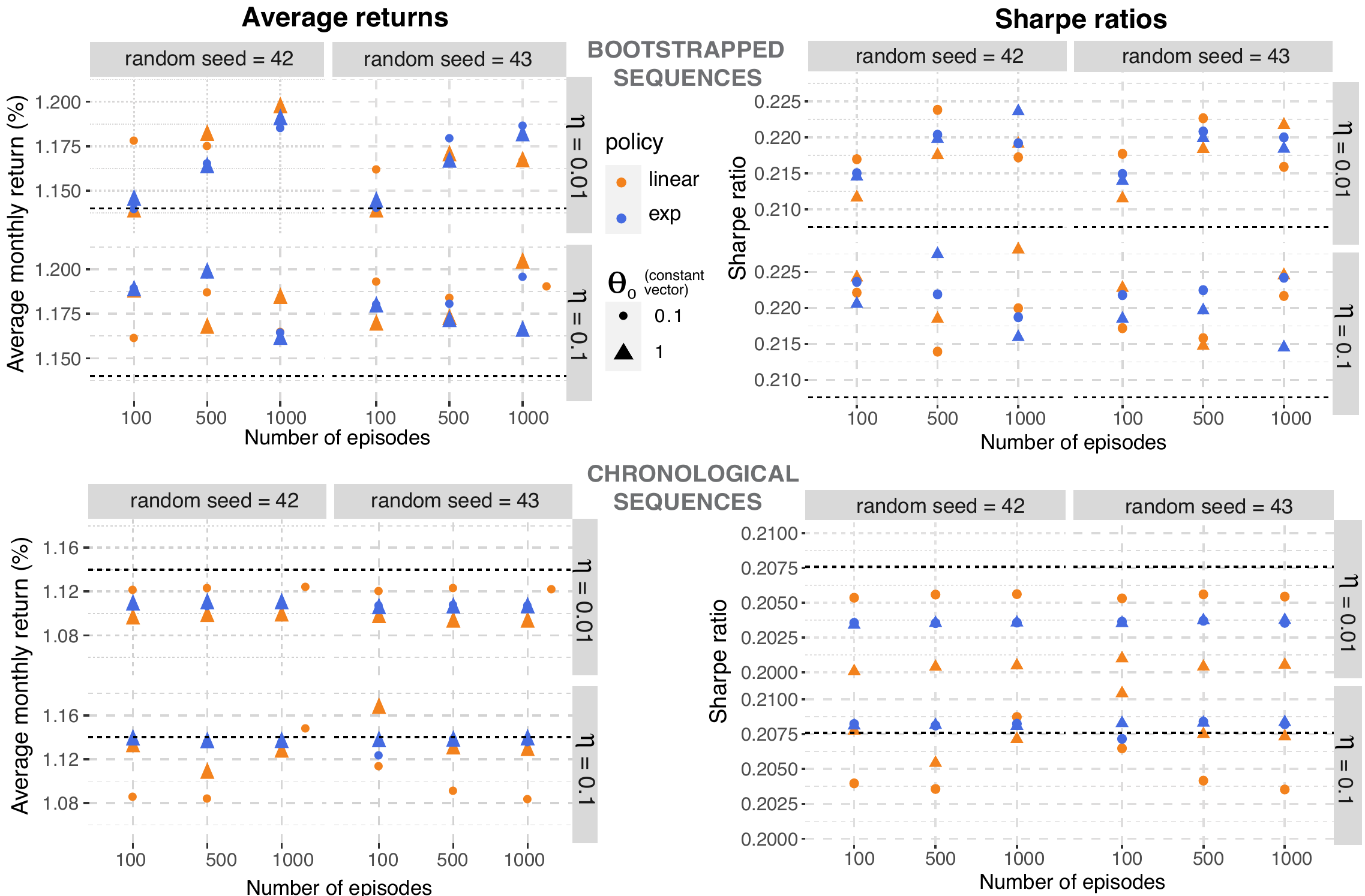}\vspace{-3mm}
\caption{\textbf{Performance}. \small We plot the average returns (left quadrants) and Sharpe ratios (right quadrants) of the portfolio allocations for our two learning schemes (bootstrapped (upper panel) and chronological (lower panel)), and two policy schemes (linear (orange points) versus exponential (blue points) - see Equation (\ref{eq:policies})). The dotted horizontal line marks the performance of the EW (1/$N$) portfolio.} 
\label{fig:rets}
\end{center}
\end{figure}

The only robust conclusion is that bootstrap sequences perform better than sequential ones. Almost all bootstrap configurations surpass the EW benchmark, while it is the opposite for the portfolios based on chronological learning. One reason for this may be that bootstrap learning is more reactive, while sequential learning will learn from older data. The four degrees of freedom (random seed, number of episodes, learning rate and $\bm{\theta}_0$) have an impact on average returns that is not consistent across configurations. Notably, this corroborates the sensitivity of RL algorithms to random seeds (see, e.g., \cite{henderson2017deep}, \cite{islam2017reproducibility}, and \cite{colas2018many}). Nevertheless, the magnitude of changes is small overall: average returns are scattered between 1.08\% and 1.2\%, so that the difference with the uniform allocation (1.14\%) is not significant.\footnote{A simple $t$-test of series of RL-based portfolio returns versus $1/N$ portfolio returns yields $p-$values between 0.82 and 0.998.} Thus, even though parameter configurations alter the results, the changes are very limited in magnitude and all portfolios remain somewhat in the vicinity of the $1/N$ benchmark.

We end this subsection with some comments on transaction costs, which can be a major issue when portfolio turnover soars. Fortunately, with policies which remain in the vicinity of the equally-weighted portfolio, this is not the case. In all our backtests, turnover is evaluated as the average monthly asset rotation:
$$\text{Turn}=\frac{1}{T}\sum_{t=1}^TN_{t}^{-1}\sum_{n=1}^{N_t}|w_{t,n}-w_{t-,n}|,$$
where $w_{t-,n}$ is the weight of asset $n$ just before rebalancing. In terms of magnitude, the Dirichlet policies have turnovers around 9\%-10\%, while capitalization-weighted portfolios oscillate around 20\%, on a monthly basis. This may seem surprising, as cap-weighted allocations are known to be extremely efficient with respect to trading costs. This is however true only when the investment universe is fixed. When the set of assets changes, cap-weighted portfolios are more penalized (see, e.g., Table 5 of \cite{demiguel2009generalized}). The cost of the trading, if, as in \cite{goto2015improving}, we conservatively assume a 50 basis point fee per unit of turnover, will be 5 basis points per month for the Dirichlet policies, which is very reasonable, compared to the roughly 110 basis points of monthly returns that are generated on average.

In unreported results,\footnote{These results are available upon request.} we built a new learning scheme aimed to penalize asset rotation. This can be done via a correction of the reward $R_t$ that subtracts the transaction costs incurred by the update in portfolio weights. However, the results were disappointing, possibly for two reasons. First, because the changing set of assets is likely to blur the learning process. Second, because the characteristics are not able to capture the drivers of transaction costs.



\section{Discussion}
\label{sec:4}

The main empirical conclusion from the above exercise is underwhelming because a sophisticated machinery produces a simplistic outcome. This resonates with earlier contributions which also document that RL is able to ``\textit{rediscover}'' known results.\footnote{In more complex situations, \cite{chaouki2020deep} and \cite{kong2018new} have shown that RL is able to solve mainstream optimization problems.} From a pure performance standpoint, equally-weighted portfolios have been documented to be solid benchmarks in the long run (see \cite{demiguel2009optimal}, and \cite{plyakha2015equal}). Therefore, our RL strategy, while incapable of timing the factors, nonetheless suggests a sensible allocation.

\subsection{Characteristics are weakly priced}

One reason why the constant characteristic dominates in RL portfolio may simply be \textit{noise}. During each episode, the bootstrap procedure selects stocks randomly and the algorithm extracts the gradient that works best for these stocks. Unfortunately, in the subsequent episode, the relationship between returns and predictors may very well be completely different, which will attenuate the effect because the new gradient is likely to cancel the previous one. This is linked to the absence of arbitrage. If one variable (e.g., price-to-book) consistently predicted the cross-section of returns, then it would be easy to generate certain profits.

Empirically, the only dominating effect on markets is the equity premium, according to which returns in excess of the risk-free rate are on average positive (with long term means between 3\% and 10\% annually, depending on studies and markets).\footnote{To a lesser extent, \cite{smith2021have} show that momentum is the only ``risk factor'' which is associated to persisting average returns. The size and value premia seem to have completely disappeared. This may be linked to alpha decay and we refer to \cite{chordia2014have}, \cite{mclean2016does}, \cite{jacobs2020anomalies}, \cite{penasse2020understanding} and \cite{shanaev2021efficient} for more details on this matter.} Raw returns are higher and \cite{ilmanen2011expected} reports that the arithmetic return on the US equity market was above 12\% over the whole XX$^{th}$ century.

This explains why in Figure \ref{fig:theta}, the curves for the constant shrink after financial crises. In the upper panels, the monthly samples are more reactive and it is clear that in 2008 (left plot) the strong negative returns penalize the constant term. The declines of $\theta^{(\text{cst})}$ in 2015 and 2018 also correspond to years of negative returns for the US equity market (see Figure \ref{fig:nbfirms}). Thus, it is only in bad periods that the other characteristics have enough room to impact the allocation scheme.

This effect is also linked to the pricing ability of characteristics (PAC) introduced in Section \ref{sec:PAC}. In Figure \ref{fig:PAC} below, we plot a proxy for the \textit{realized} PAC, which we define as $\text{PAC}_t^{(k)}=N_t^{-1}\sum_{n=1}^{N_t}r_{t+1,n}x_{t,n}^{(k)}$, where $N_t$ is the number of asset in the dataset at time $t$.

\begin{figure}[!h]
\begin{center}
\includegraphics[width=15cm]{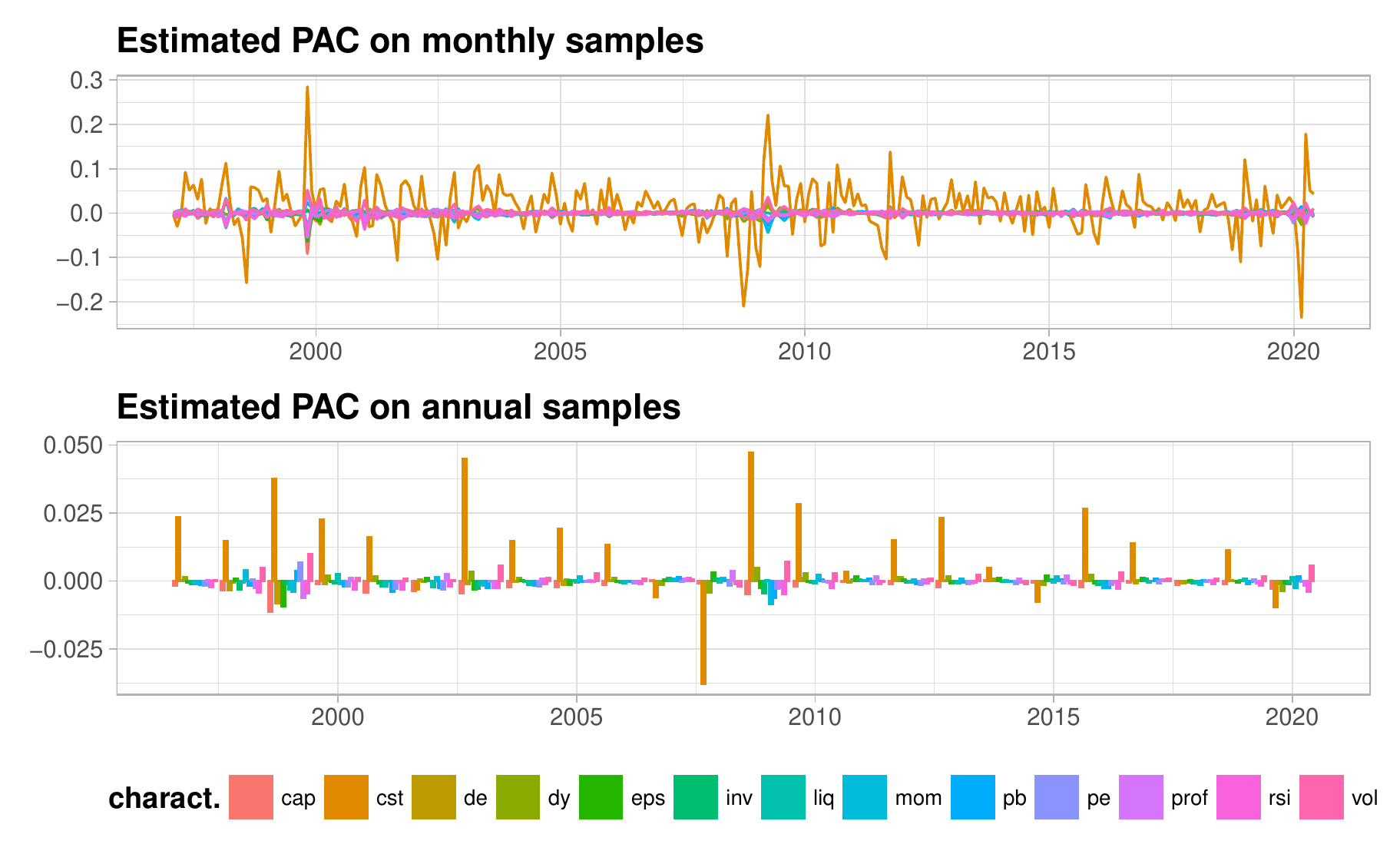}\vspace{-3mm}
\caption{\textbf{Pricing ability of characteristics}. \small We plot the realized PAC on monthly samples in the upper panel. For annual samples, in the lower panel, the formula is $\text{PAC}_t^{(k)}=\frac{1}{12}\sum_{s=0}^{11}N_{t-s}^{-1}\sum_{n=1}^{N_{t-s}}r_{t+1-s,n}x_{t-s,n}^{(k)}$.} 
\label{fig:PAC}
\end{center}
\end{figure}

The first qualitative finding is that the magnitude of the PAC is much larger for the constant. This is somewhat logical, because the PAC in this case relates to a long-only portfolio, while for the other characteristics, the PAC is the average return of a long-short allocation. Nevertheless, if one characteristic was strongly priced (i.e., if it comoved substantially with returns in the cross-section of assets), it would be associated with large PAC values, at least for a short period of time (as expressed in Equation \eqref{eq:cond9}). In addition, as expected from the equity premium, the PAC values are often positive for the constant characteristic. This is particularly salient in the lower panel, which is why its parameter increases in Figure \ref{fig:theta}. Because the chronological learning process is only updated every January, the pronounced market loss of 2008 is only reverberated in 2009, where the parameter for the constant sustains a small plunge.

To conclude this section, we plot in Figure \ref{fig:theta1} the evolution of $\theta^{(k)}_t$ when the characteristics are uniformly distributed on the \textit{unit} interval. In this case, the PACs correspond to long-only portfolios where the weights are proportional to the characteristics' values. Again, the prominence of the constant term is manifest, which is yet another proof that the characteristics used in the traditional asset pricing literature are relatively weakly priced. 

\begin{figure}[!h]
\begin{center}
\includegraphics[width=15cm]{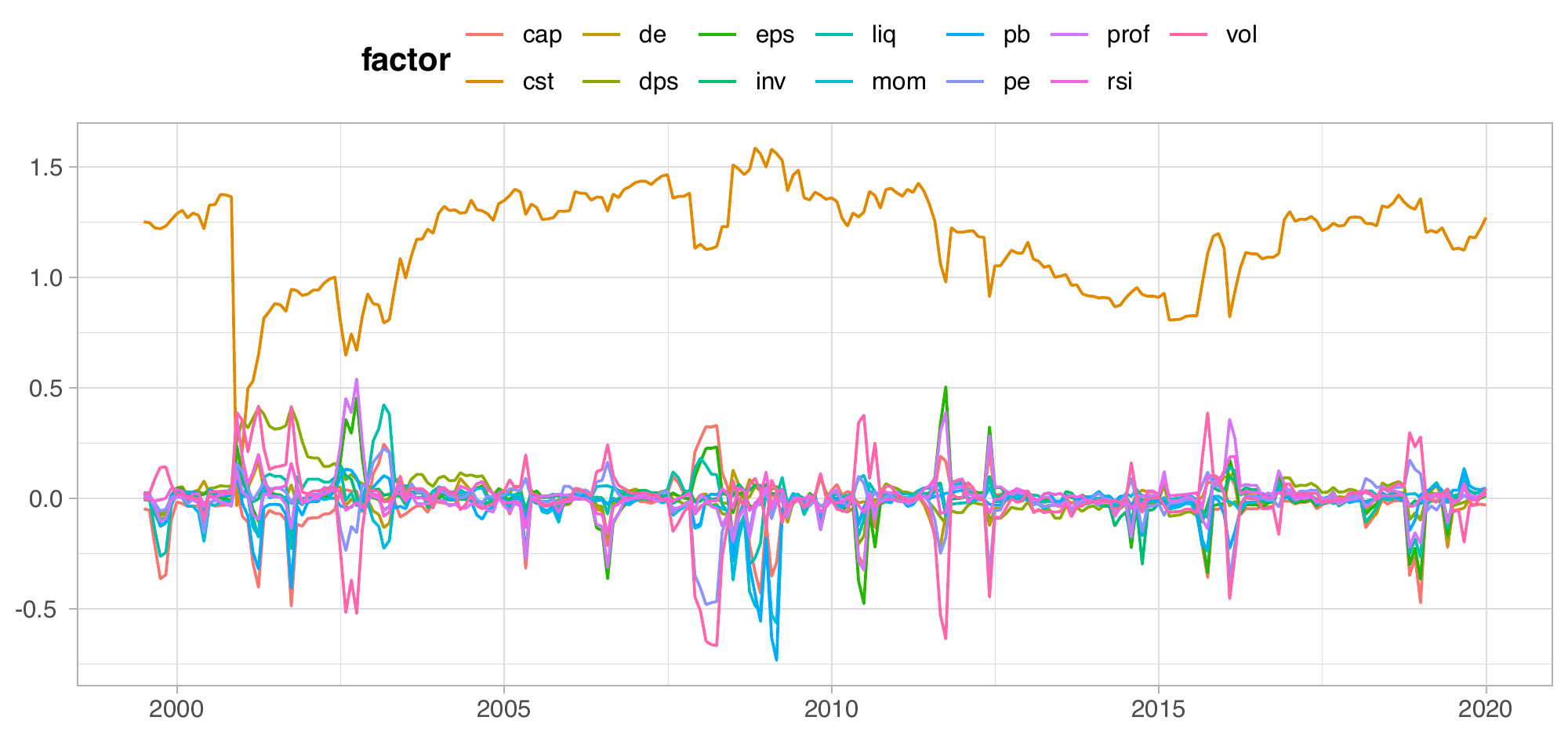}\vspace{-3mm}
\caption{\textbf{Values of $\theta^{(k)}_t$ when characteristics lie in [0,1]}. \small We plot the value of parameters through time for the linear bootstrap learning scheme. The parameters are the following: the learning rate $\eta = 0.1$, the number of episodes $E=500$, the bounds for the $a_n$ are $[0.2,1.6]$, the initial value for all $\theta^{k}$ is 1. Finally, the random seed in 42.}
\label{fig:theta1}
\end{center}
\end{figure}


\subsection{Insights from a toy factor model}

The purpose of this section is to further elaborate on the patterns observed in Figure \ref{fig:theta}, which shows that the most important driver of the policies is the constant term. This preponderance implies that RL-based allocations remain in close to the equally-weighted portfolio. We propose a factor-based allocation model that tries to replicate this stylized property. In particular, we investigate if RL-driven decisions can be reproduced by simpler models. We start with a theoretical contribution and subsequently move towards simple statistical estimates to illustrate the forces at work.

We assume that there are $N$ assets on the market. Their future returns are driven by a linear model
\begin{equation}
\bm{r}=\bm{X \beta} + \bm{\epsilon},
\label{eq:model}
\end{equation}
where $\bm{X}=X_{nk}$ is a $N \times(K+1)$ matrix of firm-specific characteristics with $N>K+1$. We omit the time index for notational simplicity. The first column of the matrix is constant with all elements equal to one. The innovations $\bm{\epsilon}$ and loadings $\bm{\beta}=(\beta_0,\beta_1,\dots,\beta_K)$ are random and mutually independent. Moreover, we posit that the errors are independent across assets (and independent of loadings), and have zero means and uniform variance: 
\begin{align}
\bar{\bm{\epsilon}}&=\E[\bm{\epsilon}]=\bm{0}_N \label{eq:mean} \\
\bm{\Sigma}_\epsilon&=\E[\bm{\epsilon \epsilon}^\intercal]=\sigma_\epsilon^2\bm{I}_N, \label{eq:variance}
\end{align}
where $\bm{0}_N$ is a $N$-dimensional vector of zeroes and $\bm{I}_N$ the corresponding identity matrix. For analytical tractability concerns, we also need to be more specific with regard to the covariance structure of loadings and characteristics. We assume that both
\begin{align}
\bm{\Sigma}_\beta&:=\E[( \bm{\beta}-\bar{\bm{\beta}})(\bm{\beta}-\bar{\bm{\beta}})^\intercal] \quad \text{and} \\
\hat{\bm{\Sigma}}_X&:= N^{-1}(\bm{X}-\bm{1}_N\bar{\bm{x}}^\intercal)^\intercal(\bm{X}-\bm{1}_N\bar{\bm{x}}^\intercal) \label{eq:XX}
\end{align}
are diagonal with diagonal values equal to $\sigma_{\beta,k}^2$ and $\sigma_{X,k}^2$ respectively, where we have casually written $\bar{\bm{\beta}}$ for the mean vector of $\bm{\beta}$ and $\bar{\bm{x}}=N^{-1}\bm{X}^\intercal\bm{1}_N$ for the column vector of sample column means of $\bm{X}$. The function diag$(\cdot)$ maps a vector into the corresponding diagonal matrix. Note that since $\bm{X}$ is given and non-random, the matrix $\hat{\bm{\Sigma}}_X$ is its \textit{sample} covariance matrix. The fact that $\hat{\bm{\Sigma}}_X$ is diagonal implies that the firm characteristics are uncorrelated, i.e., that they carry information pertaining to companies that is not redundant. The $\beta_k$ are also unrelated, which means that each factor impacts returns regardless of the effect of other firm attributes. 

Technically, the model is linked to the PAC defined in Definition \eqref{def:pac}. Indeed, the OLS estimator for $\bm{\beta}$ reads $\hat{\bm{\beta}}=(\bm{X}^\intercal\bm{X})^{-1}\bm{X}^\intercal \bm{r}$, where the second component $\bm{X}^\intercal \bm{r}$ is, up to a scaling factor, the vector of estimated PACs (the first element being the sum of returns). This implies that the estimated loadings are linear combinations of the pricing abilities of characteristics. Notably, if the characteristics are centered and independent, then $(\bm{X}^\intercal\bm{X})^{-1}$ is diagonal and the loading for a given characteristic is simply proportional to its PAC.

In this framework, some representative agent seeks to maximize a standard quadratic function of expected returns. The portfolio is based on firms' characteristics in a linear fashion: $\bm{w}=\bm{X\theta}$, where $\bm{\theta}=(\theta^{(0)},\theta^{(1)},\dots,\theta^{(K)})$ drives and reflects the agents beliefs and preferences with regard to the corresponding factors. This form is on purpose the same as \textbf{F1} in Equation \eqref{eq:policies} (Section \ref{sec:diripol}), which drives the average portfolio allocation. The utility function is quadratic (as in the standard mean-variance formulation), hence, the optimization program is the following:
\begin{equation}
\underset{\bm{\theta}}{\max} \ \E\left[ \bm{\theta}^\intercal\bm{X}^\intercal\bm{X}\bm{\beta}-\frac{\gamma}{2}\bm{\theta}^\intercal\bm{X}^\intercal( \bm{X} (\bm{\beta}-\bar{\bm{\beta}})+\bm{\epsilon})(\bm{\epsilon}^\intercal+(\bm{\beta}-\bar{\bm{\beta}})^\intercal \bm{X}^\intercal)\bm{X}\bm{\theta} \right], \quad \text{s.t.} \quad \bm{\theta}^\intercal\bm{X}^\intercal\bm{1}_N=1.
\label{eq:maxprog} 
\end{equation}

The lemma below provides the solution to this problem.

\begin{lemma}
\label{lem:ass}
If $\bm{\Sigma}_\beta$, $\bm{\Sigma}_\epsilon$ are diagonal and assuming \eqref{eq:mean}-\eqref{eq:variance}, the solution to \eqref{eq:maxprog} is 
\begin{equation} \normalfont
\tilde{\bm{\theta}}_*=\gamma^{-1}\text{diag}(\bm{\sigma}^2)^{-1}\left(\bm{I}_K- \frac{\text{diag}(\bm{\sigma}^2_\beta) \bar{\bm{x}}\bar{\bm{x}}^\intercal \text{diag}(\bm{\sigma}^2)^{-1}}{1+\bar{\bm{x}}^\intercal \text{diag}(\bm{\sigma}^2_\beta)\text{diag}(\bm{\sigma}^2)^{-1}\bar{\bm{x}}}\right) \left(N^{-1}\bar{\bm{\beta}}+  c(\bm{X}^\intercal \bm{X})^{-1} \bar{\bm{x}} \right), \label{eq:thetastar0}
\end{equation}
where $ \normalfont\text{diag}(\bm{\sigma}^2)=\text{diag}(\bm{\sigma}^2_X)\text{diag}(\bm{\sigma}^2_\beta)+N^{-1}\sigma_\epsilon^2 \bm{I}_K$ and $c$ is the scaling constant that warrants the budget constraint is satisfied. If, in addition, $\bar{\bm{x}}^\intercal=[1 \quad \bm{0}_K^\intercal]$, then
\begin{equation}
\tilde{\bm{\theta}}_*=\left\{ \begin{array}{ll}
\tilde{\theta}^{(0)}_* &=N^{-1}\vspace{2mm}\\ 
\tilde{\theta}^{(j)}_* &= (\gamma N)^{-1}  \frac{\bar{\beta}_j}{\sigma_{X,j}^2\sigma_{\beta,j}^2+\sigma_\epsilon^2/N}
\end{array} \right. .  
\label{eq:thetastar}
\end{equation}
\end{lemma}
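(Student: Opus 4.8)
The plan is to reduce the stochastic program \eqref{eq:maxprog} to a deterministic concave quadratic program in $\bm{\theta}$, solve it with a single Lagrange multiplier, and then massage the resulting matrix inverse into the stated diagonal-plus-rank-one form. First I would carry out the expectation. The linear term is immediate: $\E[\bm{\theta}^\intercal\bm{X}^\intercal\bm{X}\bm{\beta}] = \bm{\theta}^\intercal\bm{X}^\intercal\bm{X}\bar{\bm{\beta}}$. For the quadratic term, writing $\bm{z} = \bm{X}(\bm{\beta}-\bar{\bm{\beta}}) + \bm{\epsilon}$ and expanding $\bm{z}\bm{z}^\intercal$, the two cross terms vanish because $\bm{\beta}$ and $\bm{\epsilon}$ are independent and $\E[\bm{\epsilon}] = \bm{0}_N$, leaving $\E[\bm{z}\bm{z}^\intercal] = \bm{X}\bm{\Sigma}_\beta\bm{X}^\intercal + \bm{\Sigma}_\epsilon$ by \eqref{eq:mean}--\eqref{eq:variance}. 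The objective thus collapses to $L(\bm{\theta}) = \bm{\theta}^\intercal\bm{X}^\intercal\bm{X}\bar{\bm{\beta}} - \tfrac{\gamma}{2}\bm{\theta}^\intercal\bm{A}\bm{\theta}$ with $\bm{A} = \bm{X}^\intercal(\bm{X}\bm{\Sigma}_\beta\bm{X}^\intercal + \sigma_\epsilon^2\bm{I}_N)\bm{X}$. Since $\sigma_\epsilon^2 > 0$, $\bm{\Sigma}_\beta \succeq 0$, and $\bm{X}$ has full column rank ($N > K+1$), the matrix $\bm{A}$ is positive definite, so $L$ is strictly concave and the first-order conditions pin down the global maximizer.

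Next I would form the Lagrangian for the single linear constraint $\bm{\theta}^\intercal\bm{X}^\intercal\bm{1}_N = 1$ and differentiate, giving $\bm{\theta} = \gamma^{-1}\bm{A}^{-1}(\bm{X}^\intercal\bm{X}\bar{\bm{\beta}} - \lambda\bm{X}^\intercal\bm{1}_N)$. The crucial algebraic move is to factor $\bm{A} = (\bm{X}^\intercal\bm{X})(\bm{\Sigma}_\beta\bm{X}^\intercal\bm{X} + \sigma_\epsilon^2\bm{I})$, so that $\bm{A}^{-1} = (\bm{\Sigma}_\beta\bm{X}^\intercal\bm{X} + \sigma_\epsilon^2\bm{I})^{-1}(\bm{X}^\intercal\bm{X})^{-1}$; the trailing factor then cancels the $\bm{X}^\intercal\bm{X}$ multiplying $\bar{\bm{\beta}}$ and turns the constraint term into $(\bm{X}^\intercal\bm{X})^{-1}\bar{\bm{x}}$ after using $\bm{X}^\intercal\bm{1}_N = N\bar{\bm{x}}$. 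To identify the middle matrix, I substitute the sample-covariance identity $\bm{X}^\intercal\bm{X} = N(\hat{\bm{\Sigma}}_X + \bar{\bm{x}}\bar{\bm{x}}^\intercal)$ (obtained by expanding \eqref{eq:XX}); because $\hat{\bm{\Sigma}}_X$ and $\bm{\Sigma}_\beta$ are diagonal and hence commute, this yields $\bm{\Sigma}_\beta\bm{X}^\intercal\bm{X} + \sigma_\epsilon^2\bm{I} = N(\text{diag}(\bm{\sigma}^2) + \text{diag}(\bm{\sigma}^2_\beta)\bar{\bm{x}}\bar{\bm{x}}^\intercal)$ with $\text{diag}(\bm{\sigma}^2)$ exactly as defined in the statement. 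A Sherman--Morrison inversion of this diagonal-plus-rank-one matrix reproduces the bracketed factor in \eqref{eq:thetastar0}, and folding the multiplier into the free scalar $c$ (fixed by the budget constraint) gives the claimed closed form.

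For the specialization $\bar{\bm{x}}^\intercal = [1\ \bm{0}_K^\intercal]$, the constant characteristic has zero sample variance, so the leading entry of $\hat{\bm{\Sigma}}_X$ vanishes and $\bm{X}^\intercal\bm{X} = N\,\text{diag}(1, \sigma_{X,1}^2, \dots, \sigma_{X,K}^2)$. The constraint then reduces to $N\theta^{(0)} = 1$, giving $\tilde{\theta}^{(0)}_* = N^{-1}$ directly. Moreover $\bar{\bm{x}}$ is supported on the constant's coordinate alone, so the rank-one correction in the Sherman--Morrison factor modifies only the entry corresponding to the constant; the factor is therefore diagonal and $(\bm{X}^\intercal\bm{X})^{-1}\bar{\bm{x}} \propto \bar{\bm{x}}$ contributes nothing to the coordinates $j \ge 1$. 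Reading off the $j$-th diagonal entry of $\gamma^{-1}\text{diag}(\bm{\sigma}^2)^{-1}$ acting on $N^{-1}\bar{\bm{\beta}}$ then yields $\tilde{\theta}^{(j)}_* = (\gamma N)^{-1}\bar{\beta}_j/(\sigma_{X,j}^2\sigma_{\beta,j}^2 + \sigma_\epsilon^2/N)$.

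I anticipate the main obstacle to be the bookkeeping in the factorization step: keeping track of which copy of $\bm{X}^\intercal\bm{X}$ cancels, verifying that the diagonal matrices genuinely commute so that the denominator produced by Sherman--Morrison matches the symmetric form printed in \eqref{eq:thetastar0}, and correctly folding all leftover scalars (the factor $N$ from $\bm{X}^\intercal\bm{1}_N$ and the multiplier $\lambda$) into the single constant $c$ without double counting.
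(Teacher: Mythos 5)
Your proposal is correct and follows essentially the same route as the paper's proof: Lagrangian with a single multiplier, first-order conditions, factoring $\bm{A}=(\bm{X}^\intercal\bm{X})(\bm{\Sigma}_\beta\bm{X}^\intercal\bm{X}+\sigma_\epsilon^2\bm{I}_K)$, the sample-covariance identity $\bm{X}^\intercal\bm{X}=N(\text{diag}(\bm{\sigma}^2_X)+\bar{\bm{x}}\bar{\bm{x}}^\intercal)$, Sherman--Morrison, and then the specialization with the budget constraint forcing $\tilde{\theta}^{(0)}_*=N^{-1}$. Your two small additions are improvements over the paper's write-up: you verify strict concavity (positive definiteness of $\bm{A}$ via full column rank of $\bm{X}$) so the FOC genuinely yields the maximizer, and in the special case you observe that the rank-one correction only touches the constant's coordinate (which the constraint pins down anyway) rather than invoking the paper's implicit assumption $\sigma^2_{\beta,1}=0$.
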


The proof of the lemma is located in Appendix \ref{lem:proof}. All other things equal, in the second part of the lemma, $\tilde{\theta}^{(j)}_* $ increases with $\bar{\beta}_j$, but decreases with all sources of risk: $\sigma_{X,j}^2$, $\sigma_{\beta,j}^2$, and $\sigma_\epsilon^2$. More importantly, the relative importance of the non-constant factors are strongly linked to $\gamma$. When risk aversion is low, the non-constant factors play a prominent role in the allocation choice. If, however, risk aversion is high, then the $\tilde{\theta}_*^{(j)}$ are negligible and the $1/N$ portfolio is appealing to the investor. Based on our empirical results, the latter situation seems more likely. 

One particular subcase of the lemma is when the budget constraint (to the right of Equation \eqref{eq:maxprog}) is removed. In the general case, this implies $c=0$ in \eqref{eq:thetastar0}. If non constant predictors have a zero mean, then $\tilde{\theta}_*^{(0)}$, like the other $\tilde{\theta}_*^{(j)}$, is given by the second part of \eqref{eq:thetastar}. This configuration is interesting because in practice, $\theta_j$ values that are derived from RL algorithms are not subject to the budget constraint (see, e.g., step 5 in Table \ref{tab:REINFORCE}). 

In addition, the fact that $\tilde{\theta}_*^{(j)}$ decreases to zero when $\sigma_{\beta,j}^2$ increases to infinity is consistent with the literature that finds that the EW portfolio is optimal under high model ambiguity (see, e.g., \cite{pflug2012} and \cite{maillet2015global}). Indeed if the non-constant loadings of $\bm{\beta}$ are subject to a very high level of uncertainty, then the investor will naturally and comparatively trust the constant factor much more. This results in an optimal allocation that is uniform across assets.

\subsection{Cross-sectional betas}

We illustrate the implications of Lemma \ref{lem:ass} by running monthly regressions to estimate the loadings in Equation \eqref{eq:model}. To ease interpretability, we restrict the analysis to the three most common factors in the literature: size (market capitalization), value (price-to-book) and momentum (12 month to 1 month return). The largest correlation between them is 0.18 on the whole sample, thus the hypothesis of diagonal covariance matrix is not too far-fetched. Each month, we report the OLS coefficients for Equation \eqref{eq:model}, where $\bm{r}$ is the vector of \textit{future} one month returns.

In the upper panel of Figure \ref{fig:loads}, we depict the estimated $\hat{\beta}_j$ for the three factors plus the constant. In addition, in the lower panel, we plot the scaled unconstrained theta values $\tilde{\theta}^{(j)}= \frac{\hat{\beta}_j/(2N)}{\hat{\sigma}_{X,j}^2\hat{\sigma}_{\beta,j}^2+\hat{\sigma}_\epsilon^2/N}$ (i.e., with $\gamma=2$, which is without loss of generality, as $\gamma$ is only a normalizing constant). One additional reason we resort to unconstrained $\theta^{(j)}$ values is that they do not depend on the risk aversion parameter, which only plays the role of a scaling factor.
 
\begin{figure}[!h]
\begin{center}
\includegraphics[width=13cm]{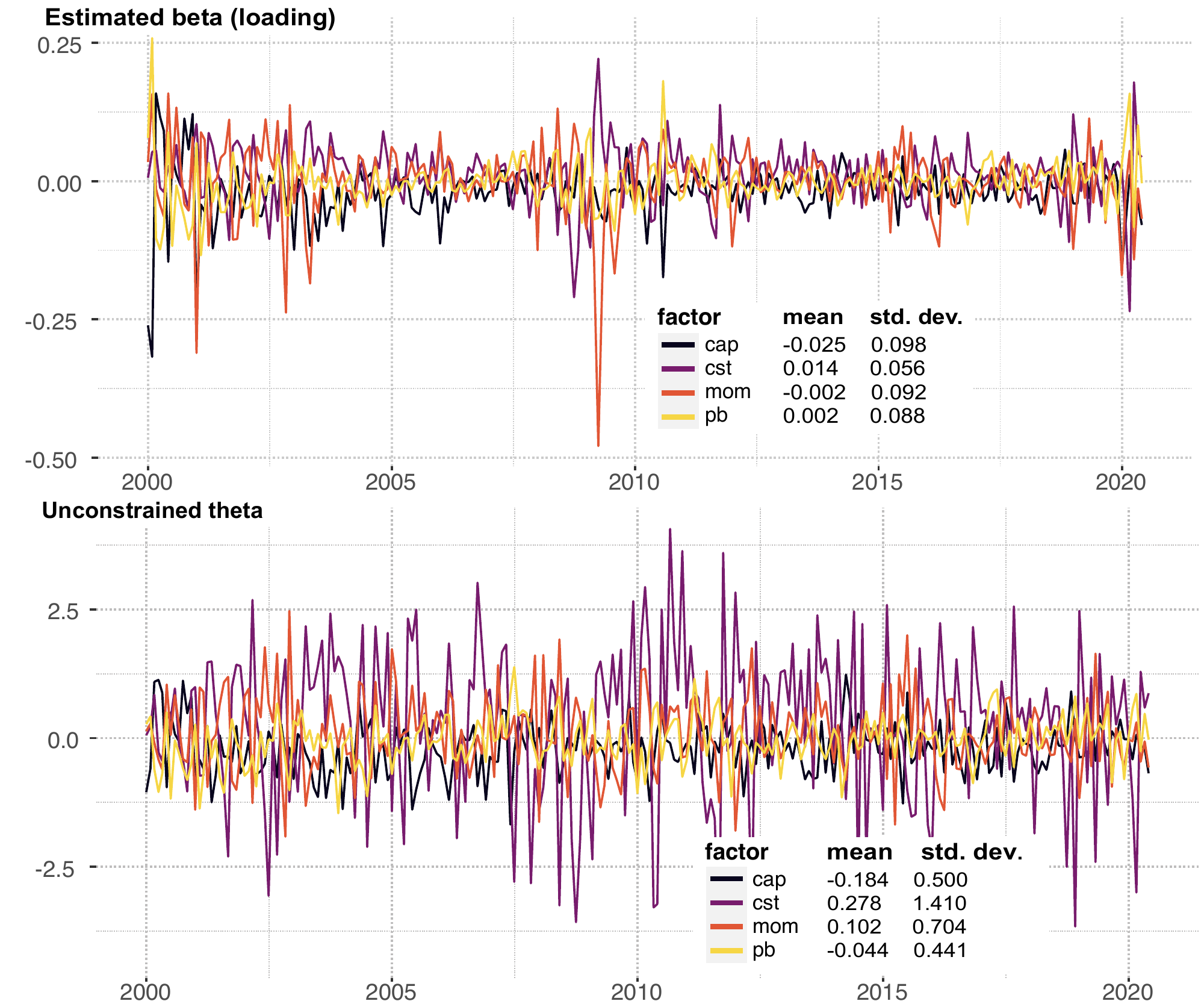}\vspace{-3mm}
\caption{\textbf{Panel betas and scaled unconstrained thetas}. \small We plot the estimated panel betas $\hat{\beta}_j$ for each month in the dataset (upper panel). For each of the four features (3 factors + constant), we show the scaled values $\tilde{\theta}^{(j)}= \frac{\hat{\beta}_j/(2N)}{\hat{\sigma}_{X,j}^2\hat{\sigma}_{\beta,j}^2+\hat{\sigma}_\epsilon^2/N}$ in the lower panel. Note: the risk aversion parameter (scaling constant) is $\gamma=2$. } 
\label{fig:loads}
\end{center}
\end{figure}

All betas and unconstrained thetas oscillate strongly, but their means and deviations are telling. The $\theta^{(0)}$ associated to the constant is volatile, but solidly positive on average, and by far dominating in magnitude, while the values for market capitalization are negative (which tends to be consistent with the so-called size anomaly).

\subsection{Comparing learning schemes}

Let us briefly summarize how agents allocate in the two frameworks (RL versus \textit{loadings}-based):
\begin{itemize}
\item When resorting to RL, the agent learns (via the policy gradient) by computing the \textbf{sensitivity} of the performance metric (average return or Sharpe ratio) with respect to variations in the parameters of the policy ($\bm{\theta}$).
\item In a more conventional characteristics-based asset pricing approach, the econometrician will evaluate the \textbf{exposure} of returns to firm specific attributes ($\bm{\beta}$). These exposures can then be translated into portfolio weights, when optimizing the average of a given utility function (see Lemma \ref{lem:ass} when the utility is quadratic). 
\end{itemize}


Theoretically, there are no reasons why these two approaches should be linked (they are hard to reconcile analytically, even though both seek to give more weight to assets that are expected to outperform - see Section \ref{sec:pgm}). However, empirically, we find some consistency between the two methods. One common feature is the dominance of the constant in the upper panel of Figure \ref{fig:theta} and in the lower panel of Figure \ref{fig:loads}. This indicates that both approaches find a strong common factor in the cross-section of returns which cannot be explained by firm level idiosyncasies. Heuristically, when the estimated $\hat{\beta}_0$ (which drives $\tilde{\theta}^{(0)}$) is high, returns are on average high in the cross-section, thus, the sensitivity of performance to variations in $\theta^{(0)}$ should also be positive. This is when the orange line in Figure \ref{fig:theta} is either at its ceiling, or increasing. When the policy learns over longer longer samples (bottom panel of Figure \ref{fig:theta}), then the long-term positivity of $\hat{\beta}_0$ (which is linked to that of the equity premium) pushes $\theta^{(0)}$ upwards.


We thus wish to further investigate the link between the two learning processes. On the one hand, the driving element in the RL allocation is the policy update $\Delta \bm{\theta}_t=\bm{\theta}_t-\bm{\theta}_{t-1}=\alpha \widehat{\nabla J(\mtrx{\theta}_t)} $ (see Equation \eqref{eq:pgrad}). On the other hand, we pick the optimal (unconstrained) $\tilde{\bm{\theta}}_t$ in Equation \eqref{eq:thetastar} to proxy for the information that is processed by the asset pricing model during the period between $t-1$ and $t$. In Figure \ref{fig:correl}, we plot the first values ($y$-axis) against the second ones ($x$-axis).

\begin{figure}[!h]
\begin{center}
\includegraphics[width=13cm]{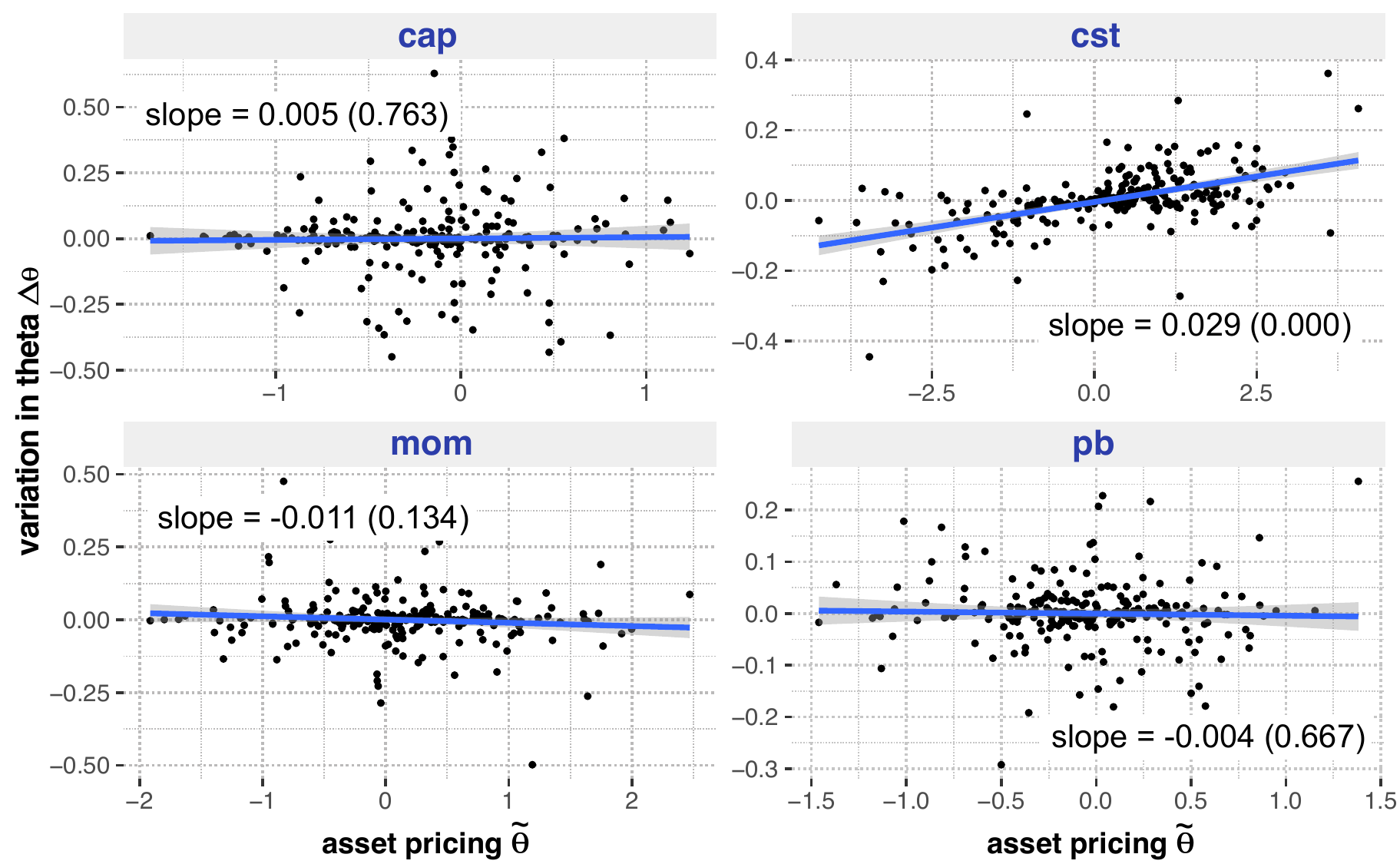}\vspace{-3mm}
\caption{\textbf{Policy gradients versus optimal asset pricing parameters}. \small We plot the variations in policy parameters $\Delta \bm{\theta}_t$ versus the unconstrained asset pricing based $\tilde{\bm{\theta}}_{*t}$ defined in Equation \eqref{eq:thetastar} and plotted in Figure \ref{fig:loads}. The former correspond to the first policy form (\textbf{F1}) updated from bootstrapped returns. The latter are built with returns from month $t-1$ to $t$ for consistency reasons (i.e., to match the informational set with which $\Delta \bm{\theta}_t$ is built). The slope of the fitted linear relationship is reported, along with the corresponding $p$-value (in brackets). Note: the risk aversion parameter (scaling constant) is $\gamma=2$. } 
\label{fig:correl}
\end{center}
\end{figure}

There is only one dimension for which the two schemes yield consistent values: the constant (upper right quadrant). On all purely firm-specific characteristics, the approaches seldom agree and the correlation between the two approaches is not significantly different from zero. Thus, apart from the relative agreement on the dominance of the common factor in the cross-section of stocks, it is hard to fully reconcile the two methods.


\section{Conclusion}
\label{sec:5}

In this article, we combine reinforcement learning with factor investing. The investor learns from the impact of firm-specific characteristics on a chosen performance measure. The technical machinery relies on tractable expressions derived from analytical properties of the Dirichlet distribution. This allows to keep allocation decision inside a simplex, which is the unique requirement in long-only portfolios.

Empirically, the approach yields weights that are very diversified, akin to the 1/N portfolio. One interpretation is that the learning process captures the importance of a common factor that drives the cross-section of stocks beyond their factorial idiosyncrasies. We compare the RL decisions to those of a simple asset pricing model and find significant differences for the non-constant characteristics. This shows the peculiarities stemming from the RL-based methodology. 

Interestingly, and in spite of a wide range of implementation choices, the fact that the RL portfolios remain in the vicinity of uniform allocations underlines the efficiency of the latter. To conclude, our results are less telling about the methodology than about the properties of the market data we use. They highlight the weak pricing ability of the traditional characteristics used in empirical asset pricing.

\begin{appendices}


\section{Some properties of the Dirichlet distribution}
\label{sec:PropDirichlet}

Let $\mtrx{W}$ be a vector with Dirichlet distribution, which we denote by $\mtrx{W} \sim \mathrm{Dir} \left(\mathbf{a}\right)$. The marginal distributions are Beta distributions: for $n=1,\dotsc,N$,
\[ W_{n} \sim \mathrm{Beta} \left(a_{n},\sigma-a_{n}\right) \]
and the two-dimensional marginal distributions are Dirichlet: for $1\le n<m\le N$,
\[ \begin{bmatrix}W_{n} & W_{m}\end{bmatrix}^\intercal \sim \mathrm{Dir}\left(a_{n},a_{m},\sigma-a_{n}-a_{m}\right)\]

Let $\digamma$ denotes the Digamma function, the derivative of the natural logarithm of the Gamma function. We have the following

\begin{proposition}\label{prop:wlnw}
Let $\mathbf{W} \sim \mathrm{Dir} \left(\mathbf{a}\right)$. Then for $1\le n\le N$,
\begin{align*}
\mathbb{E}\left[\ln W_{n}\right] &= \digamma\left(a_{n}\right)-\digamma\left(\sigma\right)\\
\mathbb{E}\left[W_{n}\ln W_{n}\right] &= \frac{a_{n}}{\sigma}\left(\digamma\left(a_{n}\right)+\frac{1}{a_{n}}-\digamma\left(\sigma\right)-\frac{1}{\sigma}\right)\\
\mathbb{E}\left[W_{n}\ln W_{m}\right] &= \frac{a_{n}}{\sigma}\left(\digamma\left(a_{m}\right)-\digamma\left(\sigma\right)-\frac{1}{\sigma}\right) \quad n\neq m
\end{align*}
\end{proposition}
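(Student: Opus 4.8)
The plan is to encode all three identities in a single auxiliary function and extract them by differentiation, so that the work reduces to elementary Gamma-function manipulations. For a parameter vector $\mathbf{b}=[b_1,\dots,b_N]^\intercal$ with each $b_k$ in a neighbourhood of $0$, I would introduce
$$\Phi(\mathbf{b}) = \mathbb{E}\left[\prod_{k=1}^N W_k^{b_k}\right] = \frac{B(\mathbf{a}+\mathbf{b})}{B(\mathbf{a})} = \frac{\Gamma(\sigma)}{\Gamma(\sigma+\beta)}\prod_{k=1}^N\frac{\Gamma(a_k+b_k)}{\Gamma(a_k)}, \qquad \beta := \sum_{k=1}^N b_k,$$
which follows immediately from the normalisation of the density in~\eqref{eq:dirichlet}: multiplying the integrand by $\prod_k w_k^{b_k}$ merely shifts the concentration vector to $\mathbf{a}+\mathbf{b}$, so the integral over $\Delta$ evaluates to $B(\mathbf{a}+\mathbf{b})$. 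The key observation is that $\partial_{b_k}W_k^{b_k}=W_k^{b_k}\ln W_k$, so each of the three target expectations is a first-order partial derivative of $\Phi$ evaluated at a lattice point $\mathbf{b}\in\{\mathbf{0},\mathbf{e}_n\}$, where $\mathbf{e}_n$ is the $n$-th unit vector.

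Taking logarithmic derivatives yields the clean formula $\partial_{b_j}\Phi(\mathbf{b}) = \Phi(\mathbf{b})\bigl(\digamma(a_j+b_j)-\digamma(\sigma+\beta)\bigr)$, since $\partial_{b_j}\ln\Gamma(a_j+b_j)=\digamma(a_j+b_j)$ and $\partial_{b_j}\ln\Gamma(\sigma+\beta)=\digamma(\sigma+\beta)$. I would then read off the three cases. First, $\mathbb{E}[\ln W_n]=\partial_{b_n}\Phi|_{\mathbf{b}=\mathbf{0}}=\digamma(a_n)-\digamma(\sigma)$, using $\Phi(\mathbf{0})=1$ and $\beta=0$. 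Second, $\mathbb{E}[W_n\ln W_n]=\partial_{b_n}\Phi|_{\mathbf{b}=\mathbf{e}_n}$, where now $\Phi(\mathbf{e}_n)=\mathbb{E}[W_n]=a_n/\sigma$, $\beta=1$ and $j=n$, so the derivative equals $\tfrac{a_n}{\sigma}\bigl(\digamma(a_n+1)-\digamma(\sigma+1)\bigr)$. Third, for $m\neq n$, $\mathbb{E}[W_n\ln W_m]=\partial_{b_m}\Phi|_{\mathbf{b}=\mathbf{e}_n}=\tfrac{a_n}{\sigma}\bigl(\digamma(a_m)-\digamma(\sigma+1)\bigr)$, because $b_m=0$ keeps the argument of the first digamma at $a_m$ while the denominator carries $\sigma+1$. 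A final application of the recurrence $\digamma(x+1)=\digamma(x)+1/x$ to the arguments $a_n+1$ and $\sigma+1$ converts the last two expressions into the stated forms.

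The one point requiring genuine care is the interchange of $\partial_{b_j}$ with the expectation defining $\Phi$, i.e. differentiating under the integral sign over the simplex $\Delta$. This is legitimate on the open region where all concentration parameters stay positive: for $a_k>0$ the factor $w_k^{a_k-1}\lvert\ln w_k\rvert$ remains integrable near the face $w_k=0$, and the differentiated integrands $\prod_k w_k^{b_k+a_k-1}\ln w_j$ are uniformly dominated for $\mathbf{b}$ in a compact neighbourhood of each evaluation point, so dominated convergence applies; the rest is purely the Gamma identities above. As independent checks, the first identity also follows directly from the Beta marginal $W_n\sim\mathrm{Beta}(a_n,\sigma-a_n)$, and the cross term can alternatively be recovered from the bivariate marginal $[W_n\ W_m]^\intercal\sim\mathrm{Dir}(a_n,a_m,\sigma-a_n-a_m)$, both recorded earlier in this appendix.
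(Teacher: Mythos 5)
Your proof is correct, and it takes a genuinely different route from the paper's. The paper proves the three identities separately by direct computation: for $\mathbb{E}[\ln W_n]$ and $\mathbb{E}[W_n\ln W_n]$ it works with the Beta marginal $W_n\sim\mathrm{Beta}(a_n,\sigma-a_n)$ and differentiates the integrand with respect to the concentration parameter $a_n$ itself (exploiting the fact that $\sigma-a_n$ does not depend on $a_n$), while for the cross term $\mathbb{E}[W_n\ln W_m]$ it invokes the bivariate marginal $\mathrm{Dir}(a_n,a_m,\sigma-a_n-a_m)$ and carries out a lengthy double-integral computation with a change of variables and a splitting into two pieces. You instead introduce the single moment function $\Phi(\mathbf{b})=\mathbb{E}\bigl[\prod_k W_k^{b_k}\bigr]=B(\mathbf{a}+\mathbf{b})/B(\mathbf{a})$, observe that $\partial_{b_j}\Phi(\mathbf{b})=\Phi(\mathbf{b})\bigl(\digamma(a_j+b_j)-\digamma(\sigma+\beta)\bigr)$, and read off all three identities as first derivatives at the lattice points $\mathbf{0}$ and $\mathbf{e}_n$, finishing with the recurrence $\digamma(x+1)=\digamma(x)+1/x$. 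Your approach buys uniformity and brevity: the cross moment, which costs the paper more than a page, becomes a one-line evaluation, and the same master formula would deliver higher mixed moments (e.g.\ $\mathbb{E}[(\ln W_n)^2]$ or $\mathbb{E}[W_nW_m\ln W_k]$) by taking further derivatives; you are also more explicit than the paper about justifying the interchange of derivative and integral. What the paper's route buys is that each identity rests only on the stated marginal distributions, with univariate (or low-dimensional) integrals that can be checked in isolation. One small wording fix: your tilt vector $\mathbf{b}$ should be allowed to range over the open set $\{\mathbf{b}:a_k+b_k>0 \text{ for all } k\}$ rather than only ``a neighbourhood of $0$,'' since the second and third identities require evaluation at $\mathbf{b}=\mathbf{e}_n$; your domination argument already implicitly covers this, so the proof stands as written.
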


Proofs are given below.

\subsection{Expectation of \texorpdfstring{$\ln W_{n}$}{ln Wn}}
We have $W_{n} \sim \mathrm{Beta}\left(a_{n},\sigma-a_{n}\right)$. Consequently,
\begin{align*}
\mathbb{E}\left[\ln W_{n}\right] &= \frac{1}{B\left(a_{n},\sigma-a_{n}\right)}\int_{0}^{1}\ln w_{n}w_{n}^{a_{n}-1}\left(1-w_{n}\right)^{\sigma-a_{n}-1}dw_{n}
&= \frac{1}{B\left(a_{n},\sigma-a_{n}\right)}\int_{0}^{1}\ln w_{n}\varphi(w_{n},a_{n})dw_{n},
\end{align*}
where
\[ \varphi(w_{n},a_{n})=w_{n}^{a_{n}-1}\left(1-w_{n}\right)^{\sigma-a_{n}-1}=e^{\left(a_{n}-1\right)\ln w_{n}+(\sigma-a_{n}-1)\ln\left(1-w_{n}\right)} . \]
As $a_{n}$ cancels out in $\sigma-a_{n}$,
\[ \frac{\partial}{\partial a_{n}}\varphi\left(w_{n},a_{n}\right)=\ln w_{n}\varphi\left(w_{n},a_{n}\right), \]
and
\begin{align*}
\mathbb{E}\left[\ln W_{n}\right]	&=\frac{1}{B\left(a_{n},\sigma-a_{n}\right)}\int_{0}^{1}\frac{\partial}{\partial a_{n}}\varphi\left(w_{n},a_{n}\right)dw_{n}\\
	&=\frac{1}{B\left(a_{n},\sigma-a_{n}\right)}\frac{\partial}{\partial a_{n}}\int_{0}^{1}\varphi\left(w_{n},a_{n}\right)dw_{n}\\
	&=\frac{1}{B\left(a_{n},\sigma-a_{n}\right)}\frac{\partial}{\partial a_{n}}B\left(a_{n},\sigma-a_{n}\right)\\
	&=\frac{\mathrm{d}}{\mathrm{d}a_{n}}\ln B\left(a_{n},\sigma-a_{n}\right)\\
	&=\frac{\mathrm{d}}{\mathrm{d}a_{n}}\left(\ln\Gamma(a_{n})+\ln\Gamma(\sigma-a_{n})-\ln\Gamma(\sigma)\right)\\
	&=\digamma(a_{n})-\digamma(\sigma).
\end{align*}

\subsection{Expectation of \texorpdfstring{$W_{n}\ln W_{n}$}{Wn ln Wn}}
We have $W_{n} \sim \mathrm{Beta}\left(a_{n},\sigma-a_{n}\right)$. Then,
\begin{align*}
\mathbb{E}\left[W_{n}\ln W_{n}\right]	&=\frac{1}{B\left(a_{n},\sigma-a_{n}\right)}\int_{0}^{1}w_{n}\ln w_{n}w_{n}^{a_{n}-1}\left(1-w_{n}\right)^{\sigma-a_{n}-1}dw_{n}\\
&=\frac{1}{B\left(a_{n},\sigma-a_{n}\right)}\int_{0}^{1}\ln w_{n}\varphi(w_{n},a_{n})dw_{n},
\end{align*}
where
\[ \varphi(w_{n},a_{n})=w_{n}^{a_{n}}\left(1-w_{n}\right)^{\sigma-a_{n}-1} .\]
In addition,
\begin{align*}
\mathbb{E}\left[W_{n}\ln W_{n}\right]	&=\frac{1}{B\left(a_{n},\sigma-a_{n}\right)}\int_{0}^{1}\frac{\partial}{\partial a_{n}}\varphi\left(w_{n},a_{n}\right)dw_{n}\\
&=\frac{1}{B\left(a_{n},\sigma-a_{n}\right)}\frac{\partial}{\partial a_{n}}\int_{0}^{1}\varphi\left(w_{n},a_{n}\right)dw_{n}\\
&=\frac{1}{B\left(a_{n},\sigma-a_{n}\right)}\frac{\partial}{\partial a_{n}}B\left(a_{n}+1,\sigma-a_{n}\right)\\
&=\frac{B\left(a_{n}+1,\sigma-a_{n}\right)}{B\left(a_{n},\sigma-a_{n}\right)}\frac{\mathrm{d}}{\mathrm{d}a_{n}}\ln B\left(a_{n}+1,\sigma-a_{n}\right).
\end{align*}

The ratio of Betas simplifies to
\[ \frac{B\left(a_{n}+1,\sigma-a_{n}\right)}{B\left(a_{n},\sigma-a_{n}\right)} =\frac{\Gamma\left(a_{n}+1\right)\Gamma\left(\sigma-a_{n}\right)\Gamma\left(\sigma\right)}{\Gamma\left(\sigma+1\right)\Gamma\left(a_{n}\right)\Gamma\left(\sigma-a_{n}\right)}
=\frac{a_{n}\Gamma\left(a_{n}\right)\Gamma\left(\sigma\right)}{\sigma\Gamma\left(\sigma\right)\Gamma\left(a_{n}\right)}
=\frac{a_{n}}{\sigma} .\]

Therefore we can conclude
\begin{align*}
\mathbb{E}\left[W_{n}\ln W_{n}\right]	&=\frac{a_{n}}{\sigma}\frac{\mathrm{d}}{\mathrm{d}a_{n}}\ln B\left(a_{n}+1,\sigma-a_{n}\right)\\
&=\frac{a_{n}}{\sigma}\frac{\mathrm{d}}{\mathrm{d}a_{n}}\left(\ln\Gamma(a_{n}+1)+\ln\Gamma(\sigma-a_{n})-\ln\Gamma(\sigma+1)\right)\\
&=\frac{a_{n}}{\sigma}\left(\digamma(a_{n}+1)-\digamma(\sigma+1)\right)\\
&=\frac{a_{n}}{\sigma}\left(\digamma(a_{n})+\frac{1}{a_{n}}-\digamma(\sigma)-\frac{1}{\sigma}\right).
\end{align*}

\subsection{Expectation of \texorpdfstring{$W_{n}\ln W_{m}$}{Wn ln Wm}}
We have $\begin{bmatrix}W_{n} & W_{m}\end{bmatrix}\sim\mathrm{Dir}\left(\mtrx{a}_{n,m}\right)$, where $\mtrx{a}_{n,m}=\begin{bmatrix}a_{n} & a_{m} & \sigma-a_{n}-a_{m}\end{bmatrix}$. Also,
\begin{align*}
\mathbb{E}\left[W_{n}\ln W_{m}\right] &=\frac{1}{B\left(\mtrx{a}_{n,m}\right)}\int_{0}^{1}\int_{0}^{1-w_{n}}w_{n}\ln w_{m}w_{n}^{a_{n}-1}w_{m}^{a_{m}-1}\left(1-w_{n}-w_{m}\right)^{\sigma-a_{n}-a_{m}-1}dw_{n}dw_{m}\\
&=\frac{1}{B\left(\mtrx{a}_{n,m}\right)}\int_{0}^{1}w_{n}^{a_{n}}\left(\int_{0}^{1-w_{n}}\ln w_{m}w_{m}^{a_{m}-1}\left(1-w_{n}-w_{m}\right)^{\sigma-a_{n}-a_{m}-1}dw_{m}\right)dw_{n}.
\end{align*}
The inner integral is
\[ I=\int_{0}^{\lambda}\ln ww^{a_{m}-1}\left(\lambda-w\right)^{\sigma-a_{n}-a_{m}-1}dw .\]
With the change of variable $\lambda t=w$
\begin{align*}
I &= \int_{0}^{1}\left(\ln\lambda+\ln t\right)\lambda^{a_{m}-1}t^{a_{m}-1}\lambda^{\sigma-a_{n}-a_{m}-1}\left(1-t\right)^{\sigma-a_{n}-a_{m}-1}\lambda dt\\
&= \lambda^{\sigma-a_{n}-1}\left(\ln\lambda\int_{0}^{1}t^{a_{m}-1}\left(1-t\right)^{\sigma-a_{n}-a_{m}-1}dt+\int_{0}^{1}\ln t \ t^{a_{m}-1}\left(1-t\right)^{\sigma-a_{n}-a_{m}-1}dt\right)\\
&= \lambda^{\sigma-a_{n}-1}\left(\ln\lambda B\left(a_{m},\sigma-a_{n}-a_{m}\right)+\int_{0}^{1}\ln t\varphi\left(t,a_{m}\right)dt\right),
\end{align*}
where
\[ \varphi\left(t,a_{m}\right)	=t^{a_{m}-1}\left(1-t\right)^{\sigma-a_{n}-a_{m}-1} .\]
As $a_{m}$ cancels out in $\sigma-a_{n}-a_{m}$,
\[ \frac{\partial}{\partial a_{m}}\varphi\left(t,a_{m}\right)=\ln t\varphi\left(t,a_{m}\right) .\]
Therefore,
\begin{align*}
\int_{0}^{1}\ln t\varphi\left(t,a_{m}\right)dt &=\int_{0}^{1}\frac{\partial}{\partial a_{m}}\varphi\left(t,a_{m}\right)dt=\frac{\partial}{\partial a_{m}}\int_{0}^{1}\varphi\left(t,a_{m}\right)dt \\
&= \frac{\mathrm{d}}{\mathrm{d}a_{m}}B\left(a_{m},\sigma-a_{n}-a_{m}\right) \\
&= B\left(a_{m},\sigma-a_{n}-a_{m}\right)\frac{\mathrm{d}}{\mathrm{d}a_{m}}\ln B\left(a_{m},\sigma-a_{n}-a_{m}\right),
\end{align*}
and
\begin{align*}
\frac{\mathrm{d}}{\mathrm{d}a_{m}}\ln B\left(a_{m},\sigma-a_{n}-a_{m}\right) &= \frac{\mathrm{d}}{\mathrm{d}a_{m}}\left(\ln\Gamma\left(a_{m}\right)+\ln\Gamma\left(\sigma-a_{n}-a_{m}\right)-\ln\Gamma\left(\sigma-a_{n}\right)\right)\\
&= \digamma\left(a_{m}\right)-\digamma\left(\sigma-a_{n}\right).
\end{align*}
This gives
\[ I=\left(1-w_{n}\right)^{\sigma-a_{n}-1}B\left(a_{m},\sigma-a_{n}-a_{m}\right)\left(\ln\left(1-w_{n}\right)+\digamma\left(a_{m}\right)-\digamma\left(\sigma-a_{n}\right)\right) .\]
Back to the expectation,
\[ \mathbb{E}\left[W_{n}\ln W_{m}\right]=\frac{B\left(a_{m},\sigma-a_{n}-a_{m}\right)}{B\left(\mathbf{a}_{n,m}\right)}\int_{0}^{1}w_{n}^{a_{n}}\left(1-w_{n}\right)^{\sigma-a_{n}-1}\left(\ln\left(1-w_{n}\right)+\digamma\left(a_{m}\right)-\digamma\left(\sigma-a_{n}\right)\right)dw_{n} .\]
The ratio of Betas simplifies to
\begin{align*}
\frac{B\left(a_{m},\sigma-a_{n}-a_{m}\right)}{B\left(\mathbf{a}_{n,m}\right)}	&=\frac{\Gamma\left(a_{m}\right)\Gamma\left(\sigma-a_{n}-a_{m}\right)\Gamma\left(\sigma\right)}{\Gamma\left(\sigma-a_{n}\right)\Gamma\left(a_{n}\right)\Gamma\left(a_{m}\right)\Gamma\left(\sigma-a_{n}-a_{m}\right)} \\
&= \frac{\Gamma\left(\sigma\right)}{\Gamma\left(\sigma-a_{n}\right)\Gamma\left(a_{n}\right)} =\frac{1}{B\left(a_{n},\sigma-a_{n}\right)},
\end{align*}
and the integral splits in two parts which are
\begin{align*}
I_{1}	&= \left(\digamma\left(a_{m}\right)-\digamma\left(\sigma-a_{n}\right)\right)\int_{0}^{1}w_{n}^{a_{n}}\left(1-w_{n}\right)^{\sigma-a_{n}-1}dw_{n}\\
&= \left(\digamma\left(a_{m}\right)-\digamma\left(\sigma-a_{n}\right)\right)B(a_{n}+1,\sigma-a_{n})
\end{align*}
and
\begin{align*}
I_{2}	&= \int_{0}^{1}\ln\left(1-w_{n}\right)w_{n}^{a_{n}}\left(1-w_{n}\right)^{\sigma-a_{n}-1}dw_{n}\\
&= \int_{0}^{1}\frac{\partial}{\partial a_{k}}\left(w_{n}^{a_{n}}\left(1-w_{n}\right)^{\sigma-a_{n}-1}\right)dw_{n}
\end{align*}
for any $1\le k\le N$, $k\neq n$. Thus,
\begin{align*}
I_{2}	&= \frac{\partial}{\partial a_{k}}\left(\int_{0}^{1}w_{n}^{a_{n}}\left(1-w_{n}\right)^{\sigma-a_{n}-1}dw_{n}\right)\\
&= B\left(a_{n}+1,\sigma-a_{n}\right)\frac{\mathrm{d}}{\mathrm{d}a_{k}}\ln B\left(a_{n}+1,\sigma-a_{n}\right)\\
&= B\left(a_{n}+1,\sigma-a_{n}\right)\left(\digamma\left(\sigma-a_{n}\right)-\digamma\left(\sigma+1\right)\right).
\end{align*}

Putting everything together
\[ \mathbb{E}\left[W_{n}\ln W_{m}\right]=\frac{B(a_{n}+1,\sigma-a_{n})}{B\left(a_{n},\sigma-a_{n}\right)}\left(\digamma\left(a_{m}\right)-\digamma\left(\sigma-a_{n}\right)+\digamma\left(\sigma-a_{n}\right)-\digamma\left(\sigma+1\right)\right) .\]
Again, the ratio of Betas simplifies to
\begin{equation*}
\frac{B(a_{n}+1,\sigma-a_{n})}{B\left(a_{n},\sigma-a_{n}\right)}	= \frac{\Gamma\left(a_{n}+1\right)\Gamma\left(\sigma-a_{n}\right)\Gamma\left(\sigma\right)}{\Gamma\left(\sigma+1\right)\Gamma\left(a_{n}\right)\Gamma\left(\sigma-a_{n}\right)}
= \frac{a_{n}\Gamma\left(a_{n}\right)\Gamma\left(\sigma\right)}{\sigma\Gamma\left(\sigma\right)\Gamma\left(a_{n}\right)} = \frac{a_{n}}{\sigma}.
\end{equation*}
Finally
\[ \mathbb{E}\left[W_{n}\ln W_{m}\right]=\frac{a_{n}}{\sigma}\left(\digamma\left(a_{m}\right)-\digamma\left(\sigma\right)-\frac{1}{\sigma}\right) .\]


\section{Link between the asset characteristics and the portfolio composition}
\label{sec:link}

\subsection{Rewriting the concentration parameters for policy F1}
We introduce the notation
\[ \tilde{w}_{t,n} = \frac{1}{N\theta_{t}^{(0)}}\sum_{k=1}^K \theta_{t}^{(k)}x_{t,n}^{(k)} ,\]
so that
\[ a_{t,n} = (\mtrx{x}_{t,n})^{\intercal}\mtrx{\theta}_{t} = \theta_{t}^{(0)} + N\theta_{t}^{(0)} \tilde{w}_{t,n} .\]
We have
\[ \sum_{n=1}^{N} \tilde{w}_{t,n} = \frac{1}{N\theta_{t}^{(0)}} \sum_{k=1}^K \theta_{t}^{(k)} \sum_{n=1}^{N} x_{t,n}^{(k)}, \]
but, due to the processing of the features described in section~\ref{sec:data} (at each time $t$, each feature $k$ is uniformly distributed over the $[-0.5,0.5]$ interval) across firms), it holds that $\sum_{n=1}^{N} x_{t,n}^{(k)} = 0$ hence $\sum_{n=1}^{N} \tilde{w}_{t,n} = 0$.
From which we obtain
\begin{align}
\sigma_{t} &= N\theta_{t}^{(0)},\\
\frac{a_{t,n}}{\sigma_{t}} &= \frac{1}{N} + \tilde{w}_{t,n} .
\end{align}
Finally, we note for a later use that
\begin{equation}\label{eq:majorant}
\abs{\tilde{w}_{t,n}} \le \frac{1}{N \abs{\theta_{t}^{(0)}}} \sum_{k=1}^K \abs{\theta_{t}^{(k)}} \abs{x_{t,n}^{(k)}} \le \frac{1}{2N} \frac{\norm{\mtrx{\theta}^{(-0)}_t}_1}{\abs{\theta_{t}^{(0)}}},
\end{equation}
where $\mtrx{\theta}^{(-0)}_t = [\theta_{t}^{(1)}, \dotsc, \theta_{t}^{(K)}]$.

\subsection{Rewriting the concentration parameters for policy F2}
Using the same notations as above
\[ a_{t,n} = e^{\theta_{t}^{(0)}} e^{N\theta_{t}^{(0)}\tilde{w}_{t,n}} \quad \sigma_{t} = e^{\theta_{t}^{(0)}} \sum_{m=1}^{N} e^{N\theta_{t}^{(0)}\tilde{w}_{t,m}} ,\]
so that
\begin{equation*}
\frac{a_{t,n}}{\sigma_{t}} = \left( \sum_{m=1}^{N} e^{N\theta_{t}^{(0)} (\tilde{w}_{t,m}-\tilde{w}_{t,n})} \right)^{-1}
\end{equation*}
Furthermore, from equation~\eqref{eq:majorant} we get
\begin{equation*}
\abs{N\theta_{t}^{(0)} (\tilde{w}_{t,m}-\tilde{w}_{t,n})} \le \norm{\mtrx{\theta}^{(-0)}_t}_1,
\end{equation*}
and hence, if $\norm{\mtrx{\theta}^{(-0)}_t}_1 \ll 1$, we obtain
\begin{equation*}
\frac{a_{t,n}}{\sigma_{t}} \approx \left( \sum_{m=1}^{N} (1 + N\theta_{t}^{(0)} (\tilde{w}_{t,m}-\tilde{w}_{t,n})) \right)^{-1} = \left( N (1 - N\theta_{t}^{(0)} \tilde{w}_{t,n}) \right)^{-1},
\end{equation*}
that is
\begin{equation}
\frac{a_{t,n}}{\sigma_{t}} \approx \frac{1}{N} + \theta_{t}^{(0)} \tilde{w}_{t,n},
\end{equation}
and also
\begin{equation}
\sigma_{t} \approx N e^{\theta_{t}^{(0)}}
\end{equation}

\subsection{Approximations when the bias is large compared to the other weights}
We see that, for policy $\mathbf{F1}$ when $\abs{\theta_{t}^{(0)}} \gg \norm{\mtrx{\theta}^{(-0)}_t}_1$, and for policy $\mathbf{F2}$ when $1 \gg \norm{\mtrx{\theta}^{(-0)}_t}_1$, then
\begin{equation}
\frac{a_{t,n}}{\sigma_{t}} \approx \frac{1}{N}.
\end{equation}


\section{Proofs}

\subsection{Proof of Proposition \ref{prop:PolicyValue}} \label{sec:Proof-PolicyValue}

We first compute the future periods expected returns given some state $S_{t} = (\rho_t, \mtrx{X}_{t})$ and action $A_{t}= \mtrx{w}_{t}$.

\begin{lemma}\label{lem:ExpRet}
\begin{align*}
\mathbb{E}_{\mtrx{\theta}}[ \rho_{t+l} \mid (\rho_t, \mtrx{X}_{t}), \mtrx{w}_{t}] = 
\begin{cases}
\mtrx{w}_{t}^\intercal f \left(\mtrx{X}_{t}\right) &l=1\\
\int_{\mathcal{M}} \mathbb{E}_{\mtrx{\theta}} \left[ \mtrx{w} \mid \xi \right]^\intercal f\left(\xi\right) \mathbb{P}_{t}^{t+l-1}(d\xi \mid \mtrx{X}_{t}) &l \ge 2
\end{cases},
\end{align*}
where $\mathbb{E}_{\mtrx{\theta}} \left[ \mtrx{w} \mid \xi \right]$ is the vector expectation of the portfolio composition under the stochastic policy $\pi_{\mtrx{\theta}}$.
\end{lemma}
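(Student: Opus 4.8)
The plan is to obtain both cases by the tower property of conditional expectation, conditioning at each stage on the intermediate matrix of characteristics, and exploiting three structural facts already built into the model: (i) $\mtrx{\epsilon}$ is i.i.d.\ with mean $\mtrx{0}$ and independent of the past; (ii) the characteristics $(\mtrx{X}_u)_{u>t}$ form a Markov process whose transitions $\mathbb{P}_{t}^{u}$ are independent of the action and of the portfolio value, as posited in the transition-probability paragraph; and (iii) under the policy $\pi_{\mtrx{\theta}}$ the weight $\mtrx{w}_u$ is drawn as a function of $\mtrx{X}_u$ alone. The case $l=1$ is immediate from the transition function: since the conditioning event fixes both $\mtrx{X}_{t}$ and $\mtrx{w}_{t}$, the only randomness in $\rho_{t+1} = \mathbb{T}(\mtrx{X}_{t}, \mtrx{w}_{t}, \mtrx{\epsilon}_{t+1}) = \mtrx{w}_{t}^\intercal(f(\mtrx{X}_{t}) + \mtrx{\epsilon}_{t+1})$ is $\mtrx{\epsilon}_{t+1}$, which has zero mean and is independent of $(\rho_t, \mtrx{X}_{t}, \mtrx{w}_{t})$. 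Linearity of the expectation then yields $\mathbb{E}_{\mtrx{\theta}}[\rho_{t+1}\mid(\rho_t,\mtrx{X}_{t}),\mtrx{w}_{t}] = \mtrx{w}_{t}^\intercal f(\mtrx{X}_{t}) + \mtrx{w}_{t}^\intercal \mathbb{E}[\mtrx{\epsilon}_{t+1}] = \mtrx{w}_{t}^\intercal f(\mtrx{X}_{t})$.

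For $l \ge 2$ I would write $\rho_{t+l} = \mtrx{w}_{t+l-1}^\intercal(f(\mtrx{X}_{t+l-1}) + \mtrx{\epsilon}_{t+l})$ and condition first on the intermediate characteristics $\mtrx{X}_{t+l-1}$. By the tower property,
\[
\mathbb{E}_{\mtrx{\theta}}\!\left[\rho_{t+l}\mid(\rho_t,\mtrx{X}_{t}),\mtrx{w}_{t}\right] = \mathbb{E}_{\mtrx{\theta}}\!\left[\,\mathbb{E}_{\mtrx{\theta}}\!\left[\rho_{t+l}\mid\mtrx{X}_{t+l-1}\right]\,\middle|\,(\rho_t,\mtrx{X}_{t}),\mtrx{w}_{t}\right].
\]
For the inner expectation, conditioning on $\mtrx{X}_{t+l-1}=\xi$ fixes the characteristics driving both the weight and the factor value: by fact (iii), $\mtrx{w}_{t+l-1}\sim\pi_{\mtrx{\theta}}(\cdot\mid\xi)$, and by fact (i) this weight is independent of the mean-zero future shock $\mtrx{\epsilon}_{t+l}$, so that $\mathbb{E}_{\mtrx{\theta}}[\rho_{t+l}\mid\mtrx{X}_{t+l-1}=\xi] = \mathbb{E}_{\mtrx{\theta}}[\mtrx{w}\mid\xi]^\intercal f(\xi)$.

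It then remains to integrate this inner term against the conditional law of $\mtrx{X}_{t+l-1}$ given $(\rho_t,\mtrx{X}_{t},\mtrx{w}_{t})$. Here is where fact (ii) is decisive: because the characteristics evolve as a Markov process whose transition kernels do not depend on the action $\mtrx{w}_{t}$ nor on $\rho_t$, the conditional law of $\mtrx{X}_{t+l-1}$ reduces to $\mathbb{P}_{t}^{t+l-1}(\cdot\mid\mtrx{X}_{t})$, and substituting gives exactly the claimed integral $\int_{\mathcal{M}} \mathbb{E}_{\mtrx{\theta}}[\mtrx{w}\mid\xi]^\intercal f(\xi)\,\mathbb{P}_{t}^{t+l-1}(d\xi\mid\mtrx{X}_{t})$. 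The main obstacle is precisely this last reduction: one must argue carefully that conditioning on the action and on the current portfolio return does not alter the distribution of the intermediate characteristics, which is guaranteed only by the explicit independence assumption placed on $\mathbb{P}_{t}^{u}$; a secondary point is justifying the conditional independence of $\mtrx{w}_{t+l-1}$ and $\mtrx{\epsilon}_{t+l}$ given $\mtrx{X}_{t+l-1}$, which follows from the white-noise hypothesis together with the policy depending only on the contemporaneous characteristics. Once these two measure-theoretic decouplings are in place, the statement follows by assembling the inner and outer expectations.
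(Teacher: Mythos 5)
Your proposal is correct and takes essentially the same approach as the paper: the $l=1$ case by the zero-mean white-noise property, and the $l\ge 2$ case by the tower property combined with the policy's dependence on contemporaneous characteristics only and the assumption that the kernels $\mathbb{P}_{t}^{u}$ are unaffected by actions and portfolio values. The only difference in execution is that the paper iterates one-step conditionings backward through the filtrations $\mathscr{F}_{t+l-1}, \mathscr{F}_{t+l-2},\dotsc$ and composes the one-step transition kernels to reach $\mtrx{X}_t$, whereas you condition once on $\mtrx{X}_{t+l-1}$ and invoke the multi-step kernel $\mathbb{P}_{t}^{t+l-1}$ directly --- which is legitimate since the paper posits these kernels for all $u>t$, provided your inner conditioning is understood to include the time-$t$ state and action (as your closing caveats implicitly acknowledge).
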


\begin{proof}
Case $l=1$. We have
\begin{align*}
\mathbb{E}_{\mtrx{\theta}}[ \rho_{t+1} \mid (\rho_t, \mtrx{X}_{t}), \mtrx{w}_{t}] &= \mathbb{E}_{\mtrx{\theta}}[ \mtrx{w}_{t}^\intercal (f(\mtrx{X}_{t}) + \mtrx{\epsilon}_{t+1}) \mid (\rho_t, \mtrx{X}_{t}), \mtrx{w}_{t}]\\
&= \mtrx{w}_{t}^\intercal f(\mtrx{X}_{t}) + \mtrx{w}_{t}^\intercal \mathbb{E}_{\mtrx{\theta}}[\mtrx{\epsilon}_{t+1}],
\end{align*}
which gives the result because $\mtrx{\epsilon}_{t+1}$ is a zero mean White Noise.

For the case $l \ge 2$, we use the tower property of the conditional expectation. Let $\mathscr{F}_t$ be the sigma-algebra generated by $\{S_{t}, A_t, S_{t-1}, A_{t-1}, \dotsc \}$. By the Markov property of our setting, we know that $\mathbb{E}_{\mtrx{\theta}}[ \rho_{t+l} \mid (\rho_t, \mtrx{X}_{t}), \mtrx{w}_{t}] = \mathbb{E}_{\mtrx{\theta}}[ \rho_{t+l} \mid \mathscr{F}_t]$. Then
\begin{equation*}
\mathbb{E}_{\mtrx{\theta}}[ \rho_{t+l} \mid \mathscr{F}_t] = \mathbb{E}_{\mtrx{\theta}}[ \mathbb{E}_{\mtrx{\theta}}[ \rho_{t+l} \mid \mathscr{F}_{t+l-1}] \mid \mathscr{F}_t],
\end{equation*}
where the inner conditional expectation is given by the case $l=1$. Then
\begin{equation*}
\mathbb{E}_{\mtrx{\theta}}[ \rho_{t+l} \mid \mathscr{F}_t] = \mathbb{E}_{\mtrx{\theta}}[ \mtrx{w}_{t+l-1}^\intercal f \left(\mtrx{X}_{t+l-1}\right) \mid \mathscr{F}_t] = \mathbb{E}_{\mtrx{\theta}}[ \mathbb{E}_{\mtrx{\theta}}[ \mtrx{w}_{t+l-1}^\intercal f \left(\mtrx{X}_{t+l-1}\right) \mid \mathscr{F}_{t+l-2}] \mid \mathscr{F}_t].
\end{equation*}
The inner conditional expectation is, using equation~\eqref{eq:TransitionProba},
\begin{multline*}
\mathbb{E}_{\mtrx{\theta}}[ \mtrx{w}_{t+l-1}^\intercal f \left(\mtrx{X}_{t+l-1}\right) \mid \mathscr{F}_{t+l-2}] =\\
 \int_{\mathcal{M}} \int_{\mathbb{R}} \int_{\Delta} \mtrx{w}^\intercal f (\xi) \pi_{\mtrx{\theta}} \left( d \mtrx{w} \mid \xi \right) 
\bm{P}_\epsilon (\mathbb{T}^{-1}(dr \mid \mtrx{X}_{t+l-2}, \mtrx{w}_{t+l-2})) \mathbb{P}_{t+l-2}^{t+l-1} \left( d\xi \mid \mtrx{X}_{t+l-2} \right),
\end{multline*}
 which simplifies to
\begin{equation*}
\mathbb{E}_{\mtrx{\theta}}[ \mtrx{w}_{t+l-1}^\intercal f \left(\mtrx{X}_{t+l-1}\right) \mid \mathscr{F}_{t+l-2}] =
 \int_{\mathcal{M}} \mathbb{E}_{\mtrx{\theta}} \left[ \mtrx{w} \mid \xi \right]^\intercal f\left(\xi\right) \mathbb{P}_{t+l-2}^{t+l-1} \left( d\xi \mid \mtrx{X}_{t+l-2} \right).
\end{equation*}
If necessary, another application of the tower property is seen to lead to integrating over possible values for the state $\mtrx{X}_{t+l-2}$ given the state $\mtrx{X}_{t+l-3}$. Using the composition of the Markov transition probabilities as many times as required to reach the known state $\mtrx{X}_{t}$ leads to the result
\begin{equation*}
\mathbb{E}_{\mtrx{\theta}}[ \rho_{t+l} \mid \mathscr{F}_t] =  \int_{\mathcal{M}} \mathbb{E}_{\mtrx{\theta}} \left[ \mtrx{w} \mid \xi \right]^\intercal f\left(\xi\right) \mathbb{P}_{t}^{t+l-1} \left( d\xi \mid \mtrx{X}_{t} \right).
\end{equation*}
\end{proof}

We can now prove Proposition \ref{prop:PolicyValue}. For the risk-insensitive agent, $R_{t} = \rho_{t}$, so that 
\begin{equation*}
\mathbb{E}_{\mtrx{\theta}} \left[ R_{t+l} \mid S_{t}=(\rho_t, \mtrx{X}_{t}) \right] = \int_{\Delta} \left( \int_{\mathcal{S}} \rho_{t+l} \mathrm{Prob}\left( ds \mid S_{t}, \mtrx{w}_{t} \right) \right) \pi_{\mtrx{\theta}} \left( d\mtrx{w}_{t} \mid \mtrx{X}_{t}\right).
\end{equation*}
Lemma~\ref{lem:ExpRet} gives the value of the inner integral and we have two cases. If $l=1$, then
\begin{equation*}
\mathbb{E}_{\mtrx{\theta}} \left[ R_{t+1} \mid S_{t} \right]
= \int_{\Delta} \mtrx{w}_{t}^\intercal f \left(\mtrx{X}_{t}\right) \pi_{\mtrx{\theta}} \left( d\mtrx{w}_{t} \mid \mtrx{X}_{t}\right) 
= \left( \int_{\Delta} \mtrx{w}_{t} \pi_{\mtrx{\theta}} \left( d\mtrx{w}_{t} \mid \mtrx{X}_{t} \right) \right)^\intercal f \left(\mtrx{X}_{t}\right).
\end{equation*}
If $l \ge 2$, we have
\[ \mathbb{E}_{\mtrx{\theta}} \left[ R_{t+l}\mid S_{t} \right] = \int_{\Delta} \left( \int_{\mathcal{M}} \mathbb{E}_{\mtrx{\theta}} \left[ \mtrx{w} \mid \xi \right]^\intercal f\left(\xi\right)\mathbb{P}_{t}^{t+l-1}(d\xi\mid\mtrx{X}_{t}) \right) \pi_{\mtrx{\theta}} \left( d\mtrx{w}_{t} \mid \mtrx{X}_{t}\right). \]
The inner integral is not a function of the action $A_t = \mtrx{w}_{t}$ hence
\[ \mathbb{E}_{\mtrx{\theta}} \left[ R_{t+l}\mid S_{t} \right] = \int_{\mathcal{M}} \mathbb{E}_{\mtrx{\theta}} \left[ \mtrx{w} \mid \xi \right]^\intercal f\left(\xi\right)\mathbb{P}_{t}^{t+l-1}(d\xi\mid\mtrx{X}_{t}).\]

Now we show that the policy value takes a recursive form. First rewrite the policy value for the risk insensitive agent as
\begin{equation*}
V^{\mtrx{\theta}}(t,S_{t}) = \sum_{l=1}^{T-t} \mathbb{E}_{\mtrx{\theta}}\left[ R_{t+l} \mid S_{t} \right] = \mathbb{E}_{\mtrx{\theta}} \left[ R_{t+1} \mid S_{t} \right] + \sum_{l=1}^{T-(t+1)} \mathbb{E}_{\mtrx{\theta}}\left[ R_{t+1+l} \mid S_{t} \right] .
\end{equation*}
From the above result 
\begin{align*}
\mathbb{E}_{\mtrx{\theta}}\left[ R_{t+1+l} \mid S_{t} \right] &= \int_{\mathcal{M}} \mathbb{E}_{\mtrx{\theta}} \left[ \mtrx{w} \mid \xi \right]^\intercal f\left(\xi\right)\mathbb{P}_{t}^{t+l}(d\xi\mid\mtrx{X}_{t})\\ 
&= \int_{\mathcal{M}} \int_{\mathcal{M}} \mathbb{E}_{\mtrx{\theta}} \left[ \mtrx{w} \mid \xi \right]^\intercal f\left(\xi\right) \mathbb{P}_{t+1}^{t+l}(d\xi\mid\xi') \mathbb{P}_{t}^{t+1}(d\xi'\mid\mtrx{X}_{t})\\
&= \int_{\mathcal{M}} \mathbb{E}_{\mtrx{\theta}}\left[ R_{t+1+l} \mid S_{t+1} = \xi' \right] \mathbb{P}_{t}^{t+1}(d\xi'\mid\mtrx{X}_{t}) .
\end{align*}
Thus, the rightmost part of the policy value expression can be written as
\begin{equation*}
\sum_{l=1}^{T-(t+1)} \mathbb{E}_{\mtrx{\theta}}\left[ R_{t+1+l} \mid S_{t} \right] = \int_{\mathcal{M}} \sum_{l=1}^{T-(t+1)} \mathbb{E}_{\mtrx{\theta}}\left[ R_{t+1+l} \mid S_{t+1} = \xi' \right] \mathbb{P}_{t}^{t+1}(d\xi'\mid\mtrx{X}_{t}),
\end{equation*}
where the inner sum is $V^{\mtrx{\theta}}(t+1,\xi')$. Hence the result
\begin{equation*}
V^{\mtrx{\theta}}(t,S_{t}) = \mathbb{E}_{\mtrx{\theta}} \left[ R_{t+1} \mid S_{t} \right] + \int_{\mathcal{M}} V^{\mtrx{\theta}}(t+1,\xi) \mathbb{P}_{t}^{t+1}(d\xi\mid\mtrx{X}_{t}).
\end{equation*}

\subsection{Proof of Proposition \ref{prop:DirichletGrad}} \label{sec:Proof-DirichletGrad}
From equation \eqref{eq:dirichlet} we obtain
\begin{equation*}
\ln\pi\left(\mtrx{w}_{t}\mid \mtrx{X}_t, \mtrx{\theta}_{t}\right) = \ln \Gamma \left( \sigma_{t} \right) - \sum_{n=1}^N \ln \Gamma(a_{t,n}) + \sum_{n=1}^N (a_{t,n}-1) \ln w_{t,n}.
\end{equation*}
Then,
\begin{align*}
\nabla \ln \pi\left(\mtrx{w}_{t}\mid \mtrx{X}_t, \mtrx{\theta}_{t}\right) &= \digamma \left( \sigma_{t} \right) \sum_{n=1}^N \nabla a_{t,n} - \sum_{n=1}^N \digamma(a_{t,n}) \nabla a_{t,n} + \sum_{n=1}^N \ln w_n \nabla a_{t,n}\\
&= \sum_{n=1}^N \left( \digamma \left( \sigma_{t} \right) - \digamma(a_{t,n}) + \ln w_n \right) \nabla a_{t,n},
\end{align*}
where the gradients of the concentration parameters  $a_{t,n}$ are computed for both case given in equation~\eqref{eq:policies}.

\subsection{Proof of Proposition \ref{prop:PolicyGrad}} \label{sec:Proof-PolicyGrad}
We have
\begin{align*}
\nabla J\left(\mtrx{\theta}_{t}\right) &= \mathbb{E}_{\pi} \left[ G_t \nabla\ln\pi\left(\mtrx{w}_{t}\mid \mtrx{X}_t,\mtrx{\theta}_{t}\right)\mid \mtrx{X}_t,\mtrx{\theta}_{t} \right]\\
&= \sum_{l=1}^{T-t} \mathbb{E}_{\pi} \left[ \rho_{t+l} \nabla\ln\pi\left(\mtrx{w}_{t}\mid \mtrx{X}_t,\mtrx{\theta}_{t}\right)\mid \mtrx{X}_t,\mtrx{\theta}_{t} \right]\\
&= \sum_{l=1}^{T-t} \int_{\Delta} \left( \int_{\mathcal{S}} \rho_{t+l} \nabla\ln\pi\left(\mtrx{w}_{t}\mid \mtrx{X}_t,\mtrx{\theta}_{t}\right) \mathrm{Prob}\left( ds \mid S_{t}, \mtrx{w}_{t} \right) \right) \pi_{\mtrx{\theta}} \left( d\mtrx{w}_{t} \mid \mtrx{X}_{t}\right)\\
&= \sum_{l=1}^{T-t} \int_{\Delta} \left( \int_{\mathcal{S}} \rho_{t+l} \mathrm{Prob}\left( ds \mid S_{t}, \mtrx{w}_{t} \right) \right) \nabla\ln\pi\left(\mtrx{w}_{t}\mid \mtrx{X}_t,\mtrx{\theta}_{t}\right) \pi_{\mtrx{\theta}} \left( d\mtrx{w}_{t} \mid \mtrx{X}_{t}\right).
\end{align*}
The inner integrals are given by Lemma~\ref{lem:ExpRet} and
\begin{align*}
\nabla J\left(\mtrx{\theta}_{t}\right) &= \int_{\Delta} \mtrx{w}_{t}^\intercal f \left(\mtrx{X}_{t}\right) \nabla\ln\pi\left(\mtrx{w}_{t}\mid \mtrx{X}_t,\mtrx{\theta}_{t}\right) \pi_{\mtrx{\theta}} \left( d\mtrx{w}_{t} \mid \mtrx{X}_{t}\right)\\
&\quad + \sum_{l=2}^{T-t} \int_{\Delta} \underbrace{\left( \int_{\mathcal{M}} \mathbb{E}_{\mtrx{\theta}} \left[ \mtrx{w} \mid \xi \right]^\intercal f\left(\xi\right) \mathbb{P}_{t}^{t+l-1}(d\xi\mid\mtrx{X}_{t}) \right)}_{(\ast)} \nabla\ln\pi\left(\mtrx{w}_{t}\mid \mtrx{X}_t,\mtrx{\theta}_{t}\right) \pi_{\mtrx{\theta}} \left( d\mtrx{w}_{t} \mid \mtrx{X}_{t}\right)
\end{align*}
as $(\ast)$ is independent of the choice of $\mtrx{w}_{t}$, we can write the second part of the expression as
\[ \left( \int_{\Delta} \nabla\ln\pi\left(\mtrx{w}_{t}\mid \mtrx{X}_t,\mtrx{\theta}_{t}\right) \pi_{\mtrx{\theta}} \left( d\mtrx{w}_{t} \mid \mtrx{X}_{t}\right) \right) \sum_{l=2}^{T-t} \int_{\mathcal{M}} \mathbb{E}_{\mtrx{\theta}} \left[ \mtrx{w} \mid \xi \right]^\intercal f\left(\xi\right)\mathbb{P}_{t}^{t+l-1}(d\xi\mid\mtrx{X}_{t}).\]

From equations~\eqref{eq:DirichletGrad} and~\eqref{eq:logMean} we obtain that
\[ \int_{\Delta} \nabla\ln\pi\left(\mtrx{w}_{t}\mid \mtrx{X}_t,\mtrx{\theta}_{t}\right) \pi_{\mtrx{\theta}} \left( d\mtrx{w}_{t} \mid \mtrx{X}_{t}\right) = 0.\]
Therefore,
\[ \nabla J\left(\mtrx{\theta}_{t}\right) = \mathbb{E}_{\pi} \left[ \mtrx{w}_{t}^\intercal f \left(\mtrx{X}_{t}\right) \nabla\ln\pi\left(\mtrx{w}_{t}\mid \mtrx{X}_t,\mtrx{\theta}_{t}\right) \mid \mtrx{X}_t,\mtrx{\theta}_{t} \right] .\]

Let $f_{t,n}$ denotes the $n$-th element of the vector $f\left(\boldsymbol{X}_{t}\right)$. Then,
\begin{align*}
\nabla J\left(\mtrx{\theta}_{t}\right)
&=\mathbb{E}_{\pi}\left[\left(\sum_{m=1}^{N}w_{t,m}f_{t,m}\right)\left(\sum_{n=1}^{N}\left(\digamma\left(\sigma_{t}\right)-\digamma\left(a_{t,n}\right)+\ln w_{t,n}\right)\nabla a_{t,n}\right)\mid \mtrx{X}_{t},\mtrx{\theta}_{t}\right]\\
&=\sum_{n=1}^{N} \mathbb{E}_{\pi}\left[\left(\sum_{m=1}^{N}w_{t,m}f_{t,m}\right)\left(\digamma\left(\sigma_{t}\right)-\digamma\left(a_{t,n}\right)+\ln w_{t,n}\right)\mid \mtrx{X}_{t},\mtrx{\theta}_{t}\right] \nabla a_{t,n}\\
&=\sum_{n=1}^{N}\left(\digamma\left(\sigma_{t}\right)-\digamma\left(a_{t,n}\right)\right) \left(\sum_{m=1}^{N} \mathbb{E}_{\pi}\left[w_{t,m} \mid \mtrx{X}_{t},\mtrx{\theta}_{t}\right] f_{t,m} \right) \nabla a_{t,n}\\
&\quad + \sum_{n=1}^{N} \sum_{m=1}^{N} \mathbb{E}_{\pi}\left[w_{t,m} \ln w_{t,n}\mid \mtrx{X}_{t},\mtrx{\theta}_{t}\right] f_{t,m} \nabla a_{t,n}.
\end{align*}
On the one hand we have, by the expectation of a Dirichlet distributed random vector $\mathbb{E}_{\pi}[w_{t,m} \mid \mtrx{X}_{t},\mtrx{\theta}_{t}] = a_{t,m}/\sigma_{t}$. On the other hand, using Proposition \ref{prop:wlnw}, we obtain
\begin{align*}
\sum_{m=1}^{N}\mathbb{E}_{\pi}\left[w_{t,m}\ln w_{t,n}\mid \mtrx{X}_{t},\boldsymbol{\theta}_{t}\right]f_{t,m}
&=\sum_{\substack{m=1\\m\neq n}}^{N} \mathbb{E}_{\pi}\left[w_{t,m}\ln w_{t,n}\mid S_{t},\boldsymbol{\theta}_{t}\right]f_{t,m}+\mathbb{E}_{\pi}\left[w_{t,n}\ln w_{t,n}\mid S_{t},\boldsymbol{\theta}_{t}\right]f_{t,n}\\
&=\sum_{\substack{m=1\\m\neq n}}^{N}\frac{a_{t,m}}{\sigma_{t}}\left(\digamma\left(a_{t,n}\right)-\digamma\left(\sigma_{t}\right)-\frac{1}{\sigma_{t}}\right)f_{t,m}\\
&\quad+\frac{a_{t,n}}{\sigma_{t}}\left(\digamma\left(a_{t,n}\right)+\frac{1}{a_{t,n}}-\digamma\left(\sigma_{t}\right)-\frac{1}{\sigma_{t}}\right)f_{t,n}\\
&=\left(\digamma\left(a_{t,n}\right)-\digamma\left(\sigma_{t}\right)-\frac{1}{\sigma_{t}}\right)\sum_{m=1}^{N}\frac{a_{t,m}}{\sigma_{t}}f_{t,m}+\frac{1}{\sigma_{t}}f_{t,n}.
\end{align*}
This yields
\begin{align*}
\nabla J\left(\boldsymbol{\theta}_{t}\right)
&=\sum_{n=1}^{N}\left(\digamma\left(\sigma_{t}\right)-\digamma\left(a_{t,n}\right)\right)\left(\sum_{m=1}^{N}\frac{a_{t,m}}{\sigma_{t}}f_{t,m}\right)\nabla a_{t,n}\\
&\quad+\sum_{n=1}^{N}\left(\left(\digamma\left(a_{t,n}\right)-\digamma\left(\sigma_{t}\right)-\frac{1}{\sigma_{t}}\right)\left(\sum_{m=1}^{N}\frac{a_{t,m}}{\sigma_{t}}\right)f_{t,m}+\frac{1}{\sigma_{t}}f_{t,n}\right)\nabla a_{t,n}\\
&=\sum_{n=1}^{N}\left(f_{t,n}-\sum_{m=1}^{N}\frac{a_{t,m}}{\sigma_{t}}f_{t,m}\right)\frac{\nabla a_{t,n}}{\sigma_{t}}.
\end{align*}

Now note that
\[ \sum_{m=1}^{N}\frac{a_{t,m}}{\sigma_{t}}f_{t,m} 
= \mathbb{E}_{\pi} \left[ \mtrx{w}_{t}\mid \mtrx{X}_{t},\mtrx{\theta}_{t} \right]^\intercal f\left(\mtrx{X}_{t}\right) 
= \mathbb{E}_{\pi}\left[R_{t+1}\mid\mtrx{X}_{t},\mtrx{\theta}_{t}\right], \]
and that, by equation \eqref{eq:FactorModel}, $f_{t,n}$ is the expected return of asset $n$ between $t$ and $t+1$.

\subsection{Proof of lemma \ref{lem:ass}}
\label{lem:proof}

The Lagrange formulation of the problem,
\begin{align}
L(\bm{\theta})&=\bm{\theta}^\intercal\bm{X}^\intercal\bm{X}\bar{\bm{\beta}}-\frac{\gamma}{2}\bm{\theta}^\intercal\bm{X}^\intercal \E\left[(\bm{X}(\bm{\beta}-\bar{\bm{\beta}})+\bm{\epsilon})(\bm{\epsilon}+\bm{X}(\bm{\beta}-\bar{\bm{\beta}}))^\intercal \right]\bm{X}\bm{\theta} + \lambda(  \bm{\theta}^\intercal\bm{X}^\intercal\bm{1}_N-1) \label{eq:lagrange} \\
\frac{\partial L}{\partial \bm{\theta}}&=\bm{X}^\intercal\bm{X}\bar{\bm{\beta}}-\gamma\bm{X}^\intercal(\bm{X}\bm{\Sigma}_\beta\bm{X}^\intercal+\sigma_\epsilon^2\bm{I}_N)\bm{X}\bm{\theta}+\lambda \bm{X}^\intercal \bm{1}_N
\end{align}
leads, via the first order conditions, to the standard solution
\begin{align}
\bm{\theta}^*&=\gamma^{-1}(\bm{X}^\intercal(\bm{X}\bm{\Sigma}_\beta\bm{X}^\intercal+\sigma_\epsilon^2\bm{I}_N)\bm{X})^{-1}(\bm{X}^\intercal\bm{X}\bar{\bm{\beta}}+c\bm{X}^\intercal\bm{1}_N), \nonumber \\
&=\gamma^{-1}(\bm{\Sigma}_\beta\bm{X}^\intercal\bm{X}+\sigma_\epsilon^2\bm{I}_K)^{-1}(\bar{\bm{\beta}}+c(\bm{X}^\intercal\bm{X})^{-1}\bm{X}^\intercal\bm{1}_N) ,\label{eq:theta}
\end{align}
where $c$ is a constant which ensures that the budget constraint (to the right of Equation \ref{eq:maxprog}) is fulfilled. Note that $\bm{X}^\intercal\bm{X}$ is nonsingular because the characteristics are not redundant and because $N>K+1$. For the sake of completeness, we derive the expressions for the first inverse matrice below.

From (\ref{eq:XX}) and the definition of $ \bar{\bm{x}}$, it holds that
\begin{equation}
\bm{X}^\intercal\bm{X}=N(\hat{\bm{\Sigma}}_X+ \bar{\bm{x}}\bar{\bm{x}}^\intercal)=N(\text{diag}(\bm{\sigma}^2_X)+ \bar{\bm{x}}\bar{\bm{x}}^\intercal),
\label{eq:XX2}
\end{equation}
so that by the Sherman-Morrison formula, and because $\bm{\Sigma}_\beta=\text{diag}(\bm{\sigma}^2_\beta)$,
\begin{align}
(\bm{\Sigma}_\beta\bm{X}^\intercal\bm{X}+\sigma_\epsilon^2\bm{I}_K)^{-1} &= \left( N\text{diag}(\bm{\sigma}^2_X)\text{diag}(\bm{\sigma}^2_\beta)+N\text{diag}(\bm{\sigma}^2_\beta) \bar{\bm{x}}\bar{\bm{x}}^\intercal+\sigma_\epsilon^2 \bm{I}_K\right)^{-1}  \nonumber\\
& = N^{-1}\text{diag}(\bm{\sigma}^2)^{-1}\left(\bm{I}_K- \frac{\text{diag}(\bm{\sigma}^2_\beta) \bar{\bm{x}}\bar{\bm{x}}^\intercal \text{diag}(\bm{\sigma}^2)^{-1}}{1+\bar{\bm{x}}^\intercal\text{diag}(\bm{\sigma}^2)^{-1}\text{diag}(\bm{\sigma}^2_\beta)\bar{\bm{x}}}\right), \label{eq:simp}
\end{align}
where $\text{diag}(\bm{\sigma}^2)=\text{diag}(\bm{\sigma}^2_X)\text{diag}(\bm{\sigma}^2_\beta)+N^{-1}\sigma_\epsilon^2 \bm{I}_K$ - this form being a strong echo of the structure in Equation (\ref{eq:model}). This proves the first point.

Now, let us make the extreme simplification, as in our empirical section, that $\bar{\bm{x}}^\intercal=[1 \quad \bm{0}_K^\intercal]$, so that firm characteristics have zero sample mean - apart for the first one. This is not uncommon in the recent literature as long as the data is preprocessed (see \cite{freyberger2020dissecting}, \cite{gu2020empirical}, \cite{kelly2019characteristics} and \cite{koijen2019demand}). Then, $\bm{X}^\intercal\bm{X}= N\text{diag}(\tilde{\bm{\sigma}}_{X}^2)$, where the modified vector of variances $\tilde{\bm{\sigma}}_{X}^2$ is simply $ \bm{\sigma}_{X}^2$ with the first element equal to one (instead of zero). The ratio in \eqref{eq:simp} vanishes (because $\sigma^2_{\beta,1}=0$ and $ \bar{\bm{x}}\bar{\bm{x}}^\intercal$ is empty apart for its unit first element) and   
\begin{align*}
\bm{\theta}_* =(N\gamma)^{-1}  \text{diag}(\bm{\sigma}^2)^{-1} (\bar{\bm{\beta}}+c \, \text{diag}(\tilde{\bm{\sigma}}_{X}^2)^{-1}\bar{\bm{x}}),
\end{align*}
from which the lower part of \eqref{eq:thetastar} is derived (the constant $c$ impacts only $\theta^{(0)}_*$). In this case, the budget constraint is only binding for the first asset. Indeed, because $\bm{1}_N \bm{X}=\bar{\bm{x}}=[1 \quad \bm{0}_K^\intercal]^\intercal$, the sum of weights linked to the non-constant factors is always equal to zero. Thus, $\theta^{(0)}_*$, which is linked to a constant column, must satisfy $\theta^{(0)}_*N=1$.


\section{Dirichlet distributions and portfolios in high dimensions}
\label{sec:a}
One major issue with the Dirichlet distribution is the computation of the scaling constant in high dimension. More precisely, let us consider the log of this quantity:
\begin{equation}
c=\log(B(\bm{a}))=\sum_{n=1}^N\log(\Gamma(a_n))-\log\left(\Gamma\left(\sum_{n=1}^Na_n\right)\right).
\label{eq:clog}
\end{equation}
When $N$ is large and the $a_n$ are free, both terms can reach levels that are beyond what machines can handle when exponentiated. Thus, we need to impose restrictions. We do it in two steps. First, we set some lower and upper bound on the $a_n$. In a second stage, we compute an upper value for $N$ that will depend on the range of the $a_n$. This second step is the most technical and we provide the details below. The third and last step is to determine a tradeoff. 

Before we continue, we recall that the $a_n$ dictate the allocation of the agent and that, on average, the position in asset $n$ is equal to $a_n\left(\sum_{n=1}^N a_n \right)^{-1}$. For obvious risk-management reasons, it is preferable to diversify portfolios. In our framework, we assume that there exists a constant $\delta >1$ such that:
\begin{equation}
\frac{1}{\delta N} \le a_n\left(\sum_{n=1}^N a_n \right)^{-1} \le \frac{\delta}{N}, \quad n=1,\dots, N.
\label{eq:div}
\end{equation}
In practice, the minimum value of $\delta$ will be driven by the data, and we discuss realistic ranges below. This constraint helps measure if the portfolio is on average well balanced and does not make extreme bets. The smaller $\delta$ is, the higher the diversification of the positions. Notably, under condition (\ref{eq:div}), the mean of the $a_n$, $m_a$, is such that 
\begin{equation}
\delta^{-1}a_+ \le  m_a\le \delta a_-, \text{ with } a_+=\underset{n}{\text{max}} \ a_n, \quad a_-=\underset{n}{\text{min}} \ a_n,.
\label{eq:div2}
\end{equation}


To further explicit our idea, we fix a maximum threshold $\kappa_{\text{max}}$ beyond which we consider that the two terms in Equation (\ref{eq:clog})  have numerically exploded. The two terms in Equation (\ref{eq:clog}) have very different asymptotics when the $a_n$ are large or small, hence the treatment is not symmetric. We start with problems when the $a_n$ are large. Given the strong convexity of the $\Gamma$ function, this is more impactful for the second term in (\ref{eq:clog}). We seek an upper bound $a_+$ for the $a_n$ such that this second term remains below $\kappa_{\text{max}}$, i.e.,
$$\log\left(\Gamma\left(\sum_{n=1}^N a_n\right)\right)\le \kappa_{\text{max}}. $$
Although the inverse of the $\Gamma$ function exists (at least when its argument is large enough, see \cite{uchiyama2012principal}), it is not straightforward to compute. We thus resort to Stirling's formula instead and seek to simplify
\begin{equation}
\log\left(\sqrt{2\pi\left(\sum_{n=1}^N a_n-1\right)}\left(\frac{\sum_{n=1}^N a_n-1}{e}\right)^{\sum_{n=1}^N a_n-1} \right) \le \kappa_{\text{max}}.
\label{eq:stir}
\end{equation}
If we omit the first negligible term inside the square root, this is equivalent to 
$$\left(\sum_{n=1}^N a_n-1\right)\left(\log\left(\sum_{n=1}^N a_n-1\right)-1\right)\le \kappa_{\text{max}}.$$
 As a first order (rough) approximation, we reduce this expression to 
 $$\left(\sum_{n=1}^N a_n\right)\log\left(\sum_{n=1}^N a_n\right)\le  \kappa_{\text{max}},$$ 
 that is, $\sum_{n=1}^N a_n\le \frac{\kappa_{\text{max}}}{W(\kappa_{\text{max}})}\sim  \frac{\kappa_{\text{max}}}{\log(\kappa_{\text{max}})}$, where $W$ is the principal branch of the Lambert function. Its asymptotic behavior for large arguments is indeed $W(z)\sim \log(z)$ (see Section 4.13 in \cite{olver2010nist}). Given (\ref{eq:div2}), a rule of thumb constraint that links $N$ and $a_+$ is 
\begin{equation}
a_+ \le \frac{\delta}{ N} \frac{ \kappa_{\text{max}}}{ \log( \kappa_{\text{max}})} \Leftrightarrow N \le \frac{\delta \kappa_{\text{max}}}{a_+ \log( \kappa_{\text{max}})},
\label{eq:condp}
\end{equation}
where we purposefully underline that the condition can also be viewed as a limit on the number of assets. 

The first term in \ref{eq:clog} relates to the lower bound on the $a_n$. Indeed, as $z$ shrinks to zero, $\Gamma(z)$ is equivalent to $z^{-1}$. Thus, if the $a_n$ are small and $a_-$ is sufficiently close to zero,
\begin{equation}
\sum_{n=1}^N\log(\Gamma(a_n))\le N\log\left(\frac{1}{a_-}\right) \le   \kappa_{\text{max}} \quad \Leftrightarrow \quad N \le  \kappa_{\text{max}}/\log(a_-^{-1}) \quad \Leftrightarrow \quad a_-\ge e^{- \kappa_{\text{max}}/N} . 
\label{eq:condm}
\end{equation}

Conditions (\ref{eq:condp}) and (\ref{eq:condm}) link the bounds of the $a_n$ to the number of assets $N$. In Figure \ref{fig:plot}, we illustrate them by assigning values to $ \kappa_{\text{max}}$, $\delta$ and $N$. Taking $\kappa_{\text{max}}=100$ allows $B(\bm{a})$ to range from $e^{-100}$ to $e^{100}$, which is a large magnitude. In the figure, as the number of assets increases, the range of the $a_n$ shrinks. 

For our empirical study, we pick $a_-=0.02$ and $a_+=1.6$. These values are optimal empirically because we obtain errors outside this range.

\begin{figure}[!h]
\begin{center}
\includegraphics[width=10cm]{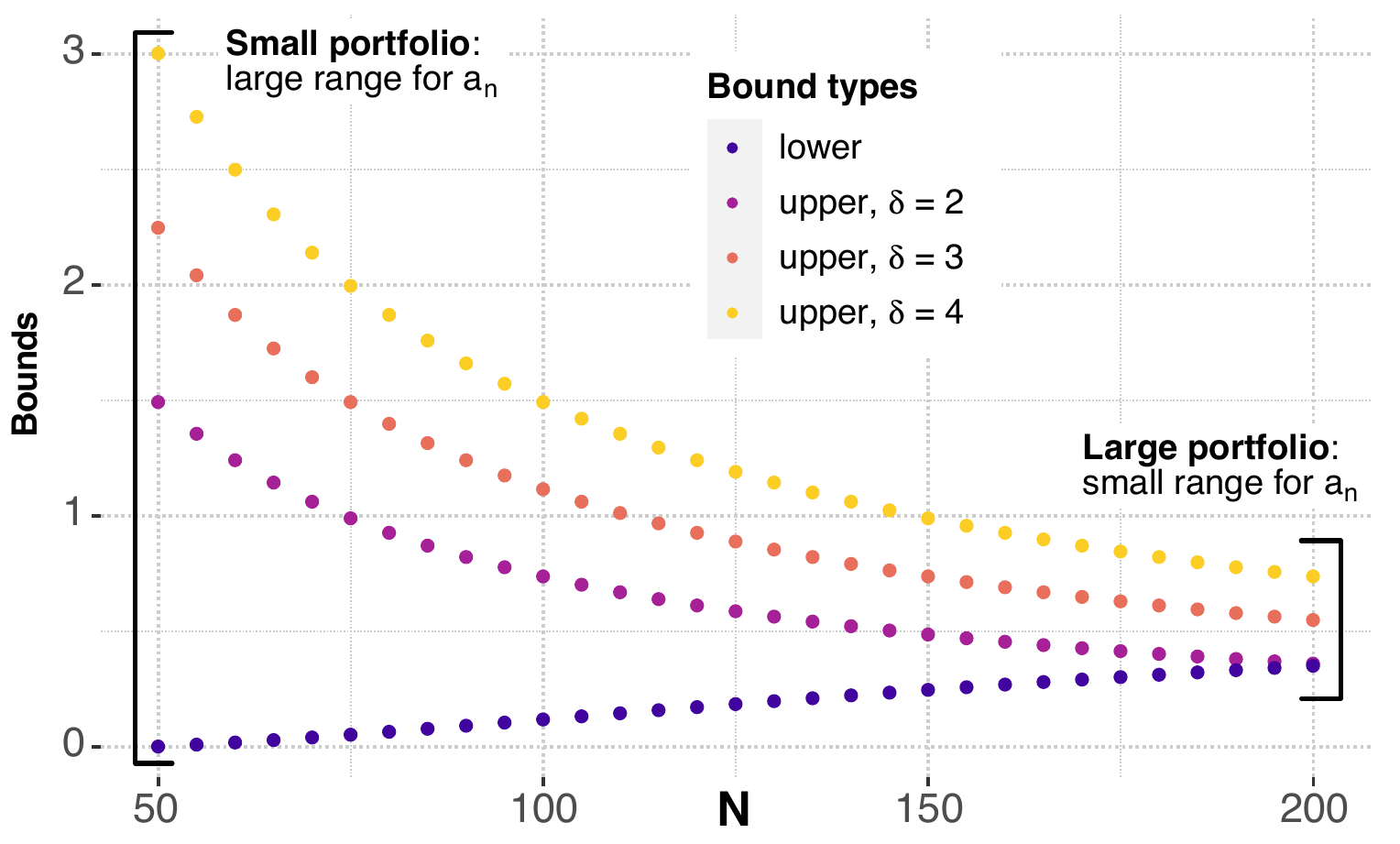}\vspace{-3mm}
\caption{\textbf{Intervals for the $a_n$}. \small We show the lower ($a_-$) and upper ($a_+$) bound for the $a_n$ when the number of assets is fixed to 50, 100 or 200 and $ \kappa_{\text{max}}=100$. They are derived from Equations (\ref{eq:condp}) and (\ref{eq:condm}). The black line is the $\Gamma$ function.}
\label{fig:plot}
\end{center}
\end{figure}


\bibliographystyle{chicago}
\bibliography{bib}

\begin{thebibliography}{}

\bibitem[\protect\citeauthoryear{Ammann, Coqueret, and Schade}{Ammann
  et~al.}{2016}]{ammann2016characteristics}
Ammann, M., G.~Coqueret, and J.-P. Schade (2016).
\newblock Characteristics-based portfolio choice with leverage constraints.
\newblock {\em Journal of Banking \& Finance\/}~{\em 70}, 23--37.

\bibitem[\protect\citeauthoryear{Arjovsky, Bottou, Gulrajani, and
  Lopez-Paz}{Arjovsky et~al.}{2019}]{arjovsky2019invariant}
Arjovsky, M., L.~Bottou, I.~Gulrajani, and D.~Lopez-Paz (2019).
\newblock Invariant risk minimization.
\newblock {\em arXiv Preprint\/}~(1907.02893).

\bibitem[\protect\citeauthoryear{Asness, Moskowitz, and Pedersen}{Asness
  et~al.}{2013}]{asness2013value}
Asness, C.~S., T.~J. Moskowitz, and L.~H. Pedersen (2013).
\newblock Value and momentum everywhere.
\newblock {\em Journal of Finance\/}~{\em 68\/}(3), 929--985.

\bibitem[\protect\citeauthoryear{Baker, Bradley, and Wurgler}{Baker
  et~al.}{2011}]{baker2011benchmarks}
Baker, M., B.~Bradley, and J.~Wurgler (2011).
\newblock Benchmarks as limits to arbitrage: Understanding the low-volatility
  anomaly.
\newblock {\em Financial Analysts Journal\/}~{\em 67\/}(1), 40--54.

\bibitem[\protect\citeauthoryear{Ball and Brown}{Ball and
  Brown}{1968}]{ball1968empirical}
Ball, R. and P.~Brown (1968).
\newblock An empirical evaluation of accounting income numbers.
\newblock {\em Journal of Accounting Research\/}, 159--178.

\bibitem[\protect\citeauthoryear{Ball and Brown}{Ball and
  Brown}{2019}]{ball2019ball}
Ball, R. and P.~Brown (2019).
\newblock Ball and {B}rown (1968) after fifty years.
\newblock {\em Pacific-Basin Finance Journal\/}~{\em 53}, 410--431.

\bibitem[\protect\citeauthoryear{Banz}{Banz}{1981}]{banz1981relationship}
Banz, R.~W. (1981).
\newblock The relationship between return and market value of common stocks.
\newblock {\em Journal of Financial Economics\/}~{\em 9\/}(1), 3--18.

\bibitem[\protect\citeauthoryear{Barbee~Jr, Mukherji, and Raines}{Barbee~Jr
  et~al.}{1996}]{barbee1996sales}
Barbee~Jr, W.~C., S.~Mukherji, and G.~A. Raines (1996).
\newblock Do sales--price and debt--equity explain stock returns better than
  book--market and firm size?
\newblock {\em Financial Analysts Journal\/}~{\em 52\/}(2), 56--60.

\bibitem[\protect\citeauthoryear{Basu}{Basu}{1983}]{basu1983relationship}
Basu, S. (1983).
\newblock The relationship between earnings' yield, market value and return for
  nyse common stocks: {F}urther evidence.
\newblock {\em Journal of Financial Economics\/}~{\em 12\/}(1), 129--156.

\bibitem[\protect\citeauthoryear{B{\"a}uerle and Rieder}{B{\"a}uerle and
  Rieder}{2011}]{bauerle2011markov}
B{\"a}uerle, N. and U.~Rieder (2011).
\newblock {\em Markov {Decision} {Processes} with {Applications} to {Finance}}.
\newblock Universitext. Berlin Heidelberg: Springer-Verlag.

\bibitem[\protect\citeauthoryear{Bhandari}{Bhandari}{1988}]{bhandari1988debt}
Bhandari, L.~C. (1988).
\newblock Debt/equity ratio and expected common stock returns: Empirical
  evidence.
\newblock {\em Journal of Finance\/}~{\em 43\/}(2), 507--528.

\bibitem[\protect\citeauthoryear{Brandt, Santa-Clara, and Valkanov}{Brandt
  et~al.}{2009}]{brandt2009parametric}
Brandt, M.~W., P.~Santa-Clara, and R.~Valkanov (2009).
\newblock Parametric portfolio policies: {E}xploiting characteristics in the
  cross-section of equity returns.
\newblock {\em Review of Financial Studies\/}~{\em 22\/}(9), 3411--3447.

\bibitem[\protect\citeauthoryear{Chaouki, Hardiman, Schmidt, S{\'e}ri{\'e}, and
  De~Lataillade}{Chaouki et~al.}{2020}]{chaouki2020deep}
Chaouki, A., S.~Hardiman, C.~Schmidt, E.~S{\'e}ri{\'e}, and J.~De~Lataillade
  (2020).
\newblock Deep deterministic portfolio optimization.
\newblock {\em Journal of Finance and Data Science\/}~{\em 6}, 16--30.

\bibitem[\protect\citeauthoryear{Chen, Pelger, and Zhu}{Chen
  et~al.}{2019}]{chen2019deep}
Chen, L., M.~Pelger, and J.~Zhu (2019).
\newblock Deep learning in asset pricing.
\newblock {\em SSRN Working Paper\/}~{\em 3350138}.

\bibitem[\protect\citeauthoryear{Chordia, Subrahmanyam, and Tong}{Chordia
  et~al.}{2014}]{chordia2014have}
Chordia, T., A.~Subrahmanyam, and Q.~Tong (2014).
\newblock Have capital market anomalies attenuated in the recent era of high
  liquidity and trading activity?
\newblock {\em Journal of Accounting and Economics\/}~{\em 58\/}(1), 41--58.

\bibitem[\protect\citeauthoryear{Chordia and Swaminathan}{Chordia and
  Swaminathan}{2000}]{chordia2000trading}
Chordia, T. and B.~Swaminathan (2000).
\newblock Trading volume and cross-autocorrelations in stock returns.
\newblock {\em Journal of Finance\/}~{\em 55\/}(2), 913--935.

\bibitem[\protect\citeauthoryear{Colas, Sigaud, and Oudeyer}{Colas
  et~al.}{2018}]{colas2018many}
Colas, C., O.~Sigaud, and P.-Y. Oudeyer (2018).
\newblock How many random seeds? {S}tatistical power analysis in deep
  reinforcement learning experiments.
\newblock {\em arXiv Preprint\/}~(1806.08295).

\bibitem[\protect\citeauthoryear{Cooper, Gulen, and Schill}{Cooper
  et~al.}{2008}]{cooper2008asset}
Cooper, M.~J., H.~Gulen, and M.~J. Schill (2008).
\newblock Asset growth and the cross-section of stock returns.
\newblock {\em Journal of Finance\/}~{\em 63\/}(4), 1609--1651.

\bibitem[\protect\citeauthoryear{Cover and Ordentlich}{Cover and
  Ordentlich}{1996}]{cover1996universal}
Cover, T.~M. and E.~Ordentlich (1996).
\newblock Universal portfolios with side information.
\newblock {\em IEEE Transactions on Information Theory\/}~{\em 42\/}(2),
  348--363.

\bibitem[\protect\citeauthoryear{Daniel and Titman}{Daniel and
  Titman}{1997}]{daniel1997evidence}
Daniel, K. and S.~Titman (1997).
\newblock Evidence on the characteristics of cross sectional variation in stock
  returns.
\newblock {\em Journal of Finance\/}~{\em 52\/}(1), 1--33.

\bibitem[\protect\citeauthoryear{DeMiguel, Garlappi, Nogales, and
  Uppal}{DeMiguel et~al.}{2009}]{demiguel2009generalized}
DeMiguel, V., L.~Garlappi, F.~J. Nogales, and R.~Uppal (2009).
\newblock A generalized approach to portfolio optimization: Improving
  performance by constraining portfolio norms.
\newblock {\em Management Science\/}~{\em 55\/}(5), 798--812.

\bibitem[\protect\citeauthoryear{DeMiguel, Garlappi, and Uppal}{DeMiguel
  et~al.}{2009}]{demiguel2009optimal}
DeMiguel, V., L.~Garlappi, and R.~Uppal (2009).
\newblock Optimal versus naive diversification: {H}ow inefficient is the 1/{N}
  portfolio strategy?
\newblock {\em Review of Financial Studies\/}~{\em 22\/}(5), 1915--1953.

\bibitem[\protect\citeauthoryear{Deng, Bao, Kong, Ren, and Dai}{Deng
  et~al.}{2016}]{deng2016deep}
Deng, Y., F.~Bao, Y.~Kong, Z.~Ren, and Q.~Dai (2016).
\newblock Deep direct reinforcement learning for financial signal
  representation and trading.
\newblock {\em IEEE Transactions on Neural Networks and Learning
  Systems\/}~{\em 28\/}(3), 653--664.

\bibitem[\protect\citeauthoryear{Easton}{Easton}{2004}]{easton2004pe}
Easton, P.~D. (2004).
\newblock Pe ratios, peg ratios, and estimating the implied expected rate of
  return on equity capital.
\newblock {\em Accounting Review\/}~{\em 79\/}(1), 73--95.

\bibitem[\protect\citeauthoryear{Fama and French}{Fama and
  French}{1992}]{fama1992cross}
Fama, E.~F. and K.~R. French (1992).
\newblock The cross-section of expected stock returns.
\newblock {\em Journal of Finance\/}~{\em 47\/}(2), 427--465.

\bibitem[\protect\citeauthoryear{Fama and French}{Fama and
  French}{2015}]{fama2015five}
Fama, E.~F. and K.~R. French (2015).
\newblock A five-factor asset pricing model.
\newblock {\em Journal of Financial Economics\/}~{\em 116\/}(1), 1--22.

\bibitem[\protect\citeauthoryear{Feng, Polson, and Xu}{Feng
  et~al.}{2019}]{feng2019deep}
Feng, G., N.~G. Polson, and J.~Xu (2019).
\newblock Deep learning in characteristics-sorted factor models.
\newblock {\em SSRN Working Paper\/}~{\em 3243683}.

\bibitem[\protect\citeauthoryear{Freyberger, Neuhierl, and Weber}{Freyberger
  et~al.}{2020}]{freyberger2020dissecting}
Freyberger, J., A.~Neuhierl, and M.~Weber (2020).
\newblock Dissecting characteristics nonparametrically.
\newblock {\em Review of Financial Studies\/}~{\em 33\/}(5), 2326--2377.

\bibitem[\protect\citeauthoryear{Goto and Xu}{Goto and
  Xu}{2015}]{goto2015improving}
Goto, S. and Y.~Xu (2015).
\newblock Improving mean variance optimization through sparse hedging
  restrictions.
\newblock {\em Journal of Financial and Quantitative Analysis\/}~{\em 50\/}(6),
  1415--1441.

\bibitem[\protect\citeauthoryear{Gu, Kelly, and Xiu}{Gu
  et~al.}{2020a}]{gu2020autoencoder}
Gu, S., B.~T. Kelly, and D.~Xiu (2020a).
\newblock Autoencoder asset pricing models.
\newblock {\em Journal of Econometrics\/}~{\em forthcoming}.

\bibitem[\protect\citeauthoryear{Gu, Kelly, and Xiu}{Gu
  et~al.}{2020b}]{gu2020empirical}
Gu, S., B.~T. Kelly, and D.~Xiu (2020b).
\newblock Empirical asset pricing via machine learning.
\newblock {\em Review of Financial Studies\/}~{\em 33\/}(5), 2223--2273.

\bibitem[\protect\citeauthoryear{Han, Yang, and Zhou}{Han
  et~al.}{2013}]{han2013new}
Han, Y., K.~Yang, and G.~Zhou (2013).
\newblock A new anomaly: The cross-sectional profitability of technical
  analysis.
\newblock {\em Journal of Financial and Quantitative Analysis\/}~{\em 48\/}(5),
  1433--1461.

\bibitem[\protect\citeauthoryear{Haugen and Baker}{Haugen and
  Baker}{1996}]{haugen1996commonality}
Haugen, R.~A. and N.~L. Baker (1996).
\newblock Commonality in the determinants of expected stock returns.
\newblock {\em Journal of Financial Economics\/}~{\em 41\/}(3), 401--439.

\bibitem[\protect\citeauthoryear{Henderson, Islam, Bachman, Pineau, Precup, and
  Meger}{Henderson et~al.}{2017}]{henderson2017deep}
Henderson, P., R.~Islam, P.~Bachman, J.~Pineau, D.~Precup, and D.~Meger (2017).
\newblock Deep reinforcement learning that matters.
\newblock {\em arXiv Preprint\/}~(1709.06560).

\bibitem[\protect\citeauthoryear{Hjalmarsson and Manchev}{Hjalmarsson and
  Manchev}{2012}]{hjalmarsson2012characteristic}
Hjalmarsson, E. and P.~Manchev (2012).
\newblock Characteristic-based mean-variance portfolio choice.
\newblock {\em Journal of Banking \& Finance\/}~{\em 36\/}(5), 1392--1401.

\bibitem[\protect\citeauthoryear{Hoi, Sahoo, Lu, and Zhao}{Hoi
  et~al.}{2018}]{hoi2018online}
Hoi, S.~C., D.~Sahoo, J.~Lu, and P.~Zhao (2018).
\newblock Online learning: A comprehensive survey.
\newblock {\em arXiv Preprint\/}~(1802.02871).

\bibitem[\protect\citeauthoryear{Ilmanen}{Ilmanen}{2011}]{ilmanen2011expected}
Ilmanen, A. (2011).
\newblock {\em Expected returns:{ }An investor's guide to harvesting market
  rewards}.
\newblock John Wiley \& Sons.

\bibitem[\protect\citeauthoryear{Islam, Henderson, Gomrokchi, and Precup}{Islam
  et~al.}{2017}]{islam2017reproducibility}
Islam, R., P.~Henderson, M.~Gomrokchi, and D.~Precup (2017).
\newblock Reproducibility of benchmarked deep reinforcement learning tasks for
  continuous control.
\newblock {\em arXiv Preprint\/}~(1708.04133).

\bibitem[\protect\citeauthoryear{Jacobs and M{\"u}ller}{Jacobs and
  M{\"u}ller}{2020}]{jacobs2020anomalies}
Jacobs, H. and S.~M{\"u}ller (2020).
\newblock Anomalies across the globe: Once public, no longer existent?
\newblock {\em Journal of Financial Economics\/}~{\em 135\/}(1), 213--230.

\bibitem[\protect\citeauthoryear{Jegadeesh and Titman}{Jegadeesh and
  Titman}{1993}]{jegadeesh1993returns}
Jegadeesh, N. and S.~Titman (1993).
\newblock Returns to buying winners and selling losers: Implications for stock
  market efficiency.
\newblock {\em Journal of Finance\/}~{\em 48\/}(1), 65--91.

\bibitem[\protect\citeauthoryear{Kelly, Pruitt, and Su}{Kelly
  et~al.}{2019}]{kelly2019characteristics}
Kelly, B.~T., S.~Pruitt, and Y.~Su (2019).
\newblock Characteristics are covariances: {A} unified model of risk and
  return.
\newblock {\em Journal of Financial Economics\/}~{\em 134\/}(3), 501--524.

\bibitem[\protect\citeauthoryear{Koijen and Yogo}{Koijen and
  Yogo}{2019}]{koijen2019demand}
Koijen, R.~S. and M.~Yogo (2019).
\newblock A demand system approach to asset pricing.
\newblock {\em Journal of Political Economy\/}~{\em 127\/}(4), 1475--1515.

\bibitem[\protect\citeauthoryear{Kong, Liaw, Mehta, and Sivakumar}{Kong
  et~al.}{2018}]{kong2018new}
Kong, W., C.~Liaw, A.~Mehta, and D.~Sivakumar (2018).
\newblock A new dog learns old tricks: {RL} finds classic optimization
  algorithms.
\newblock In {\em International Conference on Learning Representations}.

\bibitem[\protect\citeauthoryear{Korsos}{Korsos}{2013}]{korsos2013dirichlet}
Korsos, L.~F. (2013).
\newblock The {D}irichlet portfolio model: Uncovering the hidden composition of
  hedge fund investments.
\newblock {\em arXiv Preprint\/}~(1306.0938).

\bibitem[\protect\citeauthoryear{Le~Courtois and Xu}{Le~Courtois and
  Xu}{2020}]{le2019portfolio}
Le~Courtois, O. and X.~Xu (2020).
\newblock Efficient portfolios and extreme risks: {A}n extended dirichlet
  approach.
\newblock {\em SSRN Working Paper\/}~{\em 3376921}.

\bibitem[\protect\citeauthoryear{Lettau and Pelger}{Lettau and
  Pelger}{2020a}]{lettau2020estimating}
Lettau, M. and M.~Pelger (2020a).
\newblock Estimating latent asset-pricing factors.
\newblock {\em Journal of Econometrics\/}~{\em 218\/}(1), 1--31.

\bibitem[\protect\citeauthoryear{Lettau and Pelger}{Lettau and
  Pelger}{2020b}]{lettau2020factors}
Lettau, M. and M.~Pelger (2020b).
\newblock Factors that fit the time series and cross-section of stock returns.
\newblock {\em Review of Financial Studies\/}~{\em 33\/}(5), 2274--2325.

\bibitem[\protect\citeauthoryear{Li, Zheng, and Zheng}{Li
  et~al.}{2019}]{li2019deep}
Li, Y., W.~Zheng, and Z.~Zheng (2019).
\newblock Deep robust reinforcement learning for practical algorithmic trading.
\newblock {\em IEEE Access\/}~{\em 7}, 108014--108022.

\bibitem[\protect\citeauthoryear{Litzenberger and Ramaswamy}{Litzenberger and
  Ramaswamy}{1982}]{litzenberger1982effects}
Litzenberger, R.~H. and K.~Ramaswamy (1982).
\newblock The effects of dividends on common stock prices tax effects or
  information effects?
\newblock {\em Journal of Finance\/}~{\em 37\/}(2), 429--443.

\bibitem[\protect\citeauthoryear{Maillet, Tokpavi, and Vaucher}{Maillet
  et~al.}{2015}]{maillet2015global}
Maillet, B., S.~Tokpavi, and B.~Vaucher (2015).
\newblock Global minimum variance portfolio optimisation under some model risk:
  {A} robust regression-based approach.
\newblock {\em European Journal of Operational Research\/}~{\em 244\/}(1),
  289--299.

\bibitem[\protect\citeauthoryear{McLean and Pontiff}{McLean and
  Pontiff}{2016}]{mclean2016does}
McLean, R.~D. and J.~Pontiff (2016).
\newblock Does academic research destroy stock return predictability?
\newblock {\em Journal of Finance\/}~{\em 71\/}(1), 5--32.

\bibitem[\protect\citeauthoryear{Moody, Wu, Liao, and Saffell}{Moody
  et~al.}{1998}]{moody1998performance}
Moody, J., L.~Wu, Y.~Liao, and M.~Saffell (1998).
\newblock Performance functions and reinforcement learning for trading systems
  and portfolios.
\newblock {\em Journal of Forecasting\/}~{\em 17\/}(5-6), 441--470.

\bibitem[\protect\citeauthoryear{Naranjo, Nimalendran, and Ryngaert}{Naranjo
  et~al.}{1998}]{naranjo1998stock}
Naranjo, A., M.~Nimalendran, and M.~Ryngaert (1998).
\newblock Stock returns, dividend yields, and taxes.
\newblock {\em Journal of Finance\/}~{\em 53\/}(6), 2029--2057.

\bibitem[\protect\citeauthoryear{Neuneier}{Neuneier}{1996}]{neuneier1996optimal}
Neuneier, R. (1996).
\newblock Optimal asset allocation using adaptive dynamic programming.
\newblock In {\em Advances in Neural Information Processing Systems}, pp.\
  952--958.

\bibitem[\protect\citeauthoryear{Olver, Lozier, Boisvert, and Clark}{Olver
  et~al.}{2010}]{olver2010nist}
Olver, F.~W., D.~W. Lozier, R.~F. Boisvert, and C.~W. Clark (2010).
\newblock {\em NIST handbook of mathematical functions}.
\newblock Cambridge university press.

\bibitem[\protect\citeauthoryear{Pearl}{Pearl}{2009}]{pearl2009causality}
Pearl, J. (2009).
\newblock {\em Causality: Models, Reasoning and Inference. Second Edition},
  Volume~29.
\newblock Cambridge University Press.

\bibitem[\protect\citeauthoryear{Penasse}{Penasse}{2020}]{penasse2020understanding}
Penasse, J. (2020).
\newblock Understanding alpha decay.
\newblock {\em SSRN Working Paper\/}.

\bibitem[\protect\citeauthoryear{Pfister, B{\"u}hlmann, and Peters}{Pfister
  et~al.}{2019}]{pfister2019invariant}
Pfister, N., P.~B{\"u}hlmann, and J.~Peters (2019).
\newblock Invariant causal prediction for sequential data.
\newblock {\em Journal of the American Statistical Association\/}~{\em
  114\/}(527), 1264--1276.

\bibitem[\protect\citeauthoryear{Pflug, Pichler, and Wozabal}{Pflug
  et~al.}{2012}]{pflug2012}
Pflug, G.~C., A.~Pichler, and D.~Wozabal (2012).
\newblock The 1/{N} investment strategy is optimal under high model ambiguity.
\newblock {\em Journal of Banking \& Finance\/}~{\em 36\/}(2), 410--417.

\bibitem[\protect\citeauthoryear{Plyakha, Uppal, and Vilkov}{Plyakha
  et~al.}{2015}]{plyakha2015equal}
Plyakha, Y., R.~Uppal, and G.~Vilkov (2015).
\newblock Why do equal-weighted portfolios outperform value-weighted
  portfolios?
\newblock {\em SSRN Working Paper\/}~{\em 2724535}.

\bibitem[\protect\citeauthoryear{Sato}{Sato}{2019}]{sato2019model}
Sato, Y. (2019).
\newblock Model-free reinforcement learning for financial portfolios: A brief
  survey.
\newblock {\em arXiv Preprint\/}~(1904.04973).

\bibitem[\protect\citeauthoryear{Shanaev and Ghimire}{Shanaev and
  Ghimire}{2021}]{shanaev2021efficient}
Shanaev, S. and B.~Ghimire (2021).
\newblock Efficient scholars: academic attention and the disappearance of
  anomalies.
\newblock {\em European Journal of Finance\/}~{\em 27\/}(3), 278--304.

\bibitem[\protect\citeauthoryear{Smith and Timmermann}{Smith and
  Timmermann}{2021}]{smith2021have}
Smith, S. and A.~Timmermann (2021).
\newblock Have risk premia vanished?
\newblock {\em SSRN Working Paper\/}~{\em 3846221}.

\bibitem[\protect\citeauthoryear{Sosnovskiy}{Sosnovskiy}{2015}]{sosnovskiy2015financial}
Sosnovskiy, S. (2015).
\newblock On financial applications of the two-parameter {P}oisson-{D}irichlet
  distribution.
\newblock {\em arXiv Preprint\/}~(1501.01954).

\bibitem[\protect\citeauthoryear{Sutton and Barto}{Sutton and
  Barto}{2018}]{sutton2018reinforcement}
Sutton, R.~S. and A.~G. Barto (2018).
\newblock {\em Reinforcement learning: {A}n introduction (2nd Edition)}.
\newblock MIT press.

\bibitem[\protect\citeauthoryear{Uchiyama}{Uchiyama}{2012}]{uchiyama2012principal}
Uchiyama, M. (2012).
\newblock The principal inverse of the gamma function.
\newblock {\em Proceedings of the American Mathematical Society\/}~{\em
  140\/}(4), 1343--1348.

\bibitem[\protect\citeauthoryear{Wang and Zhou}{Wang and
  Zhou}{2020}]{wang2020continuous}
Wang, H. and X.~Y. Zhou (2020).
\newblock Continuous-time mean-variance portfolio selection: {A} reinforcement
  learning framework.
\newblock {\em Mathematical Finance\/}~{\em Forthcoming}.

\bibitem[\protect\citeauthoryear{Zhang, Zohren, and Roberts}{Zhang
  et~al.}{2019}]{zhang2019deep}
Zhang, Z., S.~Zohren, and S.~Roberts (2019).
\newblock Deep reinforcement learning for trading.
\newblock {\em arXiv Preprint\/}~(1911.10107).

\end{thebibliography}
 
\end{appendices}

\end{document}